\documentclass[a4paper]{article}
\usepackage{amsmath,amsfonts,amsthm,amssymb}
\usepackage[usenames,dvipsnames]{color}
\usepackage{tikz-cd}
\usetikzlibrary{matrix,arrows,calc}
\usepackage{calc}
\usepackage[colorlinks=true,linkcolor=blue,citecolor=red,urlcolor=Violet,bookmarksdepth=subsection,final]{hyperref}

\newtheorem{thm}{Theorem}[section]
\newtheorem{prop}[thm]{Proposition}
\newtheorem{lem}[thm]{Lemma}

\theoremstyle{definition}
\newtheorem{defn}{Definition}[section]
\theoremstyle{remark}
\newtheorem{rem}{Remark}[section]

%%%% BEGIN PREAMBLE
  \newcommand{\oo}{\infty}
  \newcommand{\del}{\partial}
  \newcommand{\sse}{\subseteq}
  \newcommand{\sso}{\subset}
  \newcommand{\Alg}{\mathfrak{Alg}}
  \newcommand{\BkgG}{\mathfrak{BkgG}}
  \newcommand{\Bndl}{\mathfrak{Bndl}}
\renewcommand{\d}{\mathrm{d}}
  \newcommand{\D}{\mathcal{D}}
  \newcommand{\Diff}{\mathrm{Diff}}
  \newcommand{\E}{\mathcal{E}}
  \newcommand{\eps}{\varepsilon}
  \newcommand{\g}{\mathbf{g}}
  \newcommand{\h}{\mathbf{h}}
\renewcommand{\H}{\mathbf{H}}
  \newcommand{\id}{\mathrm{id}}
  
  \newcommand{\I}{\mathcal{I}}
  \newcommand{\LCV}{\mathfrak{LCV}}
  \newcommand{\Lie}{\mathcal{L}}
  \newcommand{\Man}{\mathfrak{Man}}
\renewcommand{\P}{\mathcal{P}}
  \newcommand{\R}{\mathbb{R}}
  \newcommand{\Riem}{\mathbf{R}}
\renewcommand{\S}{\mathbf{S}}
  \newcommand{\Secs}{\Gamma}
  \newcommand{\W}{\mathcal{W}}
%%%% END   PREAMBLE

\title{Analytic Dependence is an Unnecessary Requirement in
Renormalization of Locally Covariant QFT}

\author{Igor Khavkine\footnote{igor.khavkine@unitn.it} {} and
	Valter Moretti\footnote{valter.moretti@unitn.it} \\
\small 
	Dipartimento di Matematica,
	Universit\`a di Trento  and INFN-TIFPA Trento,\\
\small
	via Sommarive 14 I-38123 Povo (Trento), Italy}

\begin{document}
\maketitle

\begin{abstract}
Finite renormalization freedom in locally covariant quantum field
theories on curved spacetime is known to be tightly constrained, under
certain standard hypotheses, to the same terms as in flat spacetime up
to finitely many curvature dependent terms. These hypotheses include, in
particular, locality, covariance, scaling, microlocal regularity and
continuous and analytic dependence on the metric and coupling
parameters. The analytic dependence hypothesis is somewhat unnatural,
because it requires that locally covariant observables (which are
simultaneously defined on all spacetimes) depend continuously on an
arbitrary metric, with the dependence strengthened to analytic on
analytic metrics. Moreover the fact that analytic metrics are globally
rigid makes the implementation of this requirement at the level of local
$*$-algebras of observables rather technically cumbersome. We show that
the conditions of locality, covariance, scaling and a naturally
strengthened microlocal spectral condition, are actually sufficient to
constrain the allowed finite renormalizations equally strongly, thus
eliminating both the continuity and the somewhat unnatural analyticity
hypotheses. The key step in the proof uses the Peetre--Slov\'ak theorem on the
characterization of (in general non-linear) differential operators by their
locality and regularity properties.
\end{abstract}

\section{Introduction}\label{sec:intro}
Perturbative ultraviolet renormalization of locally covariant quantum
field theories in (globally hyperbolic) curved spacetime is a well
established  topic of algebraic quantum field theory, especially for
scalar fields  \cite{bfk,bf, hw-wick, hw2}. It essentially  deals with
two classes of objects: \emph{Wick polynomials} and \emph{time ordered
Wick polynomials}. Exactly as in flat spacetime, these objects can be
considered as the building blocks of the whole renormalization
procedure. Smeared versions of Wick polynomials, of their time ordered
products and of their derivatives generate an algebra $\W(M,\g)$, for a
given spacetime $(M,\g)$, enlarged in a controlled way from the algebra
of products of smeared linear fields. This enlarged algebra then
includes physically fundamental observables, such as the stress-energy
tensor, which is necessary, for instance, to evaluate the energy
densities and fluxes of physical processes in curved spacetimes like
particle creation or Hawking radiation. The stress-energy tensor is also
needed to compute the back reaction of the quantum matter on the
background geometry.

This paper deals only with Wick polynomials, or more precisely just Wick
powers with all results easily extended to all Wick polynomials by
linearity, though the presented results could in principle be adapted to
deal also with their derivatives and their time ordered products. In
curved spacetime, Wick polynomials have to satisfy stronger locality and
covariance requirements than in flat spacetime. These requirements are
conveniently stated in the language of category theory introduced
in~\cite{bfv}, which we also use here. We should stress, though, that
the categorical language primarily serves to compress somewhat long
lists of hypotheses into concise statements. Existence of locally
covariant Wick polynomials and their time ordered products was
established in the seminal works of Hollands and Wald, respectively in
\cite{hw-wick} and  \cite{hw2}. It is well known that, in flat
spacetime, time ordered Wick polynomials are not uniquely defined. This
fact survives the passage to curved spacetime.  However, unlike in flat
spacetime, the absence of a preferred reference state means that Wick
polynomials are \emph{themselves} not uniquely defined. The ambiguities
involved with the definition of these two classes of fields are
physically interpreted as \emph{finite renormalizations} or
\emph{renormalization counterterms}, upon adopting the natural locally
covariant generalization of Epstein--Glaser approach to renormalization.

Exactly as in flat spacetime, each fixed type of (either Wick or
time-ordered) polynomial admits a finite-dimensional class of
independent counterterms. In curved spacetime, this class is much larger
than in Minkowski space, because of the possible dependence of
counterterms on \emph{background curvature}. While this class may no
longer be finite-dimensional, it is still \emph{finitely generated} or
\emph{quasi-finite-dimensional} in a precise sense, because the
counterterms may depend only polynomially on the curvature scalars up to
a certain dimension. This remarkable result, in the case of Wick
polynomials, presented in \cite[Thm.~5.1]{hw-wick} and summarized before
the statement of our Theorem~\ref{thm:main}, is arrived at by imposing
severe constraints on Wick polynomials in addition to those of locality
and covariance. These requirements are of various kinds. Some arise from
heuristic properties of quantum free fields, e.g.,\ Hermiticity and
commutation relations. Other requirements concern microlocal features
which, loosely speaking, extend to curved spacetime the structure of
Fourier transforms of the relevant Green functions on Minkowski space.
Another requirement regards the behaviour of Wick polynomials under a
rescaling of the metric and the parameters $m^2$ and $\xi$ of the free
theory, which describe the field's mass%
	\footnote{As in \cite{hw-wick},  we will always treat $m^2$ as a real
	number, which could be either positive, zero, or even negative, as
	ultraviolet renormalization is not sensitive to the sign of $m^2$.} %
and its coupling to the curvature. Finally there are the technically
delicate requirements of \emph{continuous} and \emph{analytic}
dependence on the metric. The two latter requirements play a crucial
role in~\cite{hw-wick} in their proof of the strong restrictions on
possible finite renormalization counterterms that was mentioned above.

The main difficulty with defining a suitable notion of the
\emph{continuous} dependence of an element of the algebra $\W(M,\g)$ on
the metric $\g$ (and the other parameters $m^2$ and $\xi$) is that,
continuously changing the metric $\g \mapsto \g'$, the whole algebra
$\W(M,\g)$ changes correspondingly and algebras $\W(M,\g)$ and
$\W(M,\g')$ associated with different metrics are not canonically
isomorphic.  Therefore even just stating the condition of continuous
dependence  on $\g$ requires some finesse. Locality can be turned
into an advantage in this context~\cite{hw-wick}. One may restrict
attention to metric variations in a spacetime region $O\subset M$ with
compact closure. If $\g$ agrees with $\g'$ outside $O$, essentially
exploiting a suitable version of the \emph{time slice axiom}, it is
possible to naturally identify an element of $\W(M,\g)$ with a
corresponding element in $\W(M,\g')$, when both are supported in $O$.
Hence, a local version of the continuity requirement can be imposed by
means of this canonical identification.

The requirement of \emph{analytic} dependence is even trickier to state.
It is argued in \cite{hw-wick} that analytic dependence is necessary
because the remaining requirements would not be able to rule out the
undesirable infinite family of non-polynomial in curvature counterterms
that were considered in~\cite{tf}. There is an important subtle
technical issue that arises in stating this analytic dependence condition.
The way followed for stating the continuity dependence requirement in a
local region $O$ faces here  an insurmountable obstruction: analytic
metrics are \emph{rigid} and if they coincide outside $O$ they must
coincide also in $O$. The ingenious but cumbersome strategy elaborated
in~\cite{hw-wick} makes use of a special class of Hadamard states over
the considered algebras. Since no local analytic variations of the
metric are possible, they consider a joint analytic family $\g^{(s)}$ of
the metric on $O$ and a corresponding analytic family of quasifree
Hadamard states $\omega^{(s)}$ on $\W(M, \g^{(s)})$. Then they require
that the distributions obtained by composing $\omega^{(s)}$ with the
local Wick polynomials (or their time ordered products) varies
analytically with $s$ in a suitable \emph{analytic and microlocal} sense
(see the discussion starting on p.~311 in \cite{hw-wick}).   

Continuous and analytic dependence on the parameters $m^2$ and $\xi$ is
there treated similarly, with both parameters taken to be functions on
$M$, rather than just constants, at intermediate stages of the
arguments.

The main result of this work establishes that the technically cumbersome
and somewhat unnatural analytic dependence requirement is in fact
\emph{not necessary} to achieve the classification
theorem~\cite[Thm.~5.1]{hw-wick}. Our classification result,
Theorem~\ref{thm:main}, is essentially identical, though it is slightly
more general because it allows smooth (rather than just analytic)
dependence on the dimensionless curvature coupling $\xi$. In our proof,
we make no use of the continuous and analytic dependence requirements
of~\cite{hw-wick}. Instead, leaving all the other requirements on Wick
powers the same, we appeal to a strengthened (more precisely,
parametrized) version of the microlocal spectrum condition, which we
believe is natural from both the physical and geometrical points of
view. This modification of the axioms on Wick powers is sufficient to
achieve the desired classification. Also, echoing the original (rather
implicit) arguments of~\cite{hw-wick}, we believe that it is very likely
that our version of the axioms are satisfied by the standard locally
covariant Hadamard parametrix prescription for explicitly constructing
Wick powers, which would mean that our classification results are
non-vacuous. However, we leave a detailed verification of this claim to
future work.

The key tool exploited in our proof is a theorem that characterizes
(generally non-linear) differential operators in terms of their locality
and regularity properties. This theorem, known as the
\emph{Peetre--Slov\'ak theorem} (or sometimes the \emph{non-linear Peetre
theorem}), in its most elementary  version
(Proposition~\ref{prp:peetre}; see also Appendix~\ref{app:peetre-param}
for a more general statement) states the following: any map $D$, that
associates smooth sections $\psi\colon M \to E$ of a bundle $E\to M$ to
smooth sections $D[\psi]\colon M \to F$ of another bundle $F\to M$ in
such a way that $D[\psi](x)$ depends only on the germ of $\psi$ at $x$
for any point $x\in M$, is necessarily a differential operator of
locally bounded order, smoothly depending on its arguments and their
derivatives.  The $C_k$ coefficients that characterize renormalization
counterterms of Wick polynomials precisely map sections of the bundle of
metrics and parameters, $m^2$ and $\xi$, to scalar valued distributions
on a spacetime $M$. The microlocal conditions ensure that these
distributions are actually smooth functions, while the locality
requirement implies that the $C_k$ satisfy the hypotheses of
Peetre--Slov\'ak's theorem and hence must be differential operators. A
combination of the scaling and covariance requirements then shows that
the differential order of the $C_k$ is globally bounded and that their
dependence on the metric, $m^2$ and the derivatives of all the
parameters is polynomial, with coefficients smoothly depending on $\xi$.
Further, covariance also dictates that the derivatives of the metric
necessarily group into curvature scalars.

Notably, the analytic dependence requirement is not exploited in
establishing the above result. Within the context of our proof, counter
terms like $m^k F(R/m^2)$, where $R$ is the Ricci scalar and $F$ is any
smooth function with strong decay near $0$ and $\pm\oo$, as considered
in~\cite{tf}, are excluded because they violate the microlocal
requirement: there exists a choice of a spacetime $(M,\g)$ and of a
scalar field $m^2$ such that the counterterm is not smooth and hence has
non-empty wavefront set and the Wick polynomials modified by adding
these counterterms do not satisfy the microlocal requirement (neither
the original, nor our strengthened version).

This paper is organized as follows. Our main theorem and its proof are
presented in Sect.~\ref{sec:wick}. The proof is somewhat lengthy, but
straight forward. It relies on some preliminary definitions and results
discussed in Sect.~\ref{sec:geom}. In particular our basic version of
Peetre--Slov\'ak's theorem is stated in Sect.~\ref{sec:peetre} after a
quick summary of elementary facts about jet bundles in Sect.
\ref{sec:coords}, where we also introduce some useful coordinate
systems. Sect.~\ref{sec:scal} is devoted to introducing our notion of
scaling which is more precise but substantially equivalent to the one
employed in~\cite{hw-wick}. However, we are careful to identify two
different kinds of scalings (physical and coordinate), which were mixed
in~\cite{hw-wick} by the introduction of Riemann normal coordinates. The
remainder of Sect.~\ref{sec:geom} deals with notions and results,
especially on $GL(n)$ representation theory, which are useful for
imposing the covariance requirement. After recalling the definition and
properties of Wick polynomials, and the more general notion of locally
covariant quantum field, with the appropriate categorical language, we
state and prove our main result in several steps in
Sect.~\ref{sec:wick}. Sect.~\ref{sec:discuss} concludes the paper
with a discussion of the results and directions for future work.
Appendix~\ref{app:peetre-param} illustrates a more general version of
Peetre--Slov\'ak's theorem, which applies to differential operators with
parameters.

\section{Geometry of scaling and general covariance}\label{sec:geom}
In this section we discuss some aspects of the geometry of the higher
derivatives (jets) of metric and scalar fields under the action of
scaling and diffeomorphism transformations. These properties will be
crucial in the characterization of finite renormalizations in locally
covariant quantum field theory in Sect.~\ref{sec:wick}.

\subsection{Coordinates on jets}\label{sec:coords0}
In differential geometry, \emph{jets}~\cite{olver,kms} are a geometric
way of collecting information about higher derivatives of functions (or
bundle sections) on manifolds, similar to what the tangent and cotangent
bundles do for first derivatives. Jets have an invariant geometric
meaning even on manifolds without a preferred metric or connection.
Further, a choice of a coordinate chart on a manifold induces a choice
of adapted coordinates on the corresponding jet bundle. One advantage of
working with jets is that certain calculations are very conveniently
performed in such an adapted local coordinate chart, yet also lead to
global and geometrically invariant conclusions. Below, we briefly
discuss some variations on adapted local coordinate systems on the space
of jets of bundle of metrics with some scalar fields.

Consider a smooth map $f\colon \R^m \to \R^n$, such that $f(0) = 0$. The
\textbf{germ} of $f$ at $0\in \R^m$ is the equivalence class of smooth maps
$f'\colon \R^m \to \R^n$ that agree with $f$ on some neighborhood of
$0\in \R^m$. The \textbf{$r$-jet} of $f$ at $0\in\R^m$ is the equivalence
class of all smooth maps $f'\colon \R^m\to \R^n$ that have the same
Taylor expansion at $0$ as $f$ to order $r$, denoted $j^r_0 f$.
Obviously, the germ contains more information than a jet of any order.
These definitions are clearly local, both on the domain and the target
of a smooth map, and are invariant under $C^\oo$-changes of coordinates.
Thus, these definitions easily translate to maps between smooth
finite-dimensional (smooth) manifolds $M$, $N$ replacing $0\in \R^m$ and
$0\in \R^n$, respectively, by generic points $x\in M$, $y\in N$. In
particular, with the said $M$ and $N$, we denote by $J^r(M,N)$ the set
of all distinct jets $j^r_x f$ of all smooth maps $f\colon M\to N$ for
all $x\in M$. Also, if $E\to N$ is a smooth bundle over $N$, then we
denote by $J^rE$ or, for emphasis, by $J^r(E\to N)\sso J^r(N,E)$ the
subset of jets of smooth sections $f\colon N \to E$. Both $J^r(M,N)$ and
$J^r(E\to N)$ can be given structures of smooth manifolds. A fiber
$(J^rE)_x$ at $x\in N$ is diffeomorphic to $E_x\times \R^{s_r}$, where
$E_x$ is the fiber of $E$ and $s_r$ counts the components of all
(symmetrized)  partial derivatives up to order $r$. In fact, by
projection onto the target of each jet, $J^rE \to E\to N$ is an iterated
smooth bundle. Given a section $\psi\colon N\to E$, we can collect the
$r$-jets of $\psi$ over each point of $N$ into a section $j^r\psi \colon
N\to J^rE$ called the \textbf{$r$-jet extension} of $\psi$.

Let $(x^a,v^i)$ be a local adapted coordinate chart on a bundle $F\to
M$, where $(x^a)$ serve as coordinates on a domain $U\sse M$ and
$(x^a,v^i)$ serve as trivializing coordinates on the fibers of the
domain $V\sse F$ over $U$. For example, if $T^p_qM\to M$ is the bundle
of $(p,q)$-tensors we can choose coordinates $(x^a,t^{a_1\cdots
a_p}_{b_1\cdots b_q})$ on the projection pre-image $V$ of $U$, such that
a section $\tau\colon T^p_qM\to M$ could locally be written as
\begin{equation}
	\tau(x) = t^{a_1\cdots a_p}_{b_1\cdots b_q}(\tau(x)) \,
		\d{x}^{b_1}\cdots \d{x}^{b_q} \,
		\frac{\del}{\del x^{a_1}} \cdots \frac{\del}{\del x^{a_p}} .
\end{equation}
The local chart $(x^a,v^i)$ then induces an adapted coordinate system
$(x^a,v^i_A)$ on the domain $V^r\sse J^rE$ that is the projection
pre-image of $V$ and is diffeomorphic to $V^r\cong V\times \R^{s_r}$,
with $s_r$ as discussed above. Each $A = a_1\cdots a_l$, standing in for
an unordered (equivalently, fully symmetrized) collection of base
manifold coordinate indices, is a \textbf{multi-index} of size $|A|=l$,
with the range $l=0,1,\ldots, r$. The defining property of these
coordinates is the identity
\begin{equation}
	v^i_A(j^r \psi(x)) = \del_A v^i(\psi(x))
	= \frac{\del}{\del x^{a_1}} \cdots \frac{\del}{\del x^{a_l}} v^i(\psi(x)) ,
\end{equation}
for any section $\psi\colon M\to F$. Given such a coordinate system, for
brevity, we use the notation $\del_a = \del/\del x^a$ and $\del^A_i =
\del/\del v^i_A$ for corresponding coordinate vector fields.

\subsection{Coordinates on jets of metric and scalar fields}\label{sec:coords}
If $M$ is a $n$-dimensional smooth manifold,  let us now fix the bundle
$BM\to M$ given by the bundle product of the bundle $\mathring{S}^2T_*M$
of (smooth) Lorentzian metric $(0,2)$-tensors over $M$ and the trivial
bundle $\R\times M\to M$ of (smooth) scalar fields over $M$.  Let us
denote the sections of this bundle by $(\g,\xi) \colon M\to BM$. There
are several  local coordinate systems on $J^rBM$, of various merits,
which we discuss below.

\emph{Covariant coordinates.}
Given a local coordinate chart $(x^a)$ on $U\sse M$, we define the
corresponding adapted coordinates $(x^a,g_{ab},z)$ on $V\sse BM$, which
in turn induce the \textbf{covariant coordinates}
\begin{equation}
	(x^a,g_{ab,A},z_A) \quad \text{on $V^r\sse J^rBM$.}
\end{equation}
Notice that only $n(n+1)/2$ components of $g_{ab}$ take part in the
above coordinates, in view of the symmetry of the metric.

\emph{Contravariant coordinates.}
Recall that a Lorentzian metric $\g\colon M\to \mathring{S}^2T^*M$ is
invertible and hence defines a section $\g^{-1} \colon M\to
\mathring{S}^2TM$. The components of the inverse metric can be extracted
by functions $g^{ab}$ defined on all of $V\sse BM$, such that
$g^{ab}(\g^{-1}(x)) = g_{ab}(\g(x))$, which induce the functions
$g^{ab}_A$ on $V^r$ that satisfy $g^{ab}_A(j^r\g(x)) = \del_A
g^{ab}(\g(x))$. Then, using the notation $g^{AB} = g^{a_1 b_1} \cdots
g^{a_l b_l}$, for $|A|=|B|=l$, we define the following functions
\begin{align}
	g &= \left|\det g_{ab}\right| , &
	g^{ab,A} &= g^{AB} g^{ab}_B , &
	z^A &= g^{AB} z_A ,
\end{align}
where,  by invertibility of Lorentzian metrics, the function $g^{-1}$ is
well defined on all of $V^r$, since $g = \left|\det g_{ab}\right|$ is
never zero.  These functions make up the alternative set of local
\textbf{contravariant coordinates}
\begin{equation}
	(x^a, g^{ab,A}, z^A) \quad \text{on $V^r\sse J^rBM$,}
\end{equation}
with the caveat that as the set of functions $(g,g^{ab})$ is only
functionally independent up to the identity $g^{-1} = \left|\det
g^{ab}\right|$, for instance,  \emph{one of the contravariant
coordinates $g^{ab}$ can be replaced by $g$}. These coordinates have
convenient scaling properties that will be exploited in
Sect.~\ref{sec:scal}.

\emph{Rescaled  contravariant coordinates.}
Another  coordinate system that we introduce on $V^r\sse J^rBM$, the
\textbf{rescaled contravariant coordinates}, is a suitable rescaling of
the previous one. Namely, we introduce various factors of $g^\alpha$ in
the latter coordinates ($n$ being the dimension of $M$):
\begin{equation}\label{eq:gGS-coordsADDED}
	(x^a,g,
		g^{-\frac{1}{n}} g_{ab} ,
		g^{\frac{1}{n} + \frac{1}{n}|A|} g^{ab,A} ,
		g^{\frac{s}{2n} + \frac{1}{n}|A|} z^{A}) ,
\end{equation}
\emph{where one  of the $n(n+1)/2$  functions $g^{-\frac{1}{n}} g_{ab}$
is omitted and replaced by $g$}. This is because the functions
$g^{-\frac{1}{n}} g_{ab}$ are not functionally independent because of
the relation $| \det g^{-\frac{1}{n}} g^{ab}|=1$.

\emph{Curvature coordinates.}
Recall also that, given a Lorentzian metric $\g$, we can always define
the corresponding covariant derivative, or Levi-Civita connection,
$\nabla$ and the Riemann tensor $\Riem$. Using well known formulas, we
can define functions $\Gamma^a_{bc}$ and $\bar{R}_{abcd}$ on $V^r\sse
J^rBM$ that correspond to the coordinate components of the Christoffel
symbols and the fully covariant Riemann tensor. Define also the fully
contravariant tensor $\S$ with components
\begin{equation}
	\bar{S}^{abcd} = g^{aa'} g^{bb'} \bar{R}_{a'}{}^{(c}{}_{b'}{}^{d)}
		= g^{ab,cd} - g^{b(c,d)a} - g^{a(d,c)b} + g^{cd,ab} + \text{l.o.t} ,
\end{equation}
where l.o.t\ stands for terms that involve only coordinates of lower
derivative order. Finally, let $\Gamma^a_{bc,A}$ denote the components
of the coordinate $\del_A$ derivatives of $\Gamma^a_{bc}$, let
$\bar{S}^{abcd,A}$ denote the components of the symmetrized
contravariant $\nabla^{A} = \nabla^{(a_1} \cdots \nabla^{a_l)}$
derivatives of $\S$, and let $\bar{z}^A$ the components of the
symmetrized contravariant $\nabla^{A}$ derivatives of the scalar field
$\xi$. It is well-known~\cite{Iyer-Wald94,ta-sym}%
	\footnote{On page 490 of~\cite{ta-sym}, the unpublished
	report~\cite{at-spinors} is used as the main reference for the
	properties of these coordinates. In~\cite{torre-spinors} it is
	explained further that their origin goes back to at
	least~\cite{penrose-spinors} and even earlier to~\cite{thomas}.} %
that
\begin{equation}
	(x^a,g_{ab},\Gamma^a_{(bc,A)},\bar{S}^{ab(cd,A)},\bar{z}^{A})
\end{equation}
also defines a coordinate system on $V^r\sse J^rBM$, which we shall call
\textbf{curvature coordinates}.  Note that the barred coordinate
functions correspond to components of fully contravariant tensors. These
coordinate have convenient transformation properties under
diffeomorphisms that will be exploited in Sect.~\ref{sec:diff}.

\emph{Rescaled curvature coordinates.} The final coordinate system that we introduce on $V^r\sse J^rBM$, the \textbf{rescaled curvature coordinates}, merges some of the
properties of the systems $(x^a,
		g^{-\frac{1}{n}} g_{ab} ,
		g^{\frac{1}{n} + \frac{1}{n}|A|} g^{ab,A} ,
		g^{\frac{s}{2n} + \frac{1}{n}|A|} z^{A}) $ and $(x^a, g_{ab},
\Gamma^a_{(bc,A)}, \bar{S}^{ab(cd,A)}, \bar{z}^A)$. Namely, we again introduce
various factors of $g$ in the curvature  coordinates:
\begin{equation}\label{eq:gGS-coords}
	(x^a, g, 
	g^{-\frac{1}{n}} g_{ab}, \Gamma^a_{(bc,A)},
	g^{\frac{3}{n} + \frac{1}{n}|A|} \bar{S}^{ab(cd,A)},
	g^{\frac{s}{2n}+\frac{1}{n}|A|} \bar{z}^A) ,
\end{equation}
\emph{where, again,  one  of the $n(n+1)/2$  functions $g^{-\frac{1}{n}} g_{ab}$ is omitted and replaced by  $g$}.

\subsection{Locality and the Peetre--Slov\'ak theorem}\label{sec:peetre}
It is well known that \emph{linear} differential operators have the
property that they are \emph{support non-increasing}. The powerful,
original result of Peetre~\cite{Peetre1959,Peetre1960} shows that this
property is sufficient to characterize them in the context of $C^\oo$
differential geometry. A similar characterization holds even for {\em
non-linear} differential operators~\cite{slovak,kms,ns}, a version of which
we present below.

Before proceeding, we need a robust geometric notion of what a {\em
differential operator} is. Often, differential operators are defined by
their expressions in coordinate charts. Any such definition is
necessarily coordinate dependent and must be checked to agree on chart
overlaps. On the other hand, we can give a \emph{coordinate independent
and global} definition of differential operators using jets and the
$r$-jet extension map $j^r$ defined earlier in Sect.~\ref{sec:coords}.

Given a smooth bundle $E\to N$, recall that the $r$-jet extension acts
as a map $j^r\colon \Secs(E\to N) \to \Secs(J^rE\to N)$, where  as usual
$\Secs(G\to L)$ denotes the space of \emph{smooth sections} of the bundle
$G\to L$.  For our purposes, the map $j^r$ will serve as a universal
differential operator of order $r$ in the following sense.
\begin{defn}
Let  $E\to N$ and $F\to M$ be smooth bundles, and consider a map
$D\colon \Secs(E) \to \Secs(F)$.
\begin{itemize}
\item[\textbf{(a)}]  $D$ is a  \textbf{differential operator of globally
bounded order} if there exists an integer $r\ge 0$, the \textbf{order},
and a smooth function $d\colon J^r(E\to N) \to F$, considered as a
bundle map (i.e.,\ fiber preserving), such that for any section $\psi
\in \Secs(E)$ we have an associated section of the form $D[\psi] =
d\circ j^r\psi \in \Secs(F)$.
\item[\textbf{(b)}] $D$ is a \textbf{differential operator of locally
bounded order} if it satisfies a similar condition locally. Namely, for
any point of $y\in N$ and section $\phi\in \Secs(E)$, there exists a
neighborhood $U\sse N$ of $y$ with compact closure, together with an
integer $r\ge 0$, an open neighborhood $V^r \sse J^r(E\to N)$ of
$j^r\phi(U)$ projecting onto $U$, and a smooth function $d\colon V^r \to
F$ that respects the projections $V^r \to U$ and $F\to M$, such that
$D[\psi](x) = d\circ j^r\psi(x)$ for any $x\in U$ and any $\psi\in
\Secs(E)$ with $j^r\psi(U) \sso V^r$.
\end{itemize}
\end{defn}

If $E\to M$ and $F\to M$ are vector bundles over the same base manifold
$M$ and $D\colon \Secs(E)\to \Secs(F)$ is a \emph{linear} map such that
$\phi(x) = D[\psi](x)$ depends only on the germ of $\psi$ at $x\in M$
then it is clear that $D$ will be support non-increasing. Elementary
reasoning shows that a linear, support non-increasing map will also only
depend on germs.  So, another way to rephrase the Peetre theorem for
linear differential operators is as follows, where the dependence on the
germ replaces the support non-increasing property.

\begin{prop}[Linear Peetre's Theorem~{\cite{Peetre1959,Peetre1960}}]\label{prp:peetre0}
Let  $E\to M$ and $F\to M$ be vector bundles and $D\colon \Secs(E)\to
\Secs(F)$ a linear map such that $\phi(x) = D[\psi](x)$ depends only on
the germ of $\psi$ at $x\in M$. Then $D$ is a linear differential
operator of locally bounded order (with smooth coefficients in view of
the above definition).
\end{prop}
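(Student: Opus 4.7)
The plan proceeds in three stages: first derive from germ dependence the classical support-non-increasing property; then establish a \textbf{local order bound}, which is the crux of Peetre's original argument; and finally assemble the bundle map $d$ on jets and verify its smoothness.

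First I would observe that germ dependence combined with linearity immediately yields that $D$ is support non-increasing: if $\psi \in \Gamma(E)$ vanishes on an open set $U\sse M$, then for every $x\in U$ the germ of $\psi$ at $x$ equals the germ of the zero section, and linearity gives $D[\psi](x) = D[0](x) = 0$. Equivalently, if two sections agree on an open set, then so do their images under $D$.

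The heart of the proof is the local order bound: for every $x_0 \in M$ there exist an open neighborhood $U$ of $x_0$ and an integer $r\ge 0$ such that $j^r\psi|_U = 0$ implies $D[\psi]|_U = 0$. I would argue by contradiction, following Peetre's bump-function construction. Shrinking to a trivializing chart and assuming the conclusion fails at $x_0$, one can produce a sequence $\psi_k \in \Gamma(E)$, each supported in a small neighborhood $U_k$ of $x_0$, with $j^k\psi_k$ vanishing on a neighborhood of $x_0$ yet with $D[\psi_k](y_k) \neq 0$ for some $y_k \in U_k$. By passing to a subsequence one arranges the $U_k$ to be pairwise disjoint and shrinking toward $x_0$. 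Rescaling by positive scalars $c_k$ that decay fast enough in all $C^m$ seminorms, one forms the locally finite sum $\psi = \sum_k c_k\psi_k$, which is a genuine smooth section because the supports are disjoint and shrink. By the support non-increasing property applied on each $U_k$, one has $D[\psi](y_k) = c_k\, D[\psi_k](y_k)$; choosing the $c_k$ so that these values do not converge to zero as $y_k \to x_0$ contradicts continuity of the smooth section $D[\psi]$ at $x_0$. This bump construction is the main obstacle: the $\psi_k$ must simultaneously be made to vanish to increasingly high order at $x_0$ (via Taylor truncation in the trivialization), have shrinking pairwise disjoint supports, and be rescaled so that the sum is smooth while the images still violate continuity.

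Once the local order bound is in hand on $U$, shrunk so that $E|_U$ and $F|_U$ trivialize, I would define the bundle map $d\colon J^r(E\to M)|_U \to F|_U$ by picking, for each jet $\jmath \in J^r(E\to M)_x$, any extension $\psi \in \Gamma(E)$ with $j^r\psi(x) = \jmath$ (for instance a polynomial section in the trivialization cut off by a bump function), and setting $d(\jmath) := D[\psi](x)$. Well-definedness follows by applying the local order bound to the difference of any two such extensions. Linearity of $D$ together with germ dependence makes $d$ fiberwise linear, so it corresponds to a section of $\mathrm{Hom}(J^rE, F)$ over $U$. Smoothness is verified by parametrizing the polynomial extensions smoothly by the coefficients of $\jmath$: the family $\psi_\jmath$ can be chosen to depend smoothly on $\jmath$, and since $D$ sends smooth sections to smooth sections, the assignment $\jmath \mapsto D[\psi_\jmath](x)$ reads off the matrix entries of $d$ as smooth functions of the jet coordinates. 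Covering $M$ by such neighborhoods yields the differential operator of locally bounded order asserted in the proposition.
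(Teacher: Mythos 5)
The paper does not prove this proposition at all: it is quoted as a classical result with a citation to Peetre's original papers, so there is no in-text argument to compare against. Your Stage~1 is correct and trivial, and Stage~3 is essentially fine once one notes that, by fiberwise linearity, the matrix entries of $d$ over a trivializing chart are read off by applying $D$ to a \emph{finite} basis of cut-off polynomial sections (your appeal to ``$D$ sends smooth sections to smooth sections'' applied to a smoothly parametrized family is the nonlinear-case argument and would require the regularity hypothesis, which is not assumed here; linearity is what saves you). But Stage~2, which you correctly identify as the crux, has two genuine problems. First, the order bound is misstated: ``$j^r\psi|_U=0$ implies $D[\psi]|_U=0$'' is trivially true (already for $r=0$, since the $0$-jet vanishing on an open set means $\psi$ vanishes there, and support non-increase does the rest) and is \emph{not} what Stage~3 uses --- well-definedness of $d$ needs the pointwise statement that $j^r_x\psi=0$ at a single $x\in U$ forces $D[\psi](x)=0$, and that is the statement your contradiction argument must establish.

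Second, and more seriously, the contradiction does not close as described. Since $D$ is assumed only linear and germ-local --- with no continuity in any topology --- there is no relation between $\|D[\psi_k](y_k)\|$ and the $C^m$-seminorms of $\psi_k$. Rescaling by $c_k$ multiplies both quantities by the same factor, so the two requirements you impose, namely that $c_k\psi_k$ decay fast enough in every $C^m$-seminorm for $\sum_k c_k\psi_k$ to be smooth at the accumulation point $x_0$, and that $c_k D[\psi_k](y_k)$ stay away from zero, can be mutually incompatible; nothing in the hypotheses rules that out. Moreover, even if one could keep $D[\psi](y_k)$ bounded away from zero, this contradicts continuity of the smooth section $D[\psi]$ only if one separately shows $D[\psi](x_0)=0$, which does not follow from support non-increase because the germ of $\psi$ at $x_0$ is not the zero germ ($x_0$ is an accumulation point of the supports). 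This is precisely the gap in Peetre's 1959 paper that forced the 1960 correction. The standard repair is the interlacing-annuli device: one first proves that $j^\infty_{x}\psi=0$ implies $D[\psi](x)=0$ by splitting $\psi$ into two pieces supported on alternating families of shrinking annuli around $x$, observing that each piece vanishes identically near a sequence of points converging to $x$ (so its image vanishes there and hence, by continuity of the \emph{output}, at $x$), and then runs the finite-order contradiction against this lemma rather than against a lower bound on $|D[\psi](y_k)|$. Without this (or an equivalent device, such as the two-section Whitney-extension argument of the modern treatments), your Stage~2 does not go through.
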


\noindent In other words, despite the fact
that germs potentially contain much more information that jets, such a
linear map that depends only on germs in fact sees only jets.

Phrased as above, in terms of germs, the hypotheses of Peetre's theorem
are immediately adaptable to the case when the map $D$ is non-linear and
acts on sections of (non-vector) smooth bundles. We will only require an
additional regularity%
	\footnote{In an earlier version of the manuscript, we mistakenly
	omitted the regularity hypothesis from the statement of the
	Peetre--Slov\'ak theorem. We thank the anonymous referee for bringing
	that to our attention.} %
hypothesis.
\begin{defn}\label{def:reg}
Given smooth bundles $E\to N$ and $F\to M$, a \textbf{smooth
($k$-dimensional) family} of sections of $E\to N$ is a smooth section of
the pullback bundle $\pi^* E\to \R^k\times N$
(cf.~Eq.~\eqref{eq:pbb-def}), where $\pi\colon \R^k\times N \to N$
is the projection onto the second factor, and similarly for families of
sections $F\to M$. A map $D\colon \Secs(E\to N) \to \Secs(F\to M)$ is
\textbf{regular} if it maps smooth families of sections to smooth
families of sections. A smooth family $\sigma \colon \R^k\times N \to
\pi^* E$ is called a \textbf{compactly supported variation} if there
exists a compact subset $O\sso N$ such that $\sigma$ is constant along
the $\R^k$ factor on the complement $\R^k\times N \setminus
\pi^{-1}(O)$. The map $D$ is \textbf{weakly regular} if it maps smooth
compactly supported variations to smooth compactly supported variations.
\end{defn}
\begin{prop}[Peetre--Slov\'ak's Theorem]\label{prp:peetre}
Let $E\to M$ and $F\to M$ be smooth bundles and $D\colon \Secs(E) \to
\Secs(F)$ a map such that $\phi(x) = D[\psi](x)$ depends only on the
germ of $\psi$ at $x\in M$. If in addition $D$ is weakly regular, then
it is a (non-linear) differential operator of locally bounded order.
\end{prop}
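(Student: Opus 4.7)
The plan is to establish two claims that together imply the statement: (i) for every $y\in M$ and every base section $\phi\in \Secs(E)$, there exist a neighborhood $U\ni y$ and an integer $r\geq 0$ such that $D[\psi](x)$ depends only on $j^r_x\psi$ for every $x\in U$ and every $\psi$ whose $r$-jet extension lies in a suitable neighborhood of $j^r\phi(U)$; and (ii) the resulting bundle map $d\colon V^r\to F$ is smooth. Working in local trivializations of $E$ and $F$ over a coordinate chart near $y$, sections become smooth $\R^N$-valued maps, so germs and jets acquire their standard analytic meaning. The germ-dependence hypothesis then reads $D[\psi_1](x)=D[\psi_2](x)$ whenever $\psi_1=\psi_2$ on a neighborhood of $x$, making $D$ local in the classical sense and permitting free modification of sections outside arbitrarily small neighborhoods of the evaluation point without altering the value.

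For claim (i), I would argue by contradiction, following the strategy of Slov\'ak. Suppose a base section $\phi$ and a point $y$ are given for which no $(U,r)$ works. Then one extracts sequences $x_k\to y$, integers $r_k\to \oo$, and pairs $\psi_k,\psi'_k$ that agree with $\phi$ outside balls $B_k$ shrinking to $\{y\}$, satisfy the jet identity $j^{r_k}_{x_k}\psi_k=j^{r_k}_{x_k}\psi'_k$, yet witness $D[\psi_k](x_k)\neq D[\psi'_k](x_k)$. Invoking germ-dependence, I may shrink the supports of $\psi_k-\phi$ and $\psi'_k-\phi$ inside $B_k$ without changing the evaluated values. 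The crux is to weave this sequence into a single smooth compactly supported variation $\sigma\colon \R\times M\to \pi^* E$, in the sense of Definition~\ref{def:reg}, such that $\sigma(s_k,\cdot)=\psi_k$ and $\sigma(s'_k,\cdot)=\psi'_k$ for parameter sequences $s_k,s'_k\to s_\oo$, while $\sigma(s,\cdot)=\phi$ for $s$ outside a compact interval. Joint smoothness of $\sigma$ at the accumulation point $(s_\oo,y)$ is forced by the combination of shrinking spatial support and high-order jet coincidence: on $B_k$ one can dominate every partial derivative of $\sigma-\phi$ by a power of $|x-x_k|^{r_k}$ with $r_k\to\oo$, so that all derivatives vanish in the limit. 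But then $(s,x)\mapsto D[\sigma(s,\cdot)](x)$ takes two distinct cluster values at $(s_\oo,y)$, producing a discontinuity that contradicts the weak regularity of $D$. The careful bump-function engineering needed to realize such a $\sigma$ is the main technical obstacle; without weak regularity this step would collapse, as noted in the footnote.

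Once (i) holds, I define the bundle map $d\colon V^r\to F$ over $U$ by $d(j^r_x\psi)=D[\psi](x)$; this is well defined and fiber-preserving by (i). To see smoothness, given any smooth family $\gamma\colon \R^k\to V^r$ based at $j^r_y\phi$, a Borel-type construction realizes $\gamma$ as the pointwise $r$-jet of a smooth compactly supported variation $\tilde\gamma\colon \R^k\times U\to E$ with $j^r_{x(s)}\tilde\gamma(s,\cdot)=\gamma(s)$ for a smooth choice of basepoint $x(s)$. Then $d\circ\gamma$ equals $s\mapsto D[\tilde\gamma(s,\cdot)](x(s))$, which is smooth by weak regularity of $D$; detecting smoothness of a bundle map through all such families yields smoothness of $d$. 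Finally, covering $M$ by a $\sigma$-compact exhaustion and varying the base section $\phi$, these locally defined bundle maps assemble into the data required by the definition of a differential operator of locally bounded order.
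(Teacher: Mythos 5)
Your overall strategy is the right one --- it is essentially the strategy of the proof the paper relies on. Note, though, that the paper does not reprove this proposition at all: it defers to the self-contained proof in~\cite{ns} (see also~\cite{slovak} and~\cite[\textsection~19]{kms}) and, in Appendix~\ref{app:peetre-param}, merely patches the two places where the stronger \emph{regularity} hypothesis is invoked so that \emph{weak regularity} (Definition~\ref{def:reg}) suffices, the key ingredient being Bierstone's strengthening of the Whitney extension theorem to closed sets~\cite{bierstone}. So your proposal is more ambitious than what the paper actually does, and in outline (contradiction, weaving the violating sections into a one-parameter compactly supported variation, then a separate family argument for the smoothness of $d$) it follows the cited proof.

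There is, however, a genuine gap in your contradiction step. From $D[\psi_k](x_k)\neq D[\psi'_k](x_k)$ you conclude that $(s,x)\mapsto D[\sigma(s,\cdot)](x)$ ``takes two distinct cluster values at $(s_\infty,y)$.'' Nothing forces the two value sequences to have distinct limits: the differences $\eps_k=\left|D[\psi_k](x_k)-D[\psi'_k](x_k)\right|$ are positive but may tend to zero, in which case both sequences cluster at the same point and no discontinuity arises even though the order is genuinely unbounded. The actual proofs are quantitative precisely here: one chooses the parameter values adaptively so that $|s_k-s'_k|\le \eps_k/k$ while still keeping $\sigma$ jointly smooth, and then the local Lipschitz bound on the smooth output $g(s,x)=D[\sigma(s,\cdot)](x)$ yields $\eps_k=|g(s_k,x_k)-g(s'_k,x_k)|\le C\,|s_k-s'_k|\le C\,\eps_k/k$, which is the contradiction. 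Relatedly, your claimed bound of every derivative of $\sigma-\phi$ by a power of $|x-x_k|^{r_k}$ on $B_k$ does not follow from the hypothesis $j^{r_k}_{x_k}\psi_k=j^{r_k}_{x_k}\psi'_k$, which says nothing about $\psi_k-\phi$ being flat at $x_k$; securing joint smoothness of $\sigma$ requires both exploiting the negation of ``locally bounded order'' (which allows the violating jets to be taken in shrinking neighborhoods of $j^r\phi(U)$) and a Whitney-type extension with quantitative compatibility estimates on $\{x_k\}\cup\{y\}$ --- and, to obtain a compactly supported \emph{variation} rather than a mere smooth family, the closed-set version of Whitney's theorem, which is exactly the point the paper isolates in Appendix~\ref{app:peetre-param}. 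Your second claim, the smoothness of $d$ via families of jets realized by compactly supported variations, matches the paper's remark (a) in the appendix and is fine.
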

This proposition will be sufficient for our purposes. However, in the
standard literature~\cite{slovak}, \cite[\textsection~19]{kms}, this
result is stated in much greater generality. In fact, that level of
generality can obscure the meaning and significance of the theorem.
Though, it should be noted that a simplified statement of the theorem,
essentially identical to the one above, together with a straight-forward
self-contained proof recently appeared in~\cite{ns}. Note that these
standard references usually require \emph{regularity} instead of
\emph{weak regularity}, but a slight modification of the proof given
in~\cite{ns} makes it clear that only weak regularity is necessary. This
point is discussed in Appendix~\ref{app:peetre-param}.  Also in
Appendix~\ref{app:peetre-param}, we briefly introduce the language
needed to state a more general version,
Proposition~\ref{prp:peetre-param}. The above simpler version becomes a
special case of Proposition~\ref{prp:peetre-param} once it is trivially
checked that $D$ is $\id$-local, where $\id\colon M\cong M$ is the
identity map. The more general result given in
Appendix~\ref{app:peetre-param} serves two purposes. The first is that
it introduces the language in which the Peetre--Slov\'ak theorem and its
proof appear in the standard literature~\cite[\textsection 19]{kms},
which also refers to it as the \emph{non-linear Peetre theorem}.
Second, it allows the treatment of differential operators with
parameters. For instance, later in Sect.~\ref{sec:wick}, we treat the
mass $m^2$ of a scalar field and its coupling to curvature $\xi$ as
space-time dependent background fields. If they were treated as
necessarily spacetime-constant parameters, we would need to substitute
Proposition~\ref{prp:peetre-param} for the simpler
Proposition~\ref{prp:peetre} in the proof of our main
Theorem~\ref{thm:main}.

\subsection{Physical scaling}\label{sec:scal}
Referring to the already introduced bundle $BM \to M$, sections
$(\g,\xi)\in \Secs(BM)$ consist of a smooth Lorentzian metric $\g$ and a
smooth scalar field $\xi$ on $M$. We consider the following scaling
transformation $(\g,\xi) \mapsto (\lambda^{-2}\g, \lambda^s \xi)$ on
sections. We call this transformation a \textbf{physical scaling}, in
contrast to a different kind of scaling to be introduced in
Sect.~\ref{sec:diff}. We will need the following rather general
\emph{recursive} definition, where $\R^+ := (0, +\infty)$,

\begin{defn}\label{def:almost-hom}
Consider a linear representation of the multiplicative group $\R^+$ on a
vector space $W$, written as $W \ni F\mapsto F_\lambda \in W$, for every
$\lambda \in \R^+$. 
\begin{itemize}
\item[\textbf{(a)}] An element $F\in W$ is said to have
\textbf{homogeneous degree $k\in \R$} if
\begin{equation}
	F_\lambda = \lambda^k F\quad \text{for all $\lambda \in \R^+$}\:.
\end{equation}
\item[\textbf{(b)}] An element $F\in W$ is said to have \textbf{almost
homogeneous degree $k\in\R$ and order $l\in \mathbb{N}$} if $l\ge 0$ is
an integer such that (the sum over $j$ is omitted if $l=0$)
\begin{equation}\label{eq:almost-hom-def}
	F_\lambda = \lambda^k F
		+ \lambda^k \sum_{j=1}^{l} (\log^j \lambda) G_j ,
		\quad \text{for all $\lambda \in \R^+$},
\end{equation}
and for some $G_j\in W$ depending on $F$, which have respectively almost
homogeneous degree $k$ and order $l-j$.
\end{itemize}
The definition is recursive, with higher orders defined in terms of
lower ones. Clearly, an element that is almost homogeneous of order
$l=0$ is simply homogeneous.
\end{defn}

\begin{rem}
Besides \emph{almost homogeneous}, other common names found in the
literature include \emph{poly-homogeneous}, \emph{associated
homogeneous} and even \emph{quasi associated homogeneous}. We are mostly
interested in the case when $W$ is some function space and the action of
$\R^+$ is induced from an action on the domain of the functions.
Reference~\cite{shelkovich} reviews several definitions leading to this
class of functions and lists relevant earlier works. In the context of
distribution theory, the terminology of \emph{associated homogeneous} is
prevalent and goes back to the seminal
references~\cite[\textsection~1.4]{gs1} and~\cite[Ch.I~\textsection
4]{gs2}. Our Definition~\ref{def:almost-hom} coincides
with~\cite[Def.~5.2]{shelkovich}.
\end{rem}

The \textbf{physical scaling transformation} on the sections $\Secs(BM)$
can be implemented by post-composing a section with a bundle map $BM\to
BM$:
\begin{equation}
	BM \ni (p, \g(p), z(p))
		\mapsto (p, \lambda^{-2}\g(p), \lambda^s z(p)) \in BM\:,
\end{equation}
where the real $\lambda \in \R^+$ defines the scaling transformation.
This representation of the multiplicative group $\R^+$  is
\emph{globally} defined,  however  this global action can be written in
adapted local coordinates, as discussed in Sect.~\ref{sec:coords}, and
looks like
\begin{equation}
	x^a \mapsto x^a, \quad g_{ab} \mapsto \lambda^{-2} g_{ab}, \quad
	z \mapsto \lambda^s z .
\end{equation}
This global transformation lifts to a global transformation of the jet
bundle $J^rBM$.  In the corresponding induced local coordinates, the
lifted action reads
\begin{equation}
	g_{ab,A} \mapsto \lambda^{-2} g_{ab,A}, \quad
	z_A \mapsto \lambda^s z_A .
\end{equation}
We are interested in applying Definition~\ref{def:almost-hom} to $W =
C^\oo(J^rBM)$ and the $\R^+$ action induced by the lift of physical
scalings to $J^rBM$.  Moreover, we will need to consider also smaller
domains $V^r\sse J^rBM$ for these functions, with $V^r$ themselves not
invariant under physical scalings. Thus, it is more convenient to refer
to the infinitesimal version of these transformations, which are
effected by the following vector field
\begin{equation}\label{e}
	e = -2 g_{ab,A} \del^{ab,A} + s z_A \del^A_z ,
\end{equation}
in the sense that the induced action on scalar functions on $J^rBM$
satisfies
\begin{equation}\label{eq:euler-vf}
	\left. \frac{d}{d\lambda} \right|_{\lambda = 1} F_\lambda
		= \Lie_e F .
\end{equation}
(In the rest of the paper  if $X$ is a vector field on $J^rBM$, $\Lie_X$
denotes the standard Lie derivative so that, in particular  $\Lie_X(F)
:= X(F)$ if  $F\colon V^r \sse J^rBM \to \R$ is a smooth function.)
Notice that, as the  physical scaling transformation is globally
defined, $e$ turns out to be globally defined  on  $J^rBM$ and~\eqref{e}
is just its expression in local coordinates. We have a first elementary
result stated within the following lemma.  We will essentially show
later that the converse implication holds as well.

\begin{lem}\label{lem:inf-almost-hom}
A smooth function $F\colon J^rBM \to \R$ that has almost homogeneous
degree $k$ and order $l$, according to Definition~\ref{def:almost-hom},
when the action $F \to F_\lambda$ is the one induced by physical scaling
transformations, satisfies the following local infinitesimal version
\begin{equation}
	(\Lie_e - k)^{l+1} F = 0 \:.
\end{equation}
\end{lem}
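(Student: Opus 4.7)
The plan is to proceed by induction on the almost-homogeneous order $l$, using the definition of $\Lie_e$ in Eq.~\eqref{eq:euler-vf} to convert the finite scaling law into its infinitesimal counterpart.

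For the base case $l=0$, the function $F$ is homogeneous of degree $k$, so $F_\lambda = \lambda^k F$ for every $\lambda \in \R^+$. Differentiating both sides in $\lambda$ at $\lambda=1$ and invoking Eq.~\eqref{eq:euler-vf} yields $\Lie_e F = k F$, i.e., $(\Lie_e - k) F = 0$, as desired.

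For the inductive step, assume the conclusion holds at order $l-1$ and suppose $F$ has almost homogeneous degree $k$ and order $l$. Writing out Eq.~\eqref{eq:almost-hom-def} and differentiating at $\lambda=1$, I would use the elementary fact that
\begin{equation}
	\left.\frac{d}{d\lambda}\right|_{\lambda=1}\!\!\bigl[\lambda^k (\log \lambda)^j\bigr]
	= k\, \delta_{j,0} + \delta_{j,1},
\end{equation}
since $(\log\lambda)^j$ and all its derivatives of order less than $j$ vanish at $\lambda=1$. Only the $j=0$ and $j=1$ terms survive, giving the key identity
\begin{equation}
	(\Lie_e - k) F = G_1 .
\end{equation}
By the recursive clause of Definition~\ref{def:almost-hom}, $G_1$ itself has almost homogeneous degree $k$ and order $l-1$, so the induction hypothesis applies to $G_1$ and yields $(\Lie_e - k)^{l} G_1 = 0$. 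Composing the two identities,
\begin{equation}
	(\Lie_e - k)^{l+1} F = (\Lie_e - k)^{l}\bigl[(\Lie_e - k) F\bigr] = (\Lie_e - k)^{l} G_1 = 0 ,
\end{equation}
which completes the induction.

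There is no real obstacle; the only point that must be handled carefully is the differentiation at $\lambda=1$ of the logarithmic terms, which isolates exactly the coefficient $G_1$ and makes the inductive structure of Definition~\ref{def:almost-hom} line up with iteration of the first-order operator $\Lie_e - k$. The fact that $e$ is a globally defined vector field on $J^rBM$ means the argument is pointwise and requires no patching, even if the domain $V^r \sse J^rBM$ on which $F$ is defined is not itself invariant under scaling, since only the infinitesimal identity at each point enters the conclusion.
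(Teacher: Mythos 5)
Your proof is correct. The paper disposes of the lemma in one line, by combining Eq.~\eqref{eq:euler-vf} with the closed-form identity $(\lambda\, d/d\lambda - k)^{l+1}\lambda^k\log^l\lambda = 0$, i.e., it applies all $l+1$ factors of $(\Lie_e-k)$ at once to the fully expanded scaling law. Your induction instead peels off one factor at a time: a single differentiation at $\lambda=1$ yields $(\Lie_e-k)F = G_1$, and the recursive clause of Definition~\ref{def:almost-hom} hands $G_1$ to the inductive hypothesis. The two arguments rest on the same elementary computation, but yours has the minor advantage of using only the literal content of Eq.~\eqref{eq:euler-vf} (the derivative at $\lambda=1$), whereas the one-line version tacitly also invokes the group property of the action to identify $\lambda\, dF_\lambda/d\lambda$ with $(\Lie_e F)_\lambda$ away from $\lambda=1$. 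One small caveat: your closing remark about non-invariant domains $V^r$ is not needed and is slightly misleading here --- the lemma as stated has $F$ defined on all of $J^rBM$, and if the domain were not scaling-invariant the finite scaling law in the hypothesis would not even be well posed; handling that situation is precisely the purpose of the subsequent Definition~\ref{def:almost-hom-inf}.
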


\begin{proof}
It is sufficient to make use of equation~\eqref{eq:euler-vf} and recall
the obvious identity $(\lambda d/d\lambda - k)^{l+1} \lambda^k \log^l
\lambda = 0$.
\end{proof}

This lemma is essentially a restatement of Theorem~5.2 and Remark~5.1
from~\cite{shelkovich}. It now permits us to give a definition of almost
homogeneity under infinitesimal scaling for functions defined on subsets
of jets of dimensionful bundles. The advantage of using infinitesimal
scaling is that the domain on which it is defined need not actually be
invariant under finite scaling.
\begin{defn}\label{def:almost-hom-inf}
A smooth function $F\colon V^r \sse J^rBM \to \R$, where $V^r$  is an
open subset which may coincide with all of $J^rBM$,  is said to have
\textbf{almost homogeneous degree $k\in \R$ and order $l\in \mathbb{N}$}
(with $l\ge 0$) \textbf{under physical scalings} if it satisfies the
identity
\begin{equation}
	(\Lie_e - k)^{l+1} F = 0 .
\end{equation}
If $l=0$, $F$ is said to have \textbf{homogeneous  degree $k\in\R$}.
\end{defn}
To investigate the local structure of $F$ above we initially use an open
subset $V^r$  equipped with  the contravariant coordinates
$(x^a,g^{ab,A},z^{A})$ introduced in Sect.~\ref{sec:coords}. In these
coordinates, finite and infinitesimal physical scalings take the form
\begin{gather}
	x^a \mapsto x^a, \quad
	g \mapsto \lambda^{-2n} g, \quad
	g^{ab,A} \mapsto \lambda^{2+2|A|} g^{ab,A}, \quad
	z^{A} \mapsto \lambda^{s+2|A|} z^{A} , \\
\label{eq:scal1-vf}
	e = (2+2|A|) g^{ab,A} \del_{ab,A} + (s+2|A|) z^A \del^z_A \:,
\end{gather}
where we have also described the action of rescaling on $g$ which, as
already remarked, can be used as an alternative coordinate in place of
one of the $g^{ab}$.  As $e$ does not  vanish anywhere, $J^rBM$ and
hence the domain $V^r$ are foliated by integral curves of the vector
field $e$. Moreover, the identity  $\Lie_e g^{-\frac{1}{2n}} =
g^{-\frac{1}{2n}}$ means that $g$ restricts to a global coordinate on
each orbit of $e$. Thus, the level sets of $g$ constitute another
foliation of $J^rBM$ and $V^r$, transverse to the integral curves of
$e$. These observations suggest to study the structure of (almost)
homogeneous functions of degree $k$ in the \emph{rescaled contravariant
coordinates} 
\begin{equation}
	(x^a, g, g^{-\frac{1}{n}} g_{ab} ,
		g^{\frac{1}{n} + \frac{1}{n}|A|} g^{ab,A} ,
		g^{\frac{s}{2n} + \frac{1}{n}|A|} z^{A}) ,
\end{equation}
that were introduced in Sect.~\ref{sec:coords}. Note that \emph{each
of these functions but $g$ is invariant under physical scalings}. We
have the following result.
\begin{lem}\label{lem:almost-hom}
Suppose that $V^r \sse J^rBM$ is an open set equipped with either
coordinates $(x^a,g^{ab,A},z^{A})$ or some other coordinate system
introduced in Sect.~\ref{sec:coords}, and $F\colon  V^r  \to \R$ is a
smooth function that has almost homogeneous degree $k$ and order $l$
with respect to physical scalings, as in
Definition~\ref{def:almost-hom-inf}. Then there exist  homogeneous of
degree $0$  functions $H_j\colon V^r \to \R$, for $j=0,1,\ldots, l$,
such that
\begin{equation}\label{eq:almost-hom}
	F = g^{-\frac{k}{2n}} \sum_{j=0}^{l}
		\log^j (g^{-\frac{1}{2n}}) H_j .
\end{equation}
In particular, using rescaled contravariant coordinates, each $H_j$ can
be taken independent of $g$ and written in the form
\begin{equation}\label{eq:det-indep}
	H_j = H_j(x^a,
		g^{-\frac{1}{n}} g_{ab} ,
		g^{\frac{1}{n} + \frac{1}{n}|A|} g^{ab,A} ,
		g^{\frac{s}{2n} + \frac{1}{n}|A|} z^{A}) .
\end{equation}
\end{lem}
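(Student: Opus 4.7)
My plan is to pass to the rescaled contravariant coordinates and reduce the PDE to a one-dimensional ODE along orbits of $e$. Using the expression~\eqref{eq:scal1-vf} for $e$, a direct computation shows that each of the functions $g^{-\frac{1}{n}} g_{ab}$, $g^{\frac{1}{n}+\frac{1}{n}|A|} g^{ab,A}$, $g^{\frac{s}{2n}+\frac{1}{n}|A|} z^A$ is annihilated by $\Lie_e$, while $\Lie_e(g^{-\frac{1}{2n}}) = g^{-\frac{1}{2n}}$. Writing $t := g^{-\frac{1}{2n}}$ and denoting the scaling-invariant rescaled coordinates collectively by $y$, the vector field takes the pure Euler form $e = t\,\del_t$ on $V^r$, with the $y$ serving as transverse invariants. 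Since the claim is coordinate-independent, it suffices to prove~\eqref{eq:almost-hom} in these adapted coordinates.

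In these coordinates the hypothesis $(\Lie_e - k)^{l+1} F = 0$ becomes $(t\del_t - k)^{l+1} F(t,y) = 0$. I would substitute $F = t^k \tilde F$ and use the elementary identity $(\Lie_e - k)(t^k \tilde F) = t^{k+1}\del_t \tilde F = t^k (t\del_t \tilde F)$, which iterates to $(\Lie_e - k)^{l+1} F = t^k (t\del_t)^{l+1} \tilde F$; hence the equation reduces to $(t\del_t)^{l+1} \tilde F = 0$. Passing to $u = \log t$ this becomes simply $\del_u^{l+1} \tilde F = 0$, whose general smooth solution is a polynomial in $u$ of degree at most $l$, namely $\tilde F(u,y) = \sum_{j=0}^{l} u^j H_j(y)$. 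Restoring the original variables gives $F = g^{-\frac{k}{2n}} \sum_{j=0}^{l} \log^j(g^{-\frac{1}{2n}}) H_j$, which is~\eqref{eq:almost-hom}; since the $H_j$ depend only on $y$ and $x^a$ they are $\Lie_e$-invariant, hence homogeneous of degree $0$, and are of the form~\eqref{eq:det-indep}.

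The one technicality is the smoothness of the $H_j$ on $V^r$, since $V^r$ need not be saturated under scaling and the coordinate $t$ need not range over all of $\R^+$ on each orbit. I would handle this by observing that for each point of $V^r$ the slice of admissible values of $u$ over a fixed $(x^a,y)$ is a nonempty open interval; picking $l+1$ nearby smooth local values $u_0,\ldots,u_l$ in that interval and inverting the resulting Vandermonde system expresses each $H_j$ as a smooth local function of $(x^a,y)$, and uniqueness of polynomial interpolation glues these local expressions into a globally well-defined smooth $H_j$ on $V^r$. The main obstacle is organizational rather than conceptual: once the rescaled contravariant coordinates are in place and the Euler form of $e$ is established, the rest of the argument is the standard classification of solutions of $\del_u^{l+1}\tilde F = 0$ with smooth dependence on parameters.
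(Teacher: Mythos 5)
Your argument is correct and follows the same basic strategy as the paper: pass to the rescaled contravariant coordinates, use $\Lie_e g^{-\frac{1}{2n}}=g^{-\frac{1}{2n}}$ to put $e$ into Euler form, substitute $F=g^{-\frac{k}{2n}}\tilde F$ to reduce almost homogeneity to $(t\del_t)^{l+1}\tilde F=0$, and read off the polynomial-in-$\log$ structure. Where you diverge is in extracting the coefficients $H_j$: you solve the ODE $\del_u^{l+1}\tilde F=0$ and then recover smoothness of the $H_j$ by Vandermonde interpolation at $l+1$ sample points, glued by uniqueness of polynomial interpolation. The paper instead defines the $H_j$ by an explicit recursive differential formula --- $H_l:=\frac{1}{l!}\Lie_e^{l}G$ with $G:=g^{\frac{k}{2n}}F$, then $G_{l-1}:=G-\log^{l}(g^{-\frac{1}{2n}})H_l$, and so on, using $\Lie_e^{j}\log^{j}(g^{-\frac{1}{2n}})=j!$ --- which makes each $H_j$ manifestly smooth and globally defined on $V^r$ with $\Lie_e H_j=0$, so the domain issue you flag as ``the one technicality'' never arises. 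Your interpolation argument does work, but note one imprecision: an open $V^r$ need not have connected $u$-slices over a fixed $(x^a,y)$, so the slice is in general a disjoint union of intervals and the interpolating polynomial may differ between components; this is harmless for the lemma (homogeneity of degree $0$ only requires $\Lie_e H_j=0$, i.e.,\ local constancy along orbits), and it is another place where the paper's recursive construction, or simply taking $H_l=\frac{1}{l!}\del_u^{l}\tilde F$ directly as your top coefficient, would let you skip the interpolation entirely.
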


\begin{proof}
In the simplest $l=0$ case, we can define $H = g^{\frac{k}{2n}} F$ and
show that $\Lie_e H = 0$ because $\Lie_e g^{-\frac{1}{2n}} =
g^{-\frac{1}{2n}}$. This means that, in  rescaled contravariant
coordinates, $H$ is independent of $g$ and hence \eqref{eq:det-indep}
holds, with $H$ in place of $H_j$. Next, the general $l\ge 1$ case can
be treated as follows. Let $G := g^{\frac{k}{2n}} F$, which implies that
$\Lie_e^{l+1} G = g^{\frac{k}{2n}} (\Lie_e-k)^{l+1} F = 0$. Now, note the identity $\Lie_e^j \log^j
(g^{-\frac{1}{2n}}) = j!$. So, if $H_{l} := \frac{1}{l!} \Lie_e^{l} G$
and $G_{l-1} := G - \log^{l} (g^{-\frac{1}{2n}}) H_{l}$, then $\Lie_e
H_{l} = 0$ and $\Lie_e^{l} G_{l-1} = 0$. In other words, starting with
$G_{l} = G$, we can recursively define $H_j := \frac{1}{j!} \log^j
(g^{-\frac{1}{2n}}) \Lie_e^j G_j$ and $G_{j-1} := G_{j} - \log^{j}
(g^{-\frac{1}{2n}}) H_{j}$, finding  $\Lie_e H_{j} = 0$ at each step.
The procedure stops for $j=0$ when it gives $G_{0}=H_{0}$, so that
$G_{j<0} = H_{j<0} = 0$, proving \eqref{eq:almost-hom}.
\end{proof}

We will also need the following basic result regarding \emph{products}
of vectors with almost homogeneous degree as in
Definition~\ref{def:almost-hom}. Due to the generality of
Definition~\ref{def:almost-hom} we must clarify the meaning of
\emph{product}. If $W$ and $W'$ are two vector spaces,  by a
\textbf{product} between them,  we mean any fixed bilinear map $W\times
W' \to V$, where $V$ is another vector space. If $F\in W$ and $F'\in W'$
the corresponding element in $V$, their \textbf{product},  will be
simply denoted by $FF' \in V$.

\begin{lem}\label{lem:almost-hom-alg}
Referring to Definition \ref{def:almost-hom}, consider a pair of vector
spaces $W, W'$ endowed with corresponding representations of $\R^+$.
Concerning (b) below, assume also that there is a product  $W\times W'
\to V$ such that  (i) $V$ admits a representation of $\R^+$ and (ii) the
map  $W\times W' \to V$ is equivariant: $F_\lambda F'_\lambda =
(FF')_\lambda$ for $F\in W$, $F'\in W'$ and $\lambda \in \R^+$.
\begin{itemize}
\item[\textbf{(a)}] A linear combination of two elements $F, F' \in W$
of almost homogeneous degree $k$ and order $l$ is of almost homogeneous
degree $k$ and order $l$.
\item[\textbf{(b)}]  A product of an element $F\in W$, of almost
homogeneous degree $k$ and order $l$, and an element $F'\in W'$, of
almost homogeneous degree $k'$ and order $l'$, has almost homogeneous
degree $k+k'$ and order $l+l'$.
\end{itemize}
\end{lem}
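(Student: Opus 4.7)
My plan is to handle part (a) by a short induction on $l$, then prove part (b) by induction on the total order $l+l'$, reducing everything to the bilinear expansion of $F_\lambda F'_\lambda$ and the identification of the coefficient of each power of $\log \lambda$.

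For part (a), linearity of the representation on $W$ immediately gives
$$(aF+bF')_\lambda = \lambda^k(aF+bF') + \lambda^k\sum_{j=1}^{l} (\log^j \lambda)(aG_j+bG'_j),$$
where $G_j$ and $G'_j$ are the coefficients supplied by the almost homogeneous expansions of $F$ and $F'$. The base case $l=0$ is trivial by linearity, and in the inductive step each $aG_j+bG'_j$ is a linear combination of elements of almost homogeneous degree $k$ and order $l-j$, hence of the same type by the inductive hypothesis. So $aF+bF'$ has the required expansion from Definition~\ref{def:almost-hom}(b).

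For part (b), the base case $l=l'=0$ is direct from equivariance of the product and homogeneity: $(FF')_\lambda = F_\lambda F'_\lambda = \lambda^{k+k'}FF'$. For the inductive step, I multiply the almost homogeneous expansions of $F_\lambda$ and $F'_\lambda$ and collect by powers of $\log \lambda$. Writing $G_0 := F$, $G'_0 := F'$ for brevity, the coefficient of $\lambda^{k+k'}(\log \lambda)^m$ with $m\geq 1$ is the finite sum
$$\sum_{\substack{j+j'=m \\ 0\leq j\leq l,\ 0\leq j'\leq l'}} G_j G'_{j'}.$$
Each summand has factors of almost homogeneous degree $k$ and $k'$ and orders $l-j$ and $l'-j'$, so their total order is $(l-j)+(l'-j') = l+l'-m < l+l'$. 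The inductive hypothesis therefore gives that $G_j G'_{j'}$ has almost homogeneous degree $k+k'$ and order $l+l'-m$, and part (a) yields the same property for the full sum. This matches exactly what Definition~\ref{def:almost-hom}(b) requires of $FF'$ with degree $k+k'$ and order $l+l'$, completing the induction.

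The only obstacle worth anticipating is administrative: because Definition~\ref{def:almost-hom}(b) is itself recursive, I must make sure the induction for part (b) strictly decreases the total order at every invocation. The inequality $(l-j)+(l'-j') < l+l'$ for $m=j+j'\geq 1$ is precisely what makes this well-founded. Once that is in place, the rest reduces to straightforward bilinear bookkeeping, and no further subtlety about the product $W\times W'\to V$ beyond its bilinearity and the assumed $\R^+$-equivariance is needed.
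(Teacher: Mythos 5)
Your proof is correct and follows essentially the same route as the paper: both expand $F_\lambda F'_\lambda$ bilinearly, identify the coefficient of each power of $\log\lambda$, and observe that every such coefficient is a (sum of) product(s) $G_j G'_{j'}$ whose factors have strictly smaller total order, so that an induction closes, with part (a) absorbing the finite sums. The only difference is organizational: you run a single well-founded induction on the total order $l+l'$, whereas the paper uses a more elaborate double induction on the pair $(l,l')$ with a ``primary'' and a ``secondary'' inductive step; your bookkeeping is, if anything, cleaner, and the inequality $(l-j)+(l'-j')<l+l'$ for $j+j'\geq 1$ that you isolate is exactly what makes both arguments go through.
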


\begin{proof}
Part (a) is trivial, because the defining
identity~\ref{eq:almost-hom-def} is linear.

We will prove part (b) by double induction on the pair of orders
$(l,l')$. Consider the identity
\begin{multline}\label{eq:almost-hom-trans}
	(F F')_\lambda = F_\lambda F'_\lambda
		= \lambda^{k+k'} F F' \\
			+ \lambda^{k+k'} \sum_{j=1}^l (\log^j \lambda) G_j F'
			+ \lambda^{k+k'} \sum_{j'=1}^{l'} (\log^{j'} \lambda) F G'_{j'} \\
			+ \lambda^{k+k'} \sum_{j=1}^l \sum_{j'=1}^{l'}
					(\log^{j+j'}\lambda) G_j G'_{j'} .
\end{multline}
From this formula, it is clear that, to show that $F F'$ has almost
homogeneous degree $k+k'$ and order $l+l'$, it is sufficient to
establish that the coefficients of the logarithmic terms, $G_j F'$, $F
G'_{j'}$ and $G_j G'_{j'}$, either do not appear or are themselves
almost homogeneous of the right degree and order. Thus, to establish the
case $(l,l')$, it is sufficient to have all of the $(j,l')$, $(l,j')$
and $(j,j')$ cases, with $j<l$ and $j'<l'$, already established. We
shall refer to this last remark as the \emph{primary inductive step}.

The case $(l,l') = (0,0)$ follows immediately from
Eq.~\eqref{eq:almost-hom-trans}, since no logarithmic terms appear.
Next, we establish the following \emph{secondary inductive step}.
Assuming that, given some $m\ge 0$, all cases $(l,l')$ with $l,l'\le m$
hold, then actually all cases $(l,l')$ with $l,l'\le m+1$ hold as well.
To see that, note that the case $(m+1,0)$ holds, because
in~\eqref{eq:almost-hom-trans} we need only consider the terms $G_j F'$,
which correspond to the inductively covered cases $(m+1-j,0)$ with $j\ge
1$. Then, using the primary inductive step, all the cases $(m+1,l')$
with $1\le l'\le m$ follow as well. The cases $(l,m+1)$ with $0\le l\le
m$, are completely analogous. Finally, one more appeal to the primary
inductive step establishes the case $(m+1,m+1)$.

Iterating the secondary inductive step completes the proof of part (b).
\end{proof}

\subsection{Diffeomorphisms and coordinate scalings}\label{sec:diff}
Because the sections $(\g,\xi) \in \Secs(BM)$ are tensor fields, there
is a well defined action of the group $\Diff(M)$ of diffeomorphisms
$\chi\colon M\to M$ on them by pullback $(\g,\xi) \mapsto (\chi^*\g,
\chi^*\xi)$. This action of course can be implemented at the level of
the bundle itself, $\chi^*\colon BM \to BM$ and of course lifted to the
jet bundle $j^r\chi^* \colon J^rBM \to J^rBM$. We are interested in the
structure of functions $F\colon J^rBM \to \R$ that are invariant under
the action of $\Diff(M)$. We could also consider invariance only under
the subgroup $\Diff^+(M)$ of orientation preserving diffeomorphisms
 in an essentially analogous way. For this purpose, it is
convenient to make use of the local adapted \emph{curvature coordinates}
$(x^a,g_{ab},\Gamma^a_{(bc,A)}, \bar{S}^{ab(cd,A)}, \bar{z}^A)$ on a
domain $V^r\sse J^rBM$ defined in Sect.~\ref{sec:coords}.

The domain $V^r$ itself may not be invariant under $\Diff(M)$, because
our coordinates are adapted to a single coordinate chart $(x^a)$ on
$U\sse M$. On the other hand, having already chosen our coordinate
system, we can phrase the requirement that $F\colon V^r\to \R$ is the
restriction of a $\Diff(M)$-invariant function (necessarily defined on a
possibly larger $\Diff(M)$-invariant domain) to $V^r$ in the following
way: (a) $\frac{\del}{\del x^a} F = 0$, where the vector fields
$\frac{\del}{\del x^a}$ are the infinitesimal generators of
diffeomorphisms that restrict to coordinate translations on $U$, and (b)
the restriction $F_x \colon V^r_x \sse J^r_xBM \to \R$ of $F$ to the
fiber of $J^rBM$ over any one point $x\in M$ is invariant under the
action of the subgroup $\Diff(M,x) \sso \Diff(M)$ that fixes $x$.
Clearly we can take $V^r_x$ to be invariant under $\Diff(M,x)$. An
immediate simplification based on requirement (a) is that our function
is expressible as $F = F_x(g_{ab}, \Gamma^a_{(bc,A)},
\bar{S}^{ab(cd,A)}, \bar{z}^{A})$, that is, it is independent of the
base coordinates $(x^a)$. Next, we examine the consequences of
requirement (b).

The action of $\Diff(M,x)$ on $r$-jets is not faithful. In fact, it has
a large kernel, so that the action on $J_x^rBM$ factors through the
homomorphic projection $\Diff(M,x) \to G^r_n$, where $G^r_n$ is a
finite-dimensional Lie group known as the \textbf{$r$-jet
group}~\cite[\textsection~13]{kms}. Thus, we need only consider the
invariance of $F_x$ under $G^r_n$. The $r$-jet groups come with natural
projections $G^r_n \to G^{r-1}_n$, corresponding to the equivariant
projection $J_x^rBM \to J_x^{r-1}BM$, and it is easily seen that $G^1_n
\cong GL(n)$. Analogously, for orientation preserving diffeomorphisms,
we denote the corresponding projections as $\Diff^+(M) \to G^{+r}_n \to
GL^+(n)$.

The curvature coordinates $(g_{ab}, \Gamma^a_{(bc,A)},
\bar{S}^{ab(cd,A)}, \bar{z}^A)$ are used specifically for their transformation
properties under $G^r_n$. Note that, without loss of generality but
after a possible small restriction of $V_x^r$, we can factor $V_x^r
\cong \R^\gamma \times W^r$, where the projection onto the $\R^\gamma$
factor is effected by the $(\Gamma^a_{(bc,A)})$ coordinates and the
projection onto the $W^r$ factor is effected by the remaining
coordinates. This factorization respects the action of $G^r_n$ in the
sense that the projection $V_x^r \to W^r$ induces a well-defined action
of $G^r_n$ and $W^r$. The action on $W^r$ actually factors
through the projection $G^r_n \to G^1_n \cong GL(n)$, since it is
coordinatized by components of tensors. Moreover, for any $w\in W^r$,
the isotropy subgroup of $w$ in $G^r_n$ acts transitively on the fiber
$\R^\gamma$ over $w$. In the orientation preserving case, the same is
true of the corresponding actions of $G^{+r}_n$ and $GL^+(n)$. The fact
that $G^r_n$ (and also $G^{+r}_n$) acts transitively on the $\R^\gamma$
fibers that are coordinatized by the derivatives of the Christoffel
symbols $(\Gamma^a_{(bc,A)})$ means that an invariant function $F_x$
cannot depend on these coordinates, which is a well-known result that is
sometimes known as the \emph{Thomas replacement
theorem}~\cite{Iyer-Wald94,ta-sym}. Let us rephrase it slightly below.

The above factorization $V^r \cong \R^\gamma \times W^r$ is also
compatible with the rescaled curvature coordinates
\begin{equation} \label{eq:gGgS-coords}
	(x^a,g,
	g^{-\frac{1}{n}} g_{ab}, \Gamma^a_{(bc,A)},
	g^{\frac{3}{n} + \frac{1}{n}|A|} \bar{S}^{ab(cd,A)},
	g^{\frac{s}{2n}+\frac{1}{n}|A|} \bar{z}^A) ,
\end{equation}
that were introduced in Sect.~\ref{sec:coords}. Recall that in our
notation the functions $(g^{-\frac{1}{n}} g_{ab})$ are functionally
independent only up to the identity $| \det (g^{-\frac{1}{n}} g_{ab})| =
1$. The main distinction is that these coordinates, other than $(x^a,
\Gamma^a_{(bc,A)})$, are no longer components of tensors, but rather of
tensor densities, which also transform under $GL(n)$
(cf.~Sect.~\ref{sec:iso-tens}). Using these coordinates, together with
the preceding discussion, we can simplify a $\Diff(M)$-invariant $F$ as
follows:
\begin{prop}[Thomas replacement theorem]\label{prp:thomas}
Let $F\colon V_x^{\prime r}\sse J^rBM \to \R$ be a $\Diff(M)$-invariant
function defined on a $\Diff(M)$-invariant domain. In the coordinate
system~\eqref{eq:gGgS-coords} defined on the domain $V^r\sse V^{\prime r}$, the
restriction of $F$ to $V^r$ must be expressible as
\begin{equation}
	F = G(g, g^{-\frac{1}{n}} g_{ab},
		g^{\frac{3}{n} + \frac{1}{n}|A|} \bar{S}^{ab(cd,A)},
		g^{\frac{s}{2n}+\frac{1}{n}|A|} \bar{z}^A) ,
\end{equation}
where the function $G$ is invariant under the action of $GL(n)$ on its
arguments.
\end{prop}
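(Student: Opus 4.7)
The plan is to directly implement the argument sketched in the paragraphs leading up to the statement, making precise the decomposition of $\Diff(M)$-invariance into its two manageable pieces. The two pieces are: (i) invariance under one-parameter families of diffeomorphisms that act as coordinate translations on $U$, and (ii) fiberwise invariance at a chosen point $x\in U$ under the stabilizer $\Diff(M,x)\sso \Diff(M)$.

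First I would observe that piece (i) is exactly the condition $\del_a F = 0$ in the chosen chart, since the infinitesimal generators of those translation-like diffeomorphisms act on $J^rBM$ by the coordinate vector fields $\del/\del x^a$ (none of the other rescaled curvature coordinates transform, since covariant derivatives and their contractions with $g$ do not pick up explicit $x^a$ dependence). This alone eliminates the $x^a$ dependence, so $F$ descends to a function on the fibers parametrized by $(g,\,g^{-\frac{1}{n}}g_{ab},\,\Gamma^a_{(bc,A)},\,g^{\frac{3}{n}+\frac{1}{n}|A|}\bar{S}^{ab(cd,A)},\,g^{\frac{s}{2n}+\frac{1}{n}|A|}\bar{z}^A)$.

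Next I would invoke piece (ii): the action of $\Diff(M,x)$ on $J_x^rBM$ factors through the $r$-jet group $G^r_n$. Using the factorization $V^r_x\cong \R^\gamma\times W^r$ mentioned just before the statement, where the $\R^\gamma$ factor is coordinatized by $(\Gamma^a_{(bc,A)})$ and $W^r$ by the remaining (tensor-density) coordinates, I would argue as follows. The action on $W^r$ factors through the quotient $G^r_n\to G^1_n\cong GL(n)$ because all coordinates on $W^r$ are tensor-density components, so only the $GL(n)$ piece of a jet of a diffeomorphism acts on them. Meanwhile, for any fixed $w\in W^r$, the stabilizer of $w$ inside $G^r_n$ acts transitively on the fiber $\R^\gamma$ over $w$; this is the classical fact that the higher-order parts of a jet of a diffeomorphism at $x$ can be used to absorb any prescribed value of the Christoffel symbols and their symmetrized coordinate derivatives at $x$, and it is the heart of the Thomas replacement theorem. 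Invariance of $F$ under this stabilizer therefore forces $F$ to be constant along each $\R^\gamma$ fiber, i.e.\ independent of the $\Gamma^a_{(bc,A)}$ coordinates.

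What remains is a function $G$ of $(g,\,g^{-\frac{1}{n}}g_{ab},\,g^{\frac{3}{n}+\frac{1}{n}|A|}\bar{S}^{ab(cd,A)},\,g^{\frac{s}{2n}+\frac{1}{n}|A|}\bar{z}^A)$ on $W^r$. The residual action of $G^r_n$ on $W^r$ is the $GL(n)$ action on these tensor-density components, so the invariance statement (ii) descends precisely to $GL(n)$-invariance of $G$, completing the proof. The main obstacle is the transitivity claim for the stabilizer acting on $\R^\gamma$; I would handle it by an explicit inductive count, exploiting the surjection $G^r_n\to G^{r-1}_n$ whose kernel consists of pure order-$r$ jets of diffeomorphisms fixing $x$ and realizing exactly the free translation on the highest-order Christoffel coordinates $(\Gamma^a_{(bc,A)})$ with $|A|=r-2$, with the lower-order pieces absorbed by the inductive step; this is the standard Christoffel/normal-coordinate computation and is the only technical input beyond bookkeeping. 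The orientation-preserving variant is identical with $GL(n)$ replaced by $GL^+(n)$ throughout.
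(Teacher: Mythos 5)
Your proposal is correct and follows essentially the same route as the paper: the paper's own justification is precisely the preceding discussion in Sect.~2.5 (translation invariance kills the $x^a$ dependence; the $\Diff(M,x)$-action factors through $G^r_n$, whose $w$-stabilizers act transitively on the $\R^\gamma$ fiber of Christoffel coordinates, forcing independence of the $\Gamma^a_{(bc,A)}$; the residual action on $W^r$ factors through $GL(n)$), with the transitivity fact cited as the classical Thomas replacement theorem rather than proved. Your inductive argument via the kernels of $G^r_n \to G^{r-1}_n$ is the standard normal-coordinate proof of that one cited fact, so it fleshes out rather than departs from the paper's argument.
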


At this point, we have reduced the invariance of $F$ under $\Diff(M)$ to
the invariance of the function $G$, from Proposition~\ref{prp:thomas},
under  $GL(n)$ (obtained as the projection $\Diff(M,x) \to GL(n)$),
which follows from the preceding discussion. Analogous statements hold
for $\Diff^+(M)$, $\Diff^+(M,x)$ and $GL^+(n)$. We now single out a
specific subgroup of $GL^+(n)$ (and hence also of $GL(n)$) that we shall
call the group of \textbf{coordinate scalings}. It consists of matrices
of the form $\mu I_n \in GL(n)$, where $\mu$ is a positive real number
and $I_n$ is the $n\times n$ identity matrix. The name refers to the
fact that $\mu I_n$ is the image of a diffeomorphism that restricts to a
uniform scaling of the coordinates $(x^a)$ centered at $x\in U\sse M$,
with of course many other possible pre-images, under the projection
$\Diff^+(M) \to GL^+(n)$. These transformations should be contrasted
with the distinct group of \emph{physical scalings} introduced in
Sect.~\ref{sec:scal}.

Coordinate scalings act on the components of tensor densities appearing
in the coordinate system~\eqref{eq:gGgS-coords} as follows:
\begin{align}
	g &\mapsto \mu^{2n} g , &
	g^{\frac{3}{n}+\frac{1}{n}|A|} \bar{S}^{ab(cd,A)}
		&\mapsto \mu^{2+|A|} g^{\frac{3}{n}+\frac{1}{n}|A|} \bar{S}^{ab(cd,A)} , \\
	g^{-\frac{1}{n}} g_{ab} &\mapsto g^{-\frac{1}{n}} g_{ab} , &
	g^{\frac{s}{2n} + \frac{1}{n}|A|} \bar{z}^A
		&\mapsto \mu^{s+|A|} g^{\frac{s}{2n} + \frac{1}{n}|A|} \bar{z}^A .
\end{align}
We stress a fundamental difference between \emph{coordinate scalings}
and the previously introduced \emph{physical scalings}: coordinate
scalings are induced from the action of the diffeomorphism group, while
the physical ones are not.

\subsection{Equivariant and isotropic tensors}\label{sec:iso-tens}
In this section, we present some basic facts about \emph{equivariant} maps
between spaces that carry certain representations of $GL(n)$.

In particular, consider the space $B_n$ of bilinear forms on $\R^n$, and
the natural linear action of $GL(n)$ thereon. The subset $L_n \sso B_n$
of non-degenerate bilinear forms of Lorentzian signature $(-+{\cdots}+)$
is invariant and hence inherits an action of $GL(n)$ itself.  If $\eta
\in L_n$ is the \textbf{canonical Lorentzian form}, defined by the matrix
$\mathrm{diag}(-1,1,\ldots, 1)$ referring to the canonical basis of
$\R^n$, the subgroup $O(1,n-1)\subset GL(n)$ is defined as the {\em
isotropy group} of $\eta$.  We could also restrict the action on $L_n$
to the subgroup $GL^+(n) \sso GL(n)$ of \emph{orientation preserving}
transformations.  With this choice,  the isotropy group of $\eta$
turns out to be $SO(1,n-1)= O(1,n-1) \cap GL^+(n)$.

\begin{rem}
$L_n$ consists of a single orbit and is in fact isomorphic to the
homogeneous space $GL(n)/O(1,n-1)$. Similarly, $L_n$ is also isomorphic
to the homogeneous space $GL^+(n)/SO(1,n-1)$. The fact that the action
of $GL(n)$ (resp.~$GL^+(n)$) is transitive on $L_n$ implies, as a
general well-known fact, that the isotropy group of any  $g\in L_n$ is
isomorphic to $O(1,n-1)$ (resp.~$SO(1,n-1)$).
\end{rem}

\begin{defn}\label{defT}
Let $M^p_n$ be the space of $p$-multilinear forms on $\R^n$ and consider
the natural linear action of $GL(n)$ thereon. Let $T$ be a
finite-dimensional real vector space carrying a representation of
$GL(n)$.
\begin{itemize}
\item[\textbf{(a)}] $T$ is a \textbf{(covariant) tensor representation}
if it is the restriction of the action of $GL(n)$ on $M^p_n$ with
respect to some linear embedding $T\hookrightarrow M^p_n$ as an
invariant subspace. We call $p$ the tensor \textbf{rank} of $T$.
\item[\textbf{(b)}] $T$ a \textbf{(covariant) tensor density
representation} if $T$ is as in (a) but the action of $GL(n) \ni u
\mapsto \rho(u)$ on $T$ is given by a tensor representation up to a
multiplication by $\left|\det u\right|^s$, where $s$ is the tensor
\textbf{weight} of $T$.
\end{itemize}
\end{defn}

Of course, we obtain similar definitions by substituting $GL^+(n)$ for
$GL(n)$, and also $O(1,n-1)$ or $SO(1,n-1)$, when a particular
Lorentzian bilinear form $g$  is fixed. Of course, in the case of
$O(1,n-1)$ and $SO(1,n-1)$, there is no distinction between
\emph{tensor} and \emph{tensor density} representations.

Finally, it is useful to consider the one point space ${*} \cong \R^0$
with the trivial action of $GL(n)$ or any of its subgroups thereon.

\begin{defn}
Let $X$ and $Y$ be spaces carrying respective actions $\rho^{(X)}$ and
$\rho^{(Y)}$ of the group $G$. A map $f\colon X \to Y$ is  said to be
\textbf{equivariant} if it commutes with the action of $G$:
\begin{equation}
	f\circ\rho^{(X)}_u  = \rho^{(Y)}_u \circ f
	\quad\text{for every $u\in G$}\:.
\end{equation}
Consider the special case where  $X:={*}$, $Y:= T$ as in (a) in
Definition \ref{defT}, and $G:= O(1,n-1)$. The image of an equivariant
map $* \to T$  is called an \textbf{$O(1,n-1)$-isotropic tensor}.  The
space of $O(1,n-1)$-isotropic tensors in $T$ will be denoted by
${\I}_T$.

An \textbf{$SO(1,n-1)$-isotropic tensor} is defined similarly, replacing
$O(1,n-1)$ by $SO(1,n-1)$ everywhere. The space of $SO(1,n-1)$-isotropic
tensors in $T$ will be denoted by $\tilde{\I}_T$.
\end{defn}

\begin{rem} \strut

\begin{itemize}
\item[\textbf{(1)}] The embedding  $T\hookrightarrow M^p_n$ is an
evident example of equivariant map for $GL(n)$ (and every subgroup)  by
definition.
\item[\textbf{(2)}]  As $f\colon * \to T$ is completely defined by its
image $f(*) = t \in T$  the definition states that a tensor $t \in T$ is
isotropic if it is \emph{invariant} under the relevant action of
$O(1,n-1)$ (or $SO(1,n-1)$) on $T$.
\item[\textbf{(3)}] The space of isotropic tensors for different
Lorentzian bilinear forms are  clearly isomorphic.
\end{itemize}
\end{rem}

It is well known that the subspaces of isotropic tensors $\I_T \sso T$
and  $\tilde{\I}_T \sso T$ can be fully characterized as in the
proposition below. In the following, $\epsilon \in M^n_n$ denotes the
\textbf{canonical Levi-Civita tensor}, that is, the  fully
anti-symmetric form uniquely fixed by the value of its component
$\epsilon_{1\cdots n} = 1$, with respect to  the canonical basis of
$\R^n$. Also, $\I^p_{n} \sso M^p_n$ will denote the subspace spanned by
all possible tensor products of the canonical Lorentzian form $\eta \in
L_n$ that create a $p$-multilinear form. More precisely, $\I^p_n$ is
spanned by elements of the form
\begin{equation}
	(\eta_\sigma)_{i_1 i_2 \cdots i_{p-1} i_p}
		= \eta_{\sigma(i_1)\sigma(i_2)}
			\cdots \eta_{\sigma(i_{p-1})\sigma(i_p)} ,
\end{equation}
where $\sigma\in S_p$ is any permutation. Similarly, $\tilde{\I}^p_{n}
\sso M^p_n$ denotes the subspace spanned by all possible tensor products
of $\eta$ and $\epsilon$ that create a $p$-multilinear form.

\begin{prop}\label{prp:iso-tens}
Given a real vector space $T$ carrying a tensor representation of
$GL(n)$ and identifying $T$ with its image with respect to the embedding
$\alpha: T\hookrightarrow  M^p_n$, the following facts hold.\\
\textbf{(a)} The subspace  $\I_T \sso T$ is given by $\I_T \cong
\alpha(T)\cap \I^p_{n}$.\\
\textbf{(b)} The subspace  $\tilde{\I}_{T} \sso T$ is given by
$\tilde{\I}_{T} \cong \alpha(T) \cap \tilde{\I}^p_{n}$.
\end{prop}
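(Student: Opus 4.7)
The plan is to reduce the characterization of isotropic tensors in $T$ to the analogous characterization inside the ambient space $M^p_n$, and then invoke the classical first fundamental theorem of invariant theory for the (pseudo-)orthogonal group.

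First, by Remark (2) following Definition~\ref{defT}, the subspaces $\I_T$ and $\tilde\I_T$ coincide with the fixed-point subspaces $T^{O(1,n-1)}$ and $T^{SO(1,n-1)}$ respectively. By Remark (1), the embedding $\alpha\colon T\hookrightarrow M^p_n$ is $GL(n)$-equivariant and hence equivariant for any subgroup, in particular for $O(1,n-1)$ and $SO(1,n-1)$. Since $\alpha$ is linear and injective, it identifies each fixed-point subspace of $T$ with $\alpha(T)$ intersected with the corresponding fixed-point subspace of $M^p_n$. Thus the proposition reduces to the two ambient identifications $(M^p_n)^{O(1,n-1)} = \I^p_n$ and $(M^p_n)^{SO(1,n-1)} = \tilde\I^p_n$. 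The inclusions $\supseteq$ are immediate from the definitions, since $\eta$ is $O(1,n-1)$-fixed by construction (so every tensor product $\eta_\sigma$ is fixed) and $\epsilon$ is $SO(1,n-1)$-fixed (so the spanning generators of $\tilde\I^p_n$ are fixed as well).

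For the reverse inclusions I would pass to the complexification. Since $O(1,n-1)$ is a real form of the complex orthogonal group $O(n,\mathbb{C})$ (the isotropy group of the complexified bilinear form $\eta_\mathbb{C}$), and such a real form is Zariski-dense in its complexification, a multilinear form in $M^p_n$ is $O(1,n-1)$-invariant if and only if its complexification is $O(n,\mathbb{C})$-invariant. I would then invoke Weyl's classical first fundamental theorem for $O(n,\mathbb{C})$, which states that the $O(n,\mathbb{C})$-invariants in $(\mathbb{C}^n)^{\otimes p}$ are spanned by tensor products of $\eta_\mathbb{C}$; restricting to the real locus recovers $\I^p_n$. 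The analogous statement for $SO(n,\mathbb{C})$ additionally allows factors of the Levi-Civita tensor, which reduce via the standard $\epsilon\cdot\epsilon$ contraction identity to at most one occurrence, yielding $\tilde\I^p_n$.

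The main obstacle is the reverse inclusion $\subseteq$, which is a bona fide theorem of classical invariant theory rather than a routine computation. A self-contained derivation would be a substantial detour, so my plan is to cite a standard reference on the first fundamental theorem for the (pseudo-)orthogonal group for that step, while the equivariance-based reduction in the first paragraph and the easy direction $\supseteq$ are straightforward unpackings of the definitions of $\I_T$, $\tilde\I_T$ and of the groups $O(1,n-1)$, $SO(1,n-1)$.
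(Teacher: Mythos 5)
Your proposal is correct and follows essentially the same route as the paper, which likewise gives no self-contained argument but reduces the statement to the classical first fundamental theorem of invariant theory for the (pseudo-)orthogonal group and cites the literature for that core step. Your added details are sound: the equivariance/injectivity reduction to $\alpha(T)\cap (M^p_n)^{G}$ is a routine unpacking, and the complexification plus Zariski-density argument is a legitimate way to make precise the paper's claim that the elementary $O(n)$/$SO(n)$ proof ``generalizes straightforwardly'' to $O(1,n-1)$ and $SO(1,n-1)$.
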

An elementary proof of such a characterization of $O(n)$- and
$SO(n)$-isotropic tensors can be found in~\cite{ado}, which generalizes
straightforwardly to $O(1,n-1)$ and $SO(1,n-1)$. More generally, this
kind of result is sometimes known as \emph{first fundamental theorem} of
\emph{invariant theory}~\cite{weyl,gw} for the corresponding group.

\begin{defn}\label{def:eqtd}
Given a real vector space $T$ with a tensor density representation of
$GL(n)$ (resp.~$GL^+(n)$) and the natural representation on $L_n$, we
will refer to an equivariant map $t: L_n \to T$ as a
\textbf{$GL(n)$-equivariant tensor density}, and similarly for
\textbf{$GL^+(n)$-equivariant tensor densities}.

The space of $GL(n)$-equivariant tensor densities will be denoted by
$\E_T$ and the space of $GL^+(n)$-equivariant tensor densities will  be
denoted by $\tilde{\E}_T$.
\end{defn}

\begin{rem}
Even if the functions belonging to $\E_T$ and $\tilde{\E}_T$ are not
required to be linear, these spaces enjoy  a natural structure of  {\em
real vector space}, just in view of the fact that the equivariant tensor
densities are maps with values in the real vector space $T$.
\end{rem}

The following lemma characterizes the space of equivariant tensor
densities (in the sense of equivariant maps) in terms of isotropic
tensors (in the sense of the subspaces $\I_T \sse T$
(resp.~$\tilde{\I}_T \sse T$) defined earlier).

\begin{lem}\label{lem:equiv-tens}
Let $T$ be a finite-dimensional real vector space carrying a tensor
density representation of $GL(n)$, resp.~$GL^+(n)$, and assume that
$L_n$ is equipped with the natural representation.

\textbf{(a)} The space of $GL(n)$-equivariant, resp.~$GL^+(n)$-equivariant,
tensor densities is isomorphic the subspace of $O(1,n-1)$-isotropic
tensors, resp.~$SO(1,n-1)$-isotropic tensors, in $T$. More precisely,
the isomorphism is defined by
\begin{equation}
	\E_T  \ni t \mapsto t(\eta) \in \I_{T}
	\quad
	(\text{resp.} \quad
		\tilde{\E}_{T}  \ni t \mapsto t(\eta) \in \tilde{\I}_{T}).
\end{equation}

\textbf{(b)}  For a given $t \in \E_T$, we have
\begin{equation}
	t(g) = \left|\det g\right|^s P(g)\quad \text{for all}\quad g \in L_n\:,
\end{equation}
where $P(g)$ is a homogeneous $T$ valued polynomial in the components of
$g$ (with respect to the canonical basis of $\R^n$), and $s$ is some real
number fixed by  weight  of the tensor density representation
of $GL(n)$.

\textbf{(c)} For a given $t \in \tilde{\E}_T$, we have
\begin{equation}
	t(g) = \left|\det g\right|^s P(g,\eps(g))\quad
		\text{for all}\quad g \in L_n\:,
\end{equation}
where $P(g,\eps(g))$ is a homogeneous $T$ valued polynomial in the
components of $g$ and the components%
	\footnote{The homogeneous degree of $P(g,\eps(g))$ counts the
	components of $g$ with degree $2$ and the components of $\eps(g)$ with
	degree $n$.} %
of $\eps(g) := \sqrt{\det g}\, \epsilon$ (with respect to the
canonical basis of $\R^n$ in both cases), and $s$ is some real number
fixed by the weight of the tensor density representation of $GL^+(n)$.
\end{lem}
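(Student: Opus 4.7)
Plan for Lemma~\ref{lem:equiv-tens}. For part (a), I would use the fact, noted in the remark preceding Definition~\ref{defT}, that $L_n \cong GL(n)/O(1,n-1)$ as a $GL(n)$-homogeneous space via the orbit map $u \mapsto u\cdot \eta$. Any equivariant $t\in \E_T$ satisfies $t(\eta) = t(u\cdot \eta) = \rho(u)\, t(\eta)$ for each $u \in O(1,n-1)$, hence $t(\eta) \in \I_T$. Conversely, given $t_0 \in \I_T$, I would set $t(g) := \rho(u)\, t_0$ whenever $u\cdot \eta = g$; transitivity provides such a $u$, and isotropy of $t_0$ ensures the definition is independent of this choice (two such $u$ differ by an element of $O(1,n-1)$). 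The assignments $t\mapsto t(\eta)$ and $t_0\mapsto t$ are manifestly linear and mutually inverse, yielding $\E_T \cong \I_T$. The same argument, with $GL^+(n)$ and $SO(1,n-1)$ in place of $GL(n)$ and $O(1,n-1)$, gives $\tilde{\E}_T \cong \tilde{\I}_T$.

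For (b), I would feed the explicit form of isotropic tensors supplied by Proposition~\ref{prp:iso-tens}(a) into the isomorphism of (a), so that $t(\eta) = \sum_\sigma c_\sigma\, \eta_\sigma$ is a linear combination of permuted tensor products of copies of $\eta$. Splitting the density representation as $\rho(u) = |\det u|^{s_0} \rho_0(u)$ with $\rho_0$ a tensor representation of weight $s_0$, and picking $u \in GL(n)$ with $u\cdot\eta = g$, equivariance gives
\[
	t(g) = \rho(u)\, t(\eta) = |\det u|^{s_0} \sum_\sigma c_\sigma\, g_\sigma
		= |\det g|^{s_0/2}\, P(g),
\]
where $P(g) := \sum_\sigma c_\sigma\, g_\sigma$ is a homogeneous $T$-valued polynomial in the components of $g$, and where the identity $|\det u|^{s_0} = |\det g|^{s_0/2}$ follows from $g = u\cdot\eta$ together with $|\det \eta| = 1$. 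Setting $s := s_0/2$ proves the stated formula.

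For (c), the argument is identical up to using Proposition~\ref{prp:iso-tens}(b) in place of (a): $t(\eta)$ is now a linear combination of permuted products of copies of $\eta$ and of the canonical Levi-Civita tensor $\epsilon$. Under $\rho_0(u)$ with $u \in GL^+(n)$, each $\eta$ factor becomes $g$ and $\epsilon$ becomes $(\det u)\,\epsilon$. Since $\det u > 0$ and $|\det g| = (\det u)^2$, we have $(\det u)\,\epsilon = \sqrt{|\det g|}\,\epsilon$, which matches $\eps(g)$ up to the sign/square-root convention implicit in the paper's notation $\eps(g) := \sqrt{\det g}\,\epsilon$ for a Lorentzian $g$. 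Consequently $\rho_0(u)\, t(\eta)$ is a homogeneous polynomial $P(g, \eps(g))$ in the components of $g$ and $\eps(g)$, and the density prefactor is handled exactly as in (b).

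The proof is essentially bookkeeping layered on top of Proposition~\ref{prp:iso-tens}, which supplies the polynomial form of the isotropic tensors and so does all the conceptual work. The only genuine pitfall I anticipate is consistent handling of the conventions for the $GL(n)$-action on $L_n$ versus on $T$, and of the sign convention in the definition of $\eps(g)$ for Lorentzian signature; once these are fixed, no substantive obstacle remains.
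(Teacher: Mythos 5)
Your proposal is correct and follows essentially the same route as the paper: part (a) via the orbit--stabilizer structure of $L_n \cong GL(n)/O(1,n-1)$ with the mutually inverse assignments $t\mapsto t(\eta)$ and $t_0 \mapsto \rho(u)t_0$, and parts (b), (c) by feeding Proposition~\ref{prp:iso-tens} into this isomorphism and extracting the determinant prefactor from $|\det u|^2 = |\det g|$. The only cosmetic difference is that you split off the density weight as $\rho(u)=|\det u|^{s_0}\rho_0(u)$ explicitly, whereas the paper carries the factor $|\det u_g|^{r}$ through the embedding $\alpha$; the computation is the same.
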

\begin{rem}\label{rem:eqtd-hom}
Since, in view of this Lemma, an equivariant tensor density $t(g)$ is a
homogeneous function, say of degree $k$, of $g$ up to a power of
$\left|\deg g\right|$, it could always be rewritten as
\begin{equation}
	t(g) = \left|\det g \right|^{\frac{k}{n}}
		t( \left|\det g\right|^{-\frac{1}{n}} g) .
\end{equation}
This observation will be later useful in the proof of
Theorem~\ref{thm:main}.
\end{rem}
\begin{proof}
We deal with the $GL(n)$-equivariant case, the $GL^+(n)$-equivariant
case being completely analogous. The action of $GL(n)$ on both $L_n$ and
$T$ is linear, so we denote it as $u\cdot x$, for $u\in GL(n)$ and $x$
in either $L_n$ or $T$.

The first crucial observation, as $L_n$ consists of a single orbit of
$GL(n)$, is that equivariance allows us to fully fix $t\colon L_n \to T$
provided that we know its value on $\eta \in L_n$, by the formula
\begin{equation}\label{eq:tens-isom}
	t(g) = t(u_g\cdot \eta) = u_g\cdot t(\eta) ,
\end{equation} 
for any $g\in L_n$ and $u_g \in GL(n)$ such that $g = u_g \cdot \eta$.
The second crucial observation is that, to make sure that the values of
$t$ are assigned consistently, $t(\eta)$ must be invariant under the
isotropy subgroup of $\eta$, namely $O(1,n-1)$. In other words,
$t(\eta)$ must belong to $\I_{T}$, with respect to the induced
representation of $O(1,n-1)$ on $T$. The formula~\eqref{eq:tens-isom}
clearly defines mutually inverse maps $\E_T \to \I_T$ and $\I_T \to
\E_T$, thus establishing the isomorphism $\E_T \cong \I_T$ claimed in
part (a).

Let us now prove part (b). Fix an (equivariant) embedding $\alpha\colon
T\to M^p_n$. Since $t(\eta)$ is an element of $\I_{T}$, from the
characterization of isotropic tensors in Proposition~\ref{prp:iso-tens},
it must be of the form
\begin{equation}
	t(\eta) = \alpha^{-1}
		\left(\sum_{\sigma\in S_p} c^\sigma \eta_{\sigma} \right) ,
\end{equation}
where $c^\sigma$ are some scalar coefficients. Then for any $g\in L_n$
and a corresponding  $u_g\in GL(n)$ such that $g = u_g \cdot \eta$,
\begin{align}
\notag
	t(g) = u_g \cdot t(\eta)
		&= \alpha^{-1} \left( \left| \det u_g \right|^r
				\sum_{\sigma\in S_p} c^\sigma (u_g\cdot \eta_\sigma) \right) \\
		&= \left| \det g \right|^{r/2} \alpha^{-1} \left(
				\sum_{\sigma\in S_p} c^\sigma g_\sigma \right) , \label{yestag}
\end{align}
where $r$ is the density weight of the representation $T$ and where we
have used the notation
\begin{equation}
	g_\sigma = g_{\sigma(i_1)\sigma(i_2)} \cdots g_{\sigma(i_{p-1})\sigma(i_p)}
\end{equation}
for the corresponding monomial on $L_n$ in terms of the components of
$g$ with respect to the canonical basis on $\R^n$. Clearly, the above
formula can be rewritten as $t(g) = \left| \det g \right|^s P(g)$, with
$s := r/2$. We observe that, from~\eqref{yestag}, that  $P$ is an
homogeneous polynomial (of degree $p/2$) in the components of the
metric, completing the proof of part (b).

The proof of (c) is strictly analogous, taking into account the
identity $u_g\cdot \epsilon = \eps(g)$, for any $u_g \in GL^+(n)$ such
that, $u_g\cdot \eta = g$.
\end{proof}

\section{Characterization of Finite Renormalizations of Wick Polynomials}
\label{sec:wick}

We generalize the discussion of local covariant fields
from~\cite{hw-wick}, where only metric dependence was allowed, to a more
general context where other \emph{background fields} are allowed in
addition to the metric  $\g$ on a spacetime $M$. In order to simplify
the presentation, we will restrict the extra background fields to two
scalar functions $m^2$ and $\xi$, which appear in the description of a
scalar quantum field.

Generally speaking, background fields are described by sections $\h$ of
suitable bundles $HM \to M$ over the manifolds $M$ we consider.
Covariance requires us to deal with all such bundles \emph{simultaneously}
and \emph{coherently}. In other words  we deal with  an assignment of a
bundle $HM \to M$ to \emph{every} manifold $M$ and require that  any
embedding $\chi \colon M \to M'$ must give rise to a corresponding
well-defined pullback map $\chi^*\colon \Secs(HM') \to \Secs(HM)$.  This
picture can be phrased properly with the language of category theory by
means of the notion of \emph{natural bundle}. Before giving the
definition, we would like to require a bit more geometric structure from
the bundles of background fields that interest us. A bundle $F\to M$ is
\textbf{dimensionful} if there it has an action $\R^+ \times F \to F$ of
the multiplicative group $\R^+$ of positive real numbers, called
\emph{(physical) scaling}, which acts by a diffeomorphisms that fix each
fiber of $F\to M$. Any vector bundle is automatically dimensionful, by
virtue of having a well-defined multiplication by scalars on its fibers,
although we will not always use this particular scaling action. To avoid
confusion, we should mention that a \emph{dimensionless} bundle would be
a special kind of dimensionful bundle, where scaling transformations act
trivially.

A \textbf{natural (dimensionful) bundle} is a functor $H\colon \Man\to
\Bndl$ from the category of smooth manifolds (where objects are
connected, have fixed dimension $n$ and morphisms are embeddings, which
are necessarily local diffeomorphism) to the category of dimensionful
smooth bundles (where morphisms are bundle maps, i.e.,\ fiber
preserving, equivariant with respect to scaling), such that a morphism
$\chi \colon M\to M'$ induces a morphism $H\chi \colon HM \to HM'$ that
is itself a local diffeomorphism. The required pullback $\chi^*\colon
\Secs(HM') \to \Secs(HM)$ is then implicitly defined by $\h' \circ \chi
=  H\chi \circ (\chi^* \h')$, when $\h' \in \Secs(HM')$. The
equivariance of the morphism $H\chi$ ensures that scaling commutes with
the pullback, $\chi^*(\h'_\lambda) = (\chi^*\h')_\lambda$, for $\lambda \in
\R^+$.

One elementary example of a natural bundle is the functor $M \mapsto
\R\times M$, the trivial scalar bundle, whose sections we call scalar
fields, with scaling being simple multiplication. Another relevant
example is $M \mapsto \mathring{S}^2T^*M$, the bundle of Lorentzian
metrics; we will denote a section of $\mathring{S}^2T^*M \to M$ by $\g$.
Other examples are are $M\mapsto T^*M$ and $M\mapsto \Lambda^2 M$, the
cotangent bundle and the bundle of $2$-forms, whose sections could be
interpreted as background electromagnetic fields, in the vector
potential or field strength forms.  All of these bundles are
dimensionful, by virtue of being vector bundles, with the exception of
$\mathring{S}^2T^*M$, which inherits a scaling action from being
considered as a scaling invariant sub-bundle of the vector bundle
$S^2T^*M$.

\begin{rem}\label{remarkH}
In the rest of the paper, focussing  on the theory of a \emph{real
quantum scalar field}, $\varphi$,  we make a more precise choice of the
natural functor $H$. We suppose that the manifolds of the category
$\Man$ are connected, $n$-dimensional (for a fixed $n\geq 2$), and the
functor $H$ assigns $M\mapsto HM = \mathring{S}^2T^*M\times \R\times
\R$, with a morphism $\chi \colon M\to M'$ inducing the standard tensor
push-forward $H\chi = \chi_* \colon HM \to HM'$. Then, the sections
$M\to HM$ are triples $\h = (\g, m^2, \xi)$, scaling as $(\g, m^2, \xi)
\mapsto (\lambda^{-2}\g, \lambda^2 m^2, \xi)$, always consisting of:

 (a) a \emph{Lorentzian metric}, $\g$, making $(M,\g)$ a \emph{(smooth)
 Lorentzian spacetime} of fixed dimension $n\geq 2$,

 (b) the pair of real scalar fields $m^2$  and  $\xi$ over $M$, with the
 respective  physical meaning of the \emph{squared mass} of the scalar
 field and a factor describing the \emph{coupling with the scalar
 curvature}. 
\end{rem}

We stress that, exactly as in \cite{hw-wick}, we assume that the {\em
parameters} $m^2$ and $\xi$ are actually \emph{functions} on $M$.
Quantum field theory in curved spacetime is well-defined for both
constant or variable $m^2$ and $\xi$. There is of course no obstacle in
restricting them to constant functions, as we note in
Remark~\ref{rem:m2xi-param}. Moreover, as in \cite{hw-wick}, $m^2$ and
$\xi$ are allowed to have any real value.

\begin{defn}
Let us fix the natural bundle  $H\colon \Man \to \Bndl$ as in
Remark~\ref{remarkH}. A \textbf{background field} is a section $\h\colon
M \to HM$ and we call the pair $(M,\h)$ a \textbf{background geometry},
provided $\h = (\g, m^2, \xi)$ is such that $(M,\g)$ is a
time-orientable globally hyperbolic spacetime.  Furthermore we define
the following categories.

\textbf{(a)} $\BkgG$ is the \textbf{category of background geometries},
having time-oriented background geometries as objects and morphisms
given by smooth embeddings $\chi\colon M\to M'$ that preserve the
background fields, $\chi^* \h = \h'$ on $M'$, the time orientation, and
causality, meaning that every causal curve between $\chi(p)$ and
$\chi(q)$ in $M'$ is the $\chi$-image of a causal curve between $p$ and
$q$ in $M$. 

\textbf{(b)}  $\BkgG^+$ is  the \textbf{category of oriented background
geometries} having  oriented and time-oriented background geometries as
objects and morphisms as in $\BkgG$, but also required to preserve the
spacetime orientation.

Since the natural bundle $H$ is dimensionful, scaling transformations
also act on these categories by $(M,\h) \mapsto (M,\h_\lambda)$, for any
$\lambda \in \R^+$, which by equivariance of the pullback of background
fields act as functors, $\BkgG \to \BkgG$ and $\BkgG^+ \to \BkgG^+$
respectively.
\end{defn}

To describe the algebras of observables on background geometries, we
need the notion of a \emph{net of algebras} (or {\em pre-cosheaf of
algebras}). 

\begin{defn}\label{def-loc-cov-field}
A \textbf{net of algebras (of observables)} is an assignment of a
complex unital $*$-algebra $\W(M,\h)$ for every background geometry
$(M,\h)$ in $\BkgG$ together with an assignment of an injective unital
$*$-algebra homomorphism $\iota_\chi \colon \W(M,\h) \to \W(M',\h')$ for
every morphism in $\BkgG$, respecting compositions. In other words $\W\colon\BkgG \to \Alg$ is a functor
from the category of background geometries into the category of {(complex)
unital $*$-algebras} whose morphisms are injective unital $*$-algebra
homomorphisms. Further, we require
that $\W$ respects \emph{scaling} and the \emph{time slice axiom}.
\begin{itemize}
\item[\textbf{(i)}]
	Scaling transformations $(M,\h) \mapsto (M,\h_\lambda)$ result in
	$*$-algebra isomorphisms $\sigma_\lambda\colon \W(M,\h) \to
	\W(M,\h_\lambda)$. Scaling transformations act as natural isomorphisms
	$\sigma_\lambda \colon \W \to \W_\lambda$ between the $*$-algebra
	valued functors $\W$ and $\W_\lambda$, the latter defined by
	$\W_\lambda(M,\h) = \W(M,\h_\lambda)$.
\item[\textbf{(ii)}]
	Given a morphism $\chi\colon (M',\h') \to (M,\h)$ of background
	geometries, if the image $\chi(M') \sse M$ contains a Cauchy surface
	for $(M,\g)$, then the induced $*$-homomorphism $\iota_\chi \colon
	\W(M',\h') \to \W(M,\h)$ is a $*$-isomorphism.
\end{itemize}
We refer to a functor $\W \colon \BkgG^+ \to \Alg$ with the same
properties as a net of algebras as well.
\end{defn}

The algebras of observables $\W(M,\h)$ are intended to be the algebras
of Wick products, whose construction and the fact that they satisfy all
the desired properties are discussed in detail in~\cite[Sec.~2]{hw-wick},
cf.\ their Lemma~4.2 in particular, where they define the scaling
isomorphism $\sigma_\lambda$. However, we shall not touch
upon these details and rely only on the properties of the $\W$ functor
as axiomatized above and also below in the definition of Wick powers.

Having defined a net of algebras, respecting local covariance and
scaling, we are in a position to state our definition of a \emph{locally
covariant (almost) homogeneous} quantum field which somewhat extends and
generalizes~\cite[Def.~3.2]{hw-wick}.
\begin{defn}\label{def1}
A \textbf{locally covariant scalar quantum field} $\Phi$ is an
assignment of an algebra-valued distribution%
	\footnote{For every Hadamard quasifree  state $\omega$ over $W(M,\g)$
	the map $\D(M) \ni f \mapsto \omega(\Phi(f))$ is a distribution in the
	proper sense. A weaker requirement allowing to smear fields  with
	distributions of a suitable wavefront set can be given exploiting the
	so called H\"ormander pseudotopology \cite{hw-wick}, but it is
	irrelevant for this work.} %
$\Phi_{(M,\h)} \colon \D(M) \to \W(M,\h)$ to each background geometry
$(M,\h)$ that satisfies the following identity for each morphism
$\chi\colon (M',\h'=\chi^*\h) \to (M,\h)$:
\begin{equation}\label{loc-cov}
	\iota_\chi (\Phi_{(M,\chi^*\h)}(f))
		= \Phi_{(M,\h)}(\chi_*f)  \:, \quad \text{for any $f \in \D(M')$.}
\end{equation}
In other words, $\Phi$ is a natural transformation $\Phi\colon \D \to
\W$ between the functor of test functions%
	\footnote{Note that $\D\colon \Man \to \LCV$ is a (covariant) functor
	from manifolds to locally convex topological vector spaces. It assigns
	the space of complex valued test functions $\D(M)$ to a manifold $M$
	and maps a morphism $\chi\colon M\to M'$ to the induced
	\emph{extension by zero} map $\chi_*\colon \D(M) \to \D(M')$.} %
and the net of algebras of observables (thus, another name for $\Phi$
could be a \emph{natural} scalar quantum field).
\end{defn}

The reason that a quantum field $\Phi$, even on a fixed spacetime
$(M,\g)$, is associated with an algebra-valued distributions is the
usual heuristic according to which pointlike fields $\Phi(x)$ are too
singular to be evaluated directly, while its smearing with a test
function $f\in \D(M)$,
\begin{equation}
	\Phi(f) = \int_M \Phi(x) f(x) \, dg(x) ,
\end{equation}
where $dg(x)$ is the volume form induced by the metric $\g$, is a
legitimate observable. Where appropriate, we will use the distributional
notation $\Phi(x)$ as well.

Note the most trivial example of a locally covariant scalar quantum
field, which we may call the \emph{unit $c$-number field} $\mathbf{1}$,
defined by the formula
\begin{equation}\label{eq:c-num}
	\mathbf{1}_{(M,\h)} (f) := 1 \int_M f(x)\, dg(x) ,
\end{equation}
where $1\in \W(M,\h)$ is the algebra unit. If $C[\h]$ is any function
that maps a background geometry on $M$ to a distribution on $M$ and
satisfies the identity $\chi^*C[\h] = C[\chi^*\h]$ for any morphism
$\chi \colon (M',\chi^*\h) \to (M,\h)$, then the product $C\mathbf{1}$
functor defined by the formula
\begin{equation}
	C\mathbf{1}_{(M,\h)} (f) = 1 \int_M C[\h](x) f(x) \, dg(x) ,
\end{equation}
is also a locally covariant scalar quantum field that we refer to as a
\emph{$c$-number} field.

\begin{rem}\label{rem-loc-cov} \strut
\begin{itemize}
\item[\textbf{(1)}]  In \cite{hw-wick}, $\h$ is nothing but the
Lorentzian metric of the  spacetime and the parameters $m^2$ and $\xi$
appearing in the definition of  the quantum fields generated by KG
fields are considered external parameters. Here instead  we explicitly
include them in $\h$.  It is very easy to prove that the concrete
locally covariant quantum fields appearing in \cite{hw-wick} (scalar KG
field and associated  Wick polynomials, time-ordered Wick polynomials
and their derivatives) satisfy our somewhat more general definition
locally covariant quantum fields.
\item[\textbf{(2)}] Definition \ref{def-loc-cov-field} includes two
distinct though related  notions: \emph{locality} and \emph{covariance},
both illustrated by the condition~\eqref{loc-cov}. Locality corresponds
to the case where $\chi$ describes the inclusion $\chi\colon  M \sso
M'$, while covariance corresponds to an arbitrary allowed $\chi$.
\end{itemize}
\end{rem}

In~\cite[Sec.~2]{hw-wick}, the algebras $\W(M,\h)$ are constructed
explicitly for the case of the quantization of a Klein--Gordon scalar
field $\varphi$, with mass $m^2$ and coupled with the scalar curvature
through the constant $\xi$, whose equation of motion is
\begin{equation}\label{eq:KG}
	\square_\g \varphi - m^2 \varphi - \xi R_\g \varphi = 0  .
\end{equation}
In that context, the basic example of a locally covariant scalar field
is the Klein--Gordon field $\varphi$ itself. Since we are not dealing
with such explicit constructions, we simply encapsulate the needed
properties of $\varphi$ in the definition below.
\begin{defn}\label{def:lin-field}
Given a net of algebras $\W$ on $\BkgG$ or $\BkgG^+$, a \textbf{linear
quantum scalar field} $\varphi$ is a locally covariant quantum scalar
field that satisfies the following \textbf{kinematic completeness}
property: For any $(M,\h)$, an element $a\in \W(M,\h)$ satisfies
$[a,\varphi_{(M,\h)}(f)] = 0$ for every $f\in \D(M)$ iff $a = \alpha\, 1$,
with $\alpha\in \mathbb{C}$ and $1$ the unit element of the algebra.
\end{defn}
The above definition is rather minimal and, certainly, the linear KG
field as defined in~\cite[Sec.~2]{hw-wick} satisfies several additional
properties, but we have omitted most of them. Below, we will define
\emph{Wick powers} $\varphi^k$, which will include the linear field
$\varphi^1 := \varphi$ as a special case. Within the definition to
follow, we will require further properties to hold for $\varphi^k$ for
each $k$, including $k=1$, thus imposing further axioms also on the
linear field and bringing our axiomatization closer to that
of~\cite{hw-wick}. Thus, to avoid some repetition, we find it more
economical to state these axioms in a way that is uniform in $k$. As it
is the main goal of this work to remove it, the \emph{analyticity} axiom
will not appear below and the \emph{continuity} axiom will be suitably
modified to compensate. On the other hand, we will not make use of the
\emph{on-shell} condition, $\varphi((\square_\g - m^2 - \xi R_\g)f) =
0$, though the equations of motion appear implicitly through the
\emph{time slice axiom} in Definition~\ref{def-loc-cov-field}. We have
excluded it because it plays no role in our analysis below, since it is
restricted to simple Wick powers. However, the on-shell condition will
have to be taken into account when expanding the analysis to more
general Wick and time-ordered products that involve derivatives of
$\varphi$.

Before giving a precise axiomatic definition of Wick powers, we need to
address the technical question of how physical scalings and continuous
variations of the background geometry can be made to act on locally
covariant scalar quantum fields.

First, we address scalings. Given a locally covariant scalar quantum
field $\Phi$, we can define a new rescaled locally covariant scalar
quantum field $S_\lambda \Phi$ by the formula
\begin{equation}\label{eq:scal-phi-def}
	(S_\lambda\Phi)_{(M,\h)}(f)
		= \sigma_\lambda^{-1} \left(\Phi_{(M,\h_\lambda)}(\lambda^n f)\right) ,
\end{equation}
where $\lambda \in \R^+$, $n=\dim M$, and $\sigma_\lambda$ is the
isomorphism realizing the action of scalings on the net of algebras of
observables. The extra factor of $\lambda^n$ compensates for the fact
that integration against the test function $f$ is done with respect to
the metric volume form $dg$, which scales as $dg \mapsto \lambda^{-n}
dg$ when $\g \mapsto \lambda^{-2} \g$. Comparing this formula
with~\cite[Eq.~(48)]{hw-wick}, note that the direction of our isomorphism
$\sigma_\lambda$ is opposite. More formally, recall that $\sigma_\lambda
\colon \W \to \W_\lambda$ is a natural isomorphism between two functors
(defined on $\BkgG$ or $\BkgG^+$), while $\Phi \colon \D \to \W$ is
another natural transformation. If we similarly define the natural
linear isomorphism $\mu_\lambda \colon \D \to \D$ given by $f \mapsto
\lambda^n f$, the rescaled quantum field is given by a composition of
these natural transformations, $S_\lambda \Phi = \sigma_\lambda^{-1}
\circ \Phi \circ \mu_\lambda$. As such, we have defined a representation
$S_\lambda$ of the multiplicative group $\R^+$ on the vector space%
	\footnote{We make the tacit and harmless assumption that we are
	working with \emph{small categories}, whose objects and morphisms
	constitute sets.} %
of natural transformations $\D \to \W$, which opens the door to using
the generic notion of homogeneous and almost homogeneous elements from
Definition~\ref{def:almost-hom} to quantum fields.

Next, in order to address the continuity hypothesis, we must specify how
to identify the algebras of observables defined on different background
geometries, as was done in~\cite[Sec.~4.2]{hw-wick}. Let $(M,\h_1)$ and
$(M,\h_2)$ be two background geometries that differ only inside a
compact set $O\sso M$, recalling that $\h_i = (\g_i,m^2_i,\xi_i)$. It is
a simple fact of Lorentzian geometry that the spacetimes
$(M_\pm,\g_\pm)$, with $M_- = M \setminus J^+(O)$, $M_+ = M \setminus
J^-(O)$ and $\h_\pm = \h_1|_{M_\pm} = \h_2|_{M_\pm}$, are globally
hyperbolic spacetimes in their own right. Moreover, each of the
$(M_\pm,\h_\pm)$ contains a Cauchy surface in common with both
$(M,\h_1)$ and $(M,\h_2)$. Thus, denoting by $\chi^i_\pm \colon (M_\pm,
\h_\pm) \to (M,\h_i)$ the inclusion morphisms, the time slice axiom of
the net of the algebras observables gives us isomorphisms
$\iota_{\chi^i_\pm} \colon \W(M_\pm,\h_\pm) \to \W(M,\h_i)$. These
isomorphisms allow us to identify the algebras of observables of
background geometries $(M,\h_1)$ and $(M,\h_2)$ that differ only inside
a compact $O\sso M$ in two ways, $\tau_\mathrm{ret} , \tau_\mathrm{adv}
\colon \W(M,\h_1) \to \W(M,\h_2)$, where these $*$-algebra isomorphisms
are defined by $\tau_\mathrm{ret} = \iota_{\chi^-_2} \circ
\iota_{\chi^-_1}^{-1}$ and $\tau_\mathrm{adv} = \iota_{\chi^+_2} \circ
\iota_{\chi^+_1}^{-1}$.  Below, we will only make use of
$\tau_\mathrm{ret}$, though choosing $\tau_\mathrm{adv}$ would not have
lead to equivalent results.

Finally, we introduce an axiomatic definition of \emph{Wick powers},
which will be our main objects of interest. We deviate somewhat in our
axiomatization from the analogous one in~\cite[Sec.~4]{hw-wick} for
reasons expanded on below.

\begin{defn}[Wick powers]\label{def:wick}
Given a net of algebras $\W$ (Definition~\ref{def-loc-cov-field}) on the
category of background geometries $\BkgG$ (or $\BkgG^+$) and a
corresponding locally covariant linear scalar quantum field $\varphi$
(Definition~\ref{def:lin-field}), we define \textbf{Wick powers}
$\{\varphi^k\}$ of $\varphi$, for $k=0,1,2,\cdots$, by the following
axioms:
\begin{itemize}
\item[\textbf{(i)}] \textbf{Locality and Covariance.}
Each Wick power $\varphi^k$ is a locally covariant scalar quantum field
(Definition~\ref{def1}), which for low powers agree with $\varphi^0 =
\mathbf{1}$, the unit $c$-number field, and $\varphi^1 = \varphi$, the
linear field.
\item[\textbf{(ii)}]  \textbf{Scaling.}
Each Wick power $\varphi^k$ is almost homogeneous of degree $k(n-2)/2$
(Definition~\ref{def:almost-hom}) with respect to the action of physical
scalings $S_\lambda$ (Eq.~\eqref{eq:scal-phi-def}) on locally
covariant fields. That is, there exists an integer $l\ge 0$ and locally
covariant fields $\psi_j$ such that $S_\lambda \varphi^k =
\lambda^{k\frac{(n-2)}{2}} \varphi^k + \lambda^{k\frac{(n-2)}{2}}
\sum_{j=1}^l (\log^j \lambda) \psi_j$, where each $\psi_j$ is itself
almost homogeneous of degree $k(n-2)/2$ and order $l-j$ (as in
Eq.~\eqref{eq:almost-hom-def}). In dimension $n=4$, $(n-2)/2=1$
gives the usual scaling of a scalar field.
\item[\textbf{(iii)}]  \textbf{Algebraic.}
Each Wick power $\varphi^k$ also satisfies the following properties:
	\begin{itemize}
	\item[]  \textbf{Hermiticity:}
	\begin{equation}
		\varphi_{(M,\h)}^k(f)^* = \varphi_{(M,\h)}^k(f^*),
	\end{equation}
	where on the left $*$ denotes the corresponding operation in the
	$*$-algebra $\W$, while on the right $*$ denotes simple complex
	conjugation.
	\item[]  \textbf{Commutator expansion:}
		\begin{equation}
			[\varphi_{(M,\h)}^k(x),\varphi_{(M,\h)}(y)]
				= i k \varphi^{k-1}(x) \Delta_{(M,\h)}(x,y),
		\end{equation}
		where $\Delta_{(M,\h)}(x,y) = G^+_{(M,\h)}(x,y) - G^-_{(M,\h)}(x,y)$
		is the difference between the retarded and advanced Green functions
		for the KG equation~\eqref{eq:KG}.
	\end{itemize}
\item[\textbf{(iv)}]  \textbf{Parametrized microlocal spectrum condition.}
	Consider any background geometry $(M,\h)$ and any smooth $m$-parameter
	family $(M,\H_s)$ of compactly supported variations thereof ($m\ge
	0$). That is, $\h = \H_0$, where $\H_s(x) = \H(s,x)$ with smooth $\H
	\colon \R^m \times M \to HM$, such that $\H_s$, for any $s\in \R^m$,
	differs from $\h$ only on a compact subset $O \sso M$. Further, let us
	implicitly identify each algebra of observables $\W(M,\H_s)$ with
	$\W(M,\h)$ using the isomorphism $\tau^s_\mathrm{ret}\colon \W(M,\H_s)
	\to \W(M,\h)$ discussed earlier.

	Then, for any quasi-free Hadamard state $\omega$ on $\W(M,\h)$ with
	respect to $\varphi$ and each Wick power $\varphi^k$, the expectation
	value $\omega(\varphi^k(x))$ is a distribution on $\R^m\times M$ with
	empty wavefront set, which hence can be represented by a smooth
	function.
\end{itemize}
\end{defn}

\begin{rem}\label{rem:wick}
The above axioms for Wick powers differ in some aspects from those given
in~\cite[Sec.~4]{hw-wick}.
\begin{itemize}
\item[(a)]
Our \emph{scaling} condition, which uses Definition~\ref{def:almost-hom}, is
slight\-ly weaker than Definition~4.2 of~\cite{hw-wick}, but it will be
sufficient for our purposes. The difference is in the notion of
\emph{order} of the logarithmic terms. The `order' in
Definition~\ref{def:almost-hom} refers only to the scaling properties.
On the other hand, the `order' of a quantum field used
in~\cite[Def.~4.2]{hw-wick} refers to the number of iterated commutations
with $\varphi$ needed to annihilate that field. An inductive argument
(in $k$) shows that if a Wick power $\varphi^k$ satisfied Definition~4.2
of~\cite{hw-wick}, then the same Wick power satisfies also our
Definition~\ref{def:almost-hom}(b).
\item[(b)]
The technical \emph{continuity} and \emph{analyticity} conditions were
replaced by a strengthened version of the \emph{microlocal scaling
condition}. Using the notation from the last point of
Definition~\ref{def:wick}, the continuity condition in~\cite{hw-wick}
required $\tau^s_\mathrm{ret} \circ \varphi^k_{(M,\H_s)}(f) \in \W(M,\h)$
to be continuous in $s$ for any test function $f\in \D(M)$ and any
$1$-parameter family $\H(s,x)$, with the topology on the algebra
$\W(M,\h)$ left implicit. The analyticity condition required the
expectation values $\omega^s(\varphi^k_{(M,\H_s)}(x))$ to be analytic in
$(s,x)$ whenever the $1$-parameter family $\H(s,x)$ is analytic (of
course relaxing the requirement that the variations of the background
geometry must vanish outside a compact set) and is accompanied by a
$1$-parameter family of states $\omega^s$ analytic in $s$, with
$\omega^s$ quasi-free and Hadamard on $\W(M,\h)$ and both the linear
structure and topology on the appropriate space of states left implicit.
\item[(c)]
Removing the analyticity condition is our main goal, so that change
should not be surprising. In the proof of Theorem~\ref{thm:main} we will
appeal to the Peetre--Slov\'ak theorem, instead of analyticity as was
done in~\cite{hw-wick}, to conclude that ambiguities in the Wick powers
are characterized by differential operators. Thus, the remaining
technical continuity hypothesis is needed only in as much as it helps
meet the \emph{weak regularity} hypothesis needed for the application of the
Peetre--Slov\'ak theorem, which is the only one not already covered by
\emph{locality}. However, using the notation from the last point of
Definition~\ref{def:wick}, the continuity of $\varphi_{(M,\H_s)}^k(f)
\in \W(M,\h)$ as a function of $s$, even with an opportune choice of
topology on $\W(M,\h)$, is not sufficient to assure the needed weak
regularity property of $\varphi^k$. What would be needed instead is a
technical infinite-dimensional \emph{smoothness} condition. Instead of
going down that road, we simply strengthen the original \emph{microlocal
spectrum condition} to cover parametrized families of, instead of just
individual, background geometries. This new \emph{parametrized
microlocal condition} hypothesis then essentially directly yields the
needed weak regularity property of $\varphi^k$.
\end{itemize}
\end{rem}

Along with the explicit construction of the algebras $\W(M,\h)$ in
\cite{hw-wick}, Hollands and Wald also shows the existence of a family
of locally covariant scalar quantum fields that satisfy their version of
the axioms of Wick powers, whose difference from ours are discussed
above. However, that given construction is not the only way to satisfy
these axioms. The lack of uniqueness is physically interpreted as the
existence of some remaining degrees of freedom in the renormalization
procedure of Wick powers. The key result of~\cite{hw-wick} on finite
renormalizations of Wick powers is stated in \cite[Thm.~5.1]{hw-wick}. 
Any pair of  families of  Wick powers%
	\footnote{The differences between the Hollands and Wald definition of
	Wick powers from ours is detailed in Remark~\ref{rem:wick}.} %
$\{\tilde{\varphi}^k\}$ and $\{\varphi^k\}$ ($k\in \mathbb{N}$) of the
same Klein--Gordon field $\varphi = \varphi^1= \tilde{\varphi}^1$
satisfies the following relation, in our notation, for every fixed
background geometry $(M,\h)$:
\begin{equation}\label{eq:wick-diff}
	\tilde{\varphi}^k_{(M,\h)}(x)
	= \varphi^k_{(M,\h)}(x)
		+ \sum_{i=0}^{k-2} \binom{k}{i} C_{k-i}[\h](x) \varphi_{(M,\h)}^i(x) \:,
\end{equation}
Above, the scalar coefficients $C_k[\h]$ are some scalar differential
operators that are tensorially constructed out of the metric, the
curvature tensor and its derivatives. These operators depend
polynomially on the curvature tensor, its derivatives and on $m^2$, with
coefficients that depend \emph{analytically} on $\xi$. Moreover, the
$C_k[\h]$ scale as  $C_k \mapsto \lambda^{k\frac{(n-2)}{2}} C_k$ when
their arguments are rescaled as $\xi\mapsto \xi$, $m^2\mapsto \lambda^2
m^2$, $\g^{ab} \mapsto \lambda^2 \g^{ab}$ and $R_{abcd} \mapsto
\lambda^{-2} R_{abcd}$, with the same scaling weight for its
derivatives. In dimension $n=4$, the coefficients scale as $C_k \mapsto
\lambda^k C_k$, which is the expression that appears in~\cite{hw-wick}.

As mentioned in the Introduction, the Analyticity requirement that
distinguishes the Hollands and Wald definition of Wick powers from ours
(Remark~\ref{rem:wick}) is somewhat unnatural and technically difficult
to handle, as was stressed in the Introduction. We would like to
demonstrate that Analyticity is not necessary to prove a result that is
essentially similar to the statement of \cite[Thm.~5.1]{hw-wick}
mentioned above. On the other hand, the Continuity requirement of
Hollands and Wald cannot be completely dispensed with, since some
version of it must survive to feed into the weak regularity hypothesis
needed by the Peetre--Slov\'ak theorem. However, it seems difficult to
find a simple modification of Continuity that would do the job. Instead,
we find it more useful to strengthen the usual Microlocal Spectrum
Condition to its Parametrized version (Definition~\ref{def:wick}(iv)).
We believe that this strengthened version is rather natural,
encapsulating the stability of the properties of Wick powers under
variations of the parameters of the background geometry in precise
technical terms, without leaving the bounds of smooth differential
geometry.

In~\cite[Sec.~5.2]{hw-wick}, Hollands and Wald argue that the by now
standard locally covariant Hadamard parametrix prescription for defining
Wick powers satisfies all the requirements that we listed in
Definition~\ref{def:wick}, with our Parametrized Microlocal Spectral
condition replaced by the standard one and adding also their Continuous
and Analytic dependence requirements (Remark~\ref{rem:wick}). It should
be noted that they left the arguments supporting Continuous and Analytic
dependence implicit, not giving a complete proof, which appeared later
in~\cite{hw2}. Similarly, we believe that the locally covariant Hadamard
parametrix prescription satisfies also our Parametrized Microlocal
Spectral condition, but leave a detailed complete proof of this claim to
future investigations. Thus, by eliminating both the Continuity and
Analyticity requirements in favor of the Parametrized Microlocal
Spectrum Condition, we can achieve essentially the same result written
below into a more precise form:

\begin{thm}\label{thm:main}
Let $\{\tilde{\varphi}^k\}$ and $\{\varphi^k\}$  be two families ($k\in
\mathbb{N}$) of Wick powers, as in Definition~\ref{def:wick}, with
respect to a linear scalar quantum field $\varphi$
(Definition~\ref{def:lin-field}) in a net of algebras $\W$
(Definition~\ref{def-loc-cov-field}, defined on either the category
$\BkgG$ of non-oriented background geometries or $\BkgG^+$ of oriented
background geometries).

\textbf{(a)}  If $\varphi$ is defined with respect to the category
$\BkgG$, for every $(M,\h)$, the difference between $\tilde{\varphi}^k$
and $\varphi^k$ can be parametrized as in~\eqref{eq:wick-diff}, where
the coefficients
\begin{multline}\label{eq:Ck}
	C_k[\h](x) = C_k\left[\g^{ab}(x),   R_{abcd}(x), \ldots, \nabla_{e_1}\cdots
	\nabla_{e_{h}} R_{abcd}(x), \right. \\
\left. \xi(x), \ldots, \nabla_{e_1}\cdots
	\nabla_{e_{r}} \xi (x), m(x)^2, \ldots, \nabla_{e_1}\cdots
	\nabla_{e_{s}} m(x)^2\right] 
\end{multline}
are some scalar polynomials, tensorially formed from all of their
arguments, except $\xi(x)$, and where  $R_{abcd}(x)$ denotes the Riemann
tensor and $\nabla_a$ the Levi-Civita connection of $\g_{ab}$ at $x\in
M$.

\textbf{(b)}  If $\varphi$ is defined with respect to the category
$\BkgG^+$, for every $(M,\h)$, we have a variant of
\eqref{eq:wick-diff} with
\begin{multline}\label{eq:Ck+}
	C_k[\h](x) = C_k\left[\g^{ab}(x), \eps^{a_1\cdots a_n}(x),  R_{abcd}(x), \ldots, \nabla_{e_1}\cdots
	\nabla_{e_{h}} R_{abcd}(x), \right. \\
\left. \xi(x), \ldots, \nabla_{e_1}\cdots
	\nabla_{e_{r}} \xi (x), m(x)^2, \ldots, \nabla_{e_1}\cdots
	\nabla_{e_{s}} m(x)^2\right] 
\end{multline}
scalar polynomials, tensorially formed from all of their arguments,
except $\xi(x)$, and now including  the Levi-Civita tensor
$\eps^{a_1\cdots a_n}(x)$ of $\g_{ab}$ at $x\in M$.

In both cases (a) and (b), the coefficients of the polynomials are
smooth (instead of analytic) functions of $\xi(x)$ whose functional form
does not depend on $M$. 

Further, the $C_k$ scale as $C_k \mapsto \lambda^{k\frac{(n-2)}{2}}
C_k$ when their arguments are rescaled as follows: $\xi\mapsto \xi$,
$m^2\mapsto \lambda^2 m^2$, $g^{ab} \mapsto \lambda^2 g^{ab}$,
$\eps^{a_1\cdots a_n} \mapsto \lambda^n \eps^{a_1\cdots a_n}$,
$R_{abcd}(x) \mapsto \lambda^{-2} R_{abcd}(x)$ and the covariant
derivatives do not change this rescaling behaviour  as the coordinates
are  dimensionless. These rescaling properties fix the order of the
polynomial $C_k$.
\end{thm}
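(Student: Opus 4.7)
The plan is to deduce the form of $C_k[\h]$ in seven steps, mimicking the strategy implicit in the Introduction.

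\textbf{Step 1: Existence of the expansion~\eqref{eq:wick-diff}.} Proceeding by induction on $k$, I would show that the difference $D^k_{(M,\h)}(x) = \tilde{\varphi}^k_{(M,\h)}(x) - \varphi^k_{(M,\h)}(x)$ lies in the commutant of $\varphi_{(M,\h)}$, modulo lower Wick powers, by using the Commutator Expansion axiom~(iii) of Definition~\ref{def:wick}. More precisely, subtracting the commutator expansions of $\tilde\varphi^k$ and $\varphi^k$ inductively produces a field whose commutator with $\varphi_{(M,\h)}(y)$ vanishes; kinematic completeness (Definition~\ref{def:lin-field}) then forces it to be a $c$-number times the unit. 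Hermiticity fixes this $c$-number to be real, and a straightforward combinatorial unwinding in $k$ yields the binomial expansion~\eqref{eq:wick-diff} with some field of $c$-number distributions $C_k[\h]$.

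\textbf{Step 2: Smoothness of $C_k[\h]$ and germ dependence.} Since the $C_k[\h]$ are $c$-number distributions, evaluating in any quasi-free Hadamard state $\omega$ recovers them exactly. Applying the Parametrized Microlocal Spectrum Condition~(iv) to both $\tilde\varphi^k$ and $\varphi^k$ on a $0$-parameter family shows $C_k[\h]$ is a smooth function on $M$; applied to a smooth compactly supported $m$-parameter variation $\H_s$ of $\h$, it shows that $(s,x)\mapsto C_k[\H_s](x)$ is smooth as well. This is precisely the weak regularity property needed later. Next, the Locality and Covariance axiom~(i), combined with the naturality of the identification $\tau_{\mathrm{ret}}$, implies that if two background fields $\h_1$, $\h_2$ agree on a neighborhood of $x$, then $C_k[\h_1](x) = C_k[\h_2](x)$: use a morphism from a suitable causal diamond around $x$ into both $(M,\h_1)$ and $(M,\h_2)$ and compare. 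Thus $C_k[\h](x)$ depends only on the germ of $\h$ at $x$.

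\textbf{Step 3: Peetre--Slov\'ak.} Steps 1--2 show that the assignment $\h \mapsto C_k[\h]$ satisfies the hypotheses of Proposition~\ref{prp:peetre} (or rather Proposition~\ref{prp:peetre-param}, if $m^2$, $\xi$ are treated as genuine parameters, cf.~Remark after that proposition): germ-dependence plus weak regularity. Therefore $C_k$ is a differential operator of locally bounded order, i.e., locally $C_k[\h](x) = c_k(j^r_x \h)$ for some smooth $c_k\colon V^r \subseteq J^rHM \to \R$.

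\textbf{Step 4: Covariance and Thomas replacement.} Since $C_k$ is a natural scalar field, $c_k$ is $\Diff(M)$-invariant (or $\Diff^+(M)$-invariant in case (b)). Applying Proposition~\ref{prp:thomas} in the rescaled curvature coordinates~\eqref{eq:gGgS-coords} eliminates the Christoffel derivative coordinates and expresses $c_k$ as a $GL(n)$- (resp.~$GL^+(n)$-)invariant function of the tensor-density components of $g$, of the covariant derivatives $\bar{S}^{ab(cd,A)}$ of the Riemann-type tensor $\S$, of $\bar{z}^A$ (representing covariant derivatives of $\xi$ and of $m^2$, since both fields fit into the scalar factor of $H$), with no explicit $x$ dependence. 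By Lemma~\ref{lem:equiv-tens}, after isolating the factor $|\det g|^s$ the remaining object is a polynomial in its tensorial arguments.

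\textbf{Step 5: Scaling and global boundedness of the order.} From the Scaling axiom~(ii) for both $\tilde\varphi^k$ and $\varphi^k$, combined with~\eqref{eq:wick-diff}, the coefficient $C_k$ inherits almost homogeneity of degree $k(n-2)/2$ under the physical scaling of $(\g, m^2, \xi)$ acting on $J^r HM$ (recall $\xi$ is dimensionless while $m^2$ scales as $\lambda^2 m^2$). Lemma~\ref{lem:almost-hom} then writes $C_k$ in the form~\eqref{eq:almost-hom} as a finite sum of homogeneous-of-degree-zero functions $H_j$ of the scaling-invariant rescaled curvature coordinates, multiplied by powers of $\log(g^{-1/(2n)})$. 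But combining the logarithmic structure with the $\Diff(M)$-invariance forces the logarithmic terms to vanish identically: the $H_j$ ($j\ge 1$) would have to be nonzero smooth scaling-invariants but also transform under coordinate scalings (a subgroup of $GL^+(n)$, Sect.~\ref{sec:diff}), which cannot give new logarithmic degrees of freedom without breaking polynomiality established in Step~4. Hence $C_k$ is exactly homogeneous of degree $k(n-2)/2$. Together with polynomiality from Step~4, only finitely many monomials in the invariant coordinates can contribute, so the order $r$ is globally bounded in terms of $k$ and $n$.

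\textbf{Step 6: Polynomial structure and smooth $\xi$-dependence.} Homogeneity of degree $k(n-2)/2$ under physical scaling, combined with the fact that $\xi$ is scale-invariant, implies that the polynomial expression may have coefficients that are arbitrary smooth functions of $\xi$ (including its covariant derivatives), but polynomial dependence on all other tensorial arguments: $g^{ab}$, $m^2$, $R_{abcd}$ and their covariant derivatives, and in case~(b) also $\eps^{a_1\cdots a_n}$. Rewriting covariant derivatives of the contravariant $\S$ and of the scalars $\xi$ and $m^2$ in terms of the standard $\nabla_{e_1}\cdots\nabla_{e_h} R_{abcd}$, $\nabla_{e_1}\cdots\nabla_{e_r}\xi$, $\nabla_{e_1}\cdots\nabla_{e_s} m^2$ yields the announced form~\eqref{eq:Ck} (and~\eqref{eq:Ck+}).

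\textbf{Step 7: Rescaling rule.} The stated scaling of $C_k$ under $g^{ab}\mapsto \lambda^2 g^{ab}$, $m^2\mapsto \lambda^2 m^2$, $R_{abcd}\mapsto\lambda^{-2} R_{abcd}$, $\eps^{a_1\cdots a_n}\mapsto \lambda^n \eps^{a_1\cdots a_n}$, $\xi\mapsto \xi$ reads off directly from Step~5's homogeneity, and the finite degree $k(n-2)/2$ bounds the polynomial order.

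\textbf{Main obstacle.} The most delicate step is Step~5: ruling out the logarithmic terms produced by Lemma~\ref{lem:almost-hom} and converting almost-homogeneity into genuine polynomial homogeneity. Sufficient care is needed to balance the physical scaling (Sect.~\ref{sec:scal}) against the coordinate scalings coming from $\Diff^+(M)$ (Sect.~\ref{sec:diff}) and to argue that a nontrivial $\log(g^{-1/(2n)})$ coefficient would be incompatible with the polynomial $GL(n)$-equivariant structure from Step~4. Everything else is essentially a mechanical chaining together of Peetre--Slov\'ak, Thomas replacement, and equivariant tensor density theory.
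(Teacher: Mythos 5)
Your overall architecture matches the paper's: an inductive lemma establishing \eqref{eq:wick-diff} with smooth, germ-local, weakly regular, almost-homogeneous coefficients $C_k$; then Peetre--Slov\'ak, Thomas replacement, scaling, and equivariant tensor density theory. Steps 1--3 and 7 are essentially the paper's Lemma~\ref{lem:locality} and the first step of its proof of Theorem~\ref{thm:main}. However, there is a genuine gap exactly where you flag the ``main obstacle,'' and your proposed resolution is circular. In your Step~4 you invoke Lemma~\ref{lem:equiv-tens} to conclude that the $\Diff$-invariant function $c_k$ is ``a polynomial in its tensorial arguments.'' That lemma only characterizes equivariant maps $L_n \to T$ into a \emph{finite-dimensional} tensor density representation $T$; at the stage of Thomas replacement you merely have a $GL(n)$-invariant \emph{smooth} function of the curvature-type jet coordinates, and there is no finite-dimensional polynomial target to which the lemma applies. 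Then in Step~5 you rule out the logarithmic terms of Lemma~\ref{lem:almost-hom} by appealing to ``polynomiality established in Step~4'' --- but that polynomiality was never legitimately established, so the argument is circular.

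The missing mechanism, which is the heart of the paper's proof (its Step~4, ``Invariance under coordinate scaling''), is a limit argument under the \emph{coordinate scaling} subgroup $\mu I_n \sso GL^+(n)$. Under this action $g \mapsto \mu^{2n} g$ while every higher-derivative coordinate $q^i$ (the $\bar{S}^{ab(cd,A)}$, $\bar{w}^A$ and $\bar{z}^A$ with $|A|\ge 1$) carries strictly positive weight $d_i>0$, and crucially the point $q=0$ lies in the domain of the $G_j$. Taking $\mu\to 0$ in the invariance identity~\eqref{eq:scal2-id1}, the left-hand side stays bounded while $\log^j(\mu^{-1}g^{-\frac{1}{2n}})$ diverges, forcing $G_j=0$ for $j\ge 1$; a second application of the same limit to the Taylor expansion of $G_0$ about $q=0$ (Eq.~\eqref{eq:scal2-taylor}) kills all monomials of weighted degree $\neq k(n-2)/2$ and shows the dependence on the $q^i$ is polynomial of exactly that weight. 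Only \emph{after} this step does the space of such polynomials decompose into finitely many tensor density representations $T_j$, so that Lemma~\ref{lem:equiv-tens} can be applied to the coefficient maps $A_j(-,\xi)\colon L_n \to T_j$ to get polynomiality in $g_{ab}$ (and $\eps_{a_1\cdots a_n}$ in case (b)), and only then does the finiteness of weighted-homogeneous invariant polynomials give the global bound $r_k$ on the differential order. Without this $\mu\to 0$ argument your Steps 4--6 do not close.
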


Obviously all terms $\nabla_{e_1}\cdots \nabla_{e_{s}} \xi (x)$ and
$\nabla_{e_1}\cdots \nabla_{e_{s}} m(x)^2$ with $s>0$ vanish if, at the
end of the computation, $m^2$ and $\xi$ are taken constant. Again, in
dimension $n=4$, the scaling dimension of the scalar field reduces to
the standard $(n-2)/2 = 1$.

The proof of our main Theorem~\ref{thm:main} will be mainly geometric.
However, we will need an intermediate analytical result, which we
encapsulate in the Lemma below, which is a more detailed version of the
first two paragraphs of the proof of~\cite[Thm.~5.1]{hw-wick}.
Logically, this analytical result follows from the Parametrized
Microlocal Spectrum property, from the Locality and Covariance
requirements (cf.~(2) in Remark~\ref{rem-loc-cov}) and from the Scaling
requirement. And, obviously, we make no use of either the Continuity or
Analyticity requirements from~\cite{hw-wick}, as we have replaced both
of those by our Parametrized Microlocal Spectrum property.
\begin{lem}\label{lem:locality}
For $\{\tilde{\varphi}^k\}$ and $\{\varphi^k\}$ as in
Theorem~\ref{thm:main} and every fixed $M$, the
identity~\eqref{eq:wick-diff} holds with some smooth functions
$C_k[\h]$, where the value $C_k[\h](x)$ depends only on the germ of $\h$
at $x\in M$. Moreover, these functions are locally covariant, so that
$\chi^* C_k[\h] = C_k[\chi^*\h]$ for any morphism $\chi$ in $\BkgG$
(resp.~$\BkgG^+$), also the functions $C_k$ are weakly regular in the
sense of Definition~\ref{def:reg} and $C_k[\h]$ scales almost
homogeneously of degree $k(n-2)/2$ under the physical scaling transformation
$\h = (\g, m^2, \xi) \mapsto (\lambda^{-2}\g, \lambda^{2} m^2, \xi)$.
\end{lem}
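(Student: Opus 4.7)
The plan is to proceed by induction on $k$, with the base cases $k=0,1$ trivial since Definition~\ref{def:wick}(i) forces $\tilde{\varphi}^0 = \varphi^0 = \mathbf{1}$ and $\tilde{\varphi}^1 = \varphi^1 = \varphi$. For the inductive step, assume $C_1,\ldots,C_{k-1}$ have been constructed with all claimed properties, and set
\begin{equation*}
	X_k(x) := \tilde{\varphi}^k_{(M,\h)}(x) - \varphi^k_{(M,\h)}(x)
		- \sum_{j=1}^{k-2}\binom{k}{j} C_{k-j}[\h](x)\, \varphi^j_{(M,\h)}(x).
\end{equation*}
Using the Commutator Expansion axiom (Definition~\ref{def:wick}(iii)) applied to both $\tilde{\varphi}^k$ and $\varphi^k$, together with the inductive hypothesis for $\tilde{\varphi}^{k-1} - \varphi^{k-1}$ and the combinatorial identity $(j{+}1)\binom{k}{j+1} = k\binom{k-1}{j}$, a short computation gives $[X_k(x), \varphi_{(M,\h)}(y)] = 0$. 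Kinematic Completeness (Definition~\ref{def:lin-field}) of $\varphi$ then forces $X_k(x) = C_k[\h](x)\,\mathbf{1}$ for some a priori $\mathbb{C}$-valued distribution $C_k[\h]$, and rearranging yields~\eqref{eq:wick-diff}.

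To upgrade $C_k[\h]$ to a smooth function I would evaluate this identity in a quasi-free Hadamard state $\omega$ on $\W(M,\h)$:
\begin{equation*}
	C_k[\h](x) = \omega\!\left(\tilde{\varphi}^k_{(M,\h)}(x)\right) - \omega\!\left(\varphi^k_{(M,\h)}(x)\right)
		- \sum_{j=1}^{k-2}\binom{k}{j} C_{k-j}[\h](x)\, \omega\!\left(\varphi^j_{(M,\h)}(x)\right).
\end{equation*}
Definition~\ref{def:wick}(iv) applied with the trivial ($m=0$) family makes every expectation value on the right a smooth function of $x$; together with the inductive smoothness of the $C_{k-j}$ this yields smoothness of $C_k[\h]$. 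Local covariance $\chi^* C_k[\h] = C_k[\chi^*\h]$ follows from the naturality of $\tilde{\varphi}^k$, $\varphi^k$ and the inductive covariance of the $C_{k-j}$ in the same formula; germ dependence at $x$ is the special case where $\chi$ is the inclusion of a small globally hyperbolic neighborhood of $x$ on which $\h$ alone determines the relevant quantities.

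Weak regularity is exactly where the strengthened Parametrized Microlocal Spectrum Condition becomes essential: for a compactly supported variation $\H_s$ of $\h$, transport every $\W(M,\H_s)$ to $\W(M,\h)$ via $\tau^s_{\mathrm{ret}}$, fix a single Hadamard state $\omega$ on $\W(M,\h)$, and read off from the same expectation-value formula that $(s,x)\mapsto C_k[\H_s](x)$ is a combination of quantities jointly smooth in $(s,x)$ by Definition~\ref{def:wick}(iv). Finally, the almost-homogeneous scaling of $C_k$ of degree $k(n-2)/2$ is obtained by applying $S_\lambda$ to~\eqref{eq:wick-diff}, using Definition~\ref{def:wick}(ii) for both Wick powers and the inductive almost-homogeneity of $C_{k-j}$ of degree $(k-j)(n-2)/2$, and then comparing logarithmic coefficients via Lemma~\ref{lem:almost-hom-alg}. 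The main obstacle is precisely this weak-regularity step: it hinges entirely on the parametrized microlocal hypothesis doing the work of upgrading a family of Hadamard expectation values from merely distributional to jointly smooth in $(s,x)$, which is exactly the reason the strengthened microlocal axiom was introduced.
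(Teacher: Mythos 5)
Your proposal is correct and follows essentially the same inductive argument as the paper: form the difference field, use the commutator expansion plus kinematic completeness to identify it as a $c$-number, invoke the parametrized microlocal spectrum condition (for the $0$-parameter and then $m$-parameter families) for smoothness and weak regularity, and derive locality, covariance and almost-homogeneous scaling from the corresponding axioms together with Lemma~\ref{lem:almost-hom-alg}. The only differences are cosmetic: you spell out the commutator cancellation and the expectation-value formula slightly more explicitly than the paper does.
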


\begin{proof}[Proof of Lemma~\ref{lem:locality}]
The proof is inductive in $k$. The thesis holds for $k=1$ and $C_1=0$,
since $\varphi^1=\tilde{\varphi}^1 = \varphi$. Next suppose that
\eqref{eq:wick-diff} holds for some functions $C_i\colon \Secs(HM) \to
C^\oo(M)$, $i=1,2,\ldots, k-1$, that satisfy the desired properties.
Then $C_i[\h]1$ defines a locally covariant $c$-number field, where
$1\in \W(M,\h)$ is the identity of the given algebra. Define 
\begin{equation}\label{eq:wick-diff2}
	\Phi_{k,(M,\h)}(x) := \tilde{\varphi}^k_{(M,\h)}(x)
		- \left( \varphi^k_{(M,\h)}(x)
				+ \sum_{i=1}^{k-2} \binom{k}{i} C_{k-i}[\h](x) \varphi_{(M,\h)}^i(x)
			\right)\:.
\end{equation}
By construction, $\Phi_k$ is a locally covariant quantum field as in
Definition~\ref{def1} and also satisfies the \emph{Algebraic},
\emph{Scaling} and \emph{Microlocal} requirements of
Definition~\ref{def:wick}.  The
\emph{algebraic properties} in particular  require  that $\Phi_k$  is
Hermitian and, on any given spacetime $M$, it satisfies
$[\Phi_{k,(M,\h)}(x) , \varphi(y)]=0$ for all $x,y \in M$, which means
that it is a $c$-number field by the \emph{kinematic completeness}
property of $\varphi$ (Definition~\ref{def:lin-field}).  In other words,
$\Phi_{k,(M,\h)}=C_k[\h] 1$ where $C_k[\h]: C_0^\infty(M) \to \R$ is a
distribution.

Next, we appeal to the \emph{Parametrized Microlocal Spectrum}
condition. That is, considering $\h$ itself as a $0$-parameter family,
we can conclude that $C_k[\h](x) = \omega(C_k[\h](x) 1)$ is a smooth
function of $x$ for any Hadamard state $\omega$, since $\omega(1) = 1$
for any state and Hadamard states always exist. This establishes that we
have defined a map $C_k \colon \Secs(HM) \to C^\oo(M)$. If we introduce
an $m$-parameter family of compactly supported smooth deformation
$\H_s(x) = \H(s,x)$ of $\H_0 = \h$ then the same argument tells us that
$C_k[\H_s](x)$ is also jointly smooth in $(s,x)$. Thus, according to
Definition~\ref{def:reg}, the map $C_k$ is weakly regular.

The locality requirement  of Definition~\ref{def1} (see (2) in  Remark
\ref{rem-loc-cov}) entails that $\chi^* C_k[\h] =C_k[\chi^*\h]$ for any
inclusion $\chi\colon U\sso M$. In other words, fixing $x\in M$ and
taking the limit over decreasing neighborhoods $U$ of $x$, the value
$C_k[\h](x)$ depends only on the germ of $\h$ at $x$.

The validity of the \emph{Scaling} property for both $\varphi^k$ and
$\tilde{\varphi}^k$  imply that, by the formula~\eqref{eq:wick-diff2},
$\Phi_k$ is a linear combination of products of terms with almost
homogeneous degrees that add up to $k(n-2)/2$.  Thus, by
Lemma~\ref{lem:almost-hom-alg}, $\Phi_k$ itself has almost homogeneous
degree $k(n-2)/2$ and thus
\begin{equation}
	S_\lambda \Phi_k = \lambda^{k\frac{(n-2)}{2}} \Phi_k
		+ \lambda^{k\frac{(n-2)}{2}} \sum_i (\log^i \lambda) \Psi_i ,
\end{equation}
where $S_\lambda$ is the action of physical scalings on locally
covariant scalar quantum fields, with $\Psi_i$ some other locally
covariant quantum fields of almost homogeneous degree $k(n-2)/2$. See
Eq.~\eqref{eq:scal-phi-def}, and the discussion below it, for the
definition of $S_\lambda$ and in what sense locally covariant scalar
quantum fields form a vector space, so that
Definition~\ref{def:almost-hom} is applicable to them. Again, from the
kinematic completeness of $\varphi$, it follows that $\Psi_{i,(M,\h)} =
D_i[\h] 1$ are also all $c$-number fields. On the other hand, unwrapping
the definition of $S_\lambda$, we find that $S_\lambda (C_k[\h] 1) =
C_k[\h_\lambda] 1$, and similarly for the $D_i$. Hence, we find that
\begin{equation}
	C_k[\h_\lambda] = \lambda^{k\frac{(n-2)}{2}} C_k[\h]
		+ \lambda^{k\frac{(n-2)}{2}} \sum_i (\log^i \lambda) D_i[\h]
\end{equation}
is an almost homogeneous element of degree $k(n-2)/2$ of the space of maps
$\Secs(HM) \to C^\oo(M)$ under the action $D \mapsto
D_\lambda$, with $D_\lambda[\h] = D[\h_\lambda]$.
\end{proof}

In the proof of the main Theorem below, we systematically make use of
the geometric results summarized in Sect.~\ref{sec:geom}. In
particular, the \emph{Peetre--Slov\'ak theorem} discussed in
Sect.~\ref{sec:peetre} brings in the key simplification in our proof
in comparison with the arguments of~\cite{hw-wick}. This theorem is well
known in differential geometry but has not before been applied in this
context. It states that, under the conditions exhibited by
Lemma~\ref{lem:locality}, the $C_k$ must be some (possibly non-linear)
differential operators of locally bounded order applied to the
background fields $\g$, $m^2$ and $\xi$. It then remains only to call
upon the Scaling and Covariance properties to check that the $C_k$ may
only be of the form stipulated in Eq.~\eqref{eq:Ck}
or~\eqref{eq:Ck+}.

\begin{proof}[Proof of Theorem~\ref{thm:main}]
In this proof, we carefully separate the hypotheses of \emph{locality},
\emph{scaling} and \emph{covariance}. Locality allows us to conclude
that the functions $C_k$ are differential operators. Scaling restricts
their form and then covariance restricts their form even further, to the
desired result. Note that, unlike in~\cite{hw-wick} we do not make use
of Riemann normal coordinates. As a result, we invoke the
transformations properties of $C_k$ under two different kinds of scaling
transformations, which are mixed when normal coordinates are employed.

\emph{1. Locality and the Peetre--Slov\'ak theorem.} \quad
The first step is to combine the \emph{locality} of the coefficients
$C_k$ of Eq.~\eqref{eq:wick-diff} with the \emph{Peetre--Slov\'ak
theorem} (Proposition~\ref{prp:peetre}) to conclude that in fact these
coefficients are \emph{differential operators of locally bounded order}
(see Sect.~\ref{sec:peetre} for details). To verify the hypotheses of
Proposition~\ref{prp:peetre}, take the bundle $F \cong \R\times M\to M$,
so that its sections are just real valued functions $\Secs(F\to M) =
C^\oo(M)$. Finally, take the bundle $E \cong HM \to M$.
Lemma~\ref{lem:locality} shows that $C_k\colon \Secs(HM) \to C^\oo(M)$
such that $C_k$ is weakly regular and $C_k[\h](x)$ depends only on the
germ of $\h$ at $x \in M$. Consequently, the Peetre--Slov\'ak theorem
gives us the desired result: for every fixed $M \in \Man$, $C_k\colon
\Secs(HM) \to C^\oo(M)$ is a differential operator of locally bounded
order, as defined in Sect.~\ref{sec:peetre}.

Although we treat $m^2$ and $\xi$ as spacetime-dependent fields, this is
not crucial. They could be treated as constant parameters from the start
and the slight modification of the proof, needed only at this point, is
discussed in Remark~\ref{rem:m2xi-param}.

\emph{2. Almost homogeneity under physical scaling.} \quad
Consider a Lorentzian metric $\g_0$ on $M$, as well as a point $y\in M$
and an open neighborhood $U$ of $y$ with compact closure, with a
coordinate system $(x^a)$ centered at $y$. Since $C_k$ is a differential
operator of locally bounded order, for any such $\g_0$, $y$ and $U$
there exists%
	\footnote{In principle, the hypothesis of locally bounded order tells
	us that this is true for a sufficiently small neighborhood of $y$,
	with $r$ possibly increasing on larger neighborhoods. However, since
	such a neighborhood exists around any $y\in U$, a simple argument
	based on open covers and the compactness of $U$ shows that the
	order $r$ can be chosen uniformly over an arbitrary compact $U$.} %
an integer $r\ge 0$ such that $C_k$ is a differential
operator on $U$ of local order $r$ when acting on sections of $HM$ close
to $(\g_0,m^2=0,\xi=0)$, in a precise sense that we discuss next.
Naturally, the coordinates $x^a$ induce (scaling) adapted local
coordinates on the jet bundle $J^rHM$, which we write as $(x^a, g,
g_{ab}, g^{ab,A}, w^A, z^A)$, recalling that the coordinates
$(g,g_{ab})$ are functionally independent up to the identity $\left|\det
g_{ab}\right| = g$. The notation and the meaning of these coordinates
are discussed in Sect.~\ref{sec:coords}. The only difference is that
we now use two sets of coordinates, $w^A$ and $z^A$, for the jets of the
scalar fields, $m^2(x)$ and $\xi(x)$ respectively, instead of just one,
and that $w$ and $z$ have corresponding scaling degrees of $s=2$ and
$s=0$, as used in Sect.~\ref{sec:scal}. Then, by the bound $r$ on
the local order of $C_k$ at $y$, there exists a neighborhood $V_1^r\sse
J^rHM$ of $j^r_y(\g_0,m^2=0,\xi=0)$, projecting onto $U$, and a function
$F_k(x^a,g,g^{ab,A},w^A,z^A)$ defined on $V_1^r$ such that
\begin{equation}\label{eq:germ-jet-id}
	C_k[\h](x) = F_k(j^r\h(x)) ,
\end{equation}
for any section $\h\in \Secs(HM|_U \to U)$ such that $j^r\h(U) \sse
V^r_1$. Note that $V^r_1$ may be strictly smaller than $J^rH|_U$.
Without loss of generality, but possibly shrinking the domain of $F_k$,
we can choose it such that $V^r_1 \cong U\times W^r_1$, where the
projection on the $U$ factor is effected by the base coordinates $(x^a)$
and the projection onto $W^r_1$ is effected by the remaining fiber
coordinates. The main obstacle to increasing $V^r_1$ to all of $J^rHM$
is the possible need to increase the order $r$ on larger domains. At the
moment, from the Peetre--Slov\'ak theorem, we know only that the order $r$ of
$C_k$ is locally bounded, but may not have a finite global bound.  The
subscript $_1$ on $V^r_1$ will increase in the subsequent discussion as
we use the properties of $C_k$ to gradually enlarge the domain of
definition of the function $F_k$, while maintaining the
identity~\eqref{eq:germ-jet-id}, and thus the bound $r$ on the order of
$C_k$. In the final step of the proof we will in fact show that
differential order of $C_k$ is actually globally bounded. With that in
mind, it is then consistent, on a first reading of the proof, to assume
that $r$ is globally fixed and $V^r_1 = J^rHM$, so that the parts
dealing with enlarging $V^r$ could be skipped.

Similar to Eq.~\eqref{eq:scal1-vf}, the vector field implementing
infinitesimal physical scaling transformations on $V^r_1\sse J^rHM$ is
\begin{equation}
	e_1 = (2+2|A|) g^{ab,A} \del_{ab,A} + (2+2|A|) w^A \del^w_A
		+ 2|A| z^{A} \del^z_A .
\end{equation}
According to the last statement in Lemma~\ref{lem:locality} and an
immediate application of Lemma~\ref{lem:inf-almost-hom}, the coefficient
$C_k$ and hence the function $F_k$ scale almost homogeneously with
degree $k(n-2)/2$ with respect to the vector field $e_1$. Therefore, according
to Lemma~\ref{lem:almost-hom}, there exists an integer $l>0$ and
function $H_j$ on $V^r_1$, for $j=0,\ldots, l-1$, such that
\begin{equation}\label{eq:Fk-form}
	F_k = g^{-\frac{k(n-2)}{4n}} \sum_{j=0} \log^j (g^{-\frac{1}{2n}}) H_j ,
\end{equation}
where each $H_j$ is invariant under the action of $e_1$ and hence can be
written as
\begin{equation}\label{eq:H-def}
	H_j = H_j(x^a, g^{-\frac{1}{n}} g_{ab},
		g^{\frac{1}{2n} + \frac{1}{n}|A|} g^{ab,A},
		g^{\frac{1}{n} + \frac{1}{n}|A|} w^A, 
		g^{\frac{1}{n}|A|} z^A) .
\end{equation}
At this point, we may extend the domain $V^r_1$ to $V^r_2 \sse J^rHM$,
which is invariant under physical scaling. That is, we can write $V^r_2
\cong \R^+ \times W^r_2$, where the coordinate $g$ effects the
projection onto the $\R^+$ factor and the coordinates $(x^a,
g^{-\frac{1}{n}} g_{ab}, g^{\frac{1}{2n} + \frac{1}{n}|A|} g^{ab,A},
g^{\frac{1}{n} + \frac{1}{n}|A|} w^A, g^{\frac{1}{n}|A|} z^A)$ effect
the projection onto the $W^r_2$ factor, which includes at least the
point $(g^{-\frac{1}{n}} g_{ab}\circ \g_0(y), 0, 0, 0) $. The function
$F_k$ extends from $V^r_1$ to $V^r_2$ in a unique way as an almost
homogeneous function of degree $k(n-2)/2$.

Let us go into some of the details of the mentioned unique extension
procedure. So far, we could only presume that the
identity~\eqref{eq:germ-jet-id} that expresses the function $C_k[\h](x)$
in terms of the differential operator defined by the function $F_k$
holds only when the germ of $\h$ at $x\in M$ projects onto one of the
jets in the domain $V^r_1\sse J^rHM$ of $F_k$. We have defined the
extended domain $V^r_2$ to be the smallest domain invariant under
physical scaling and containing $V^r_1$. The function $F_k$, by using
formula~\eqref{eq:Fk-form}, has a unique almost homogeneous extension to
$V^r_2$ that scales almost homogeneously and agrees with the known
values of $F_k$ on $V^r_1$. Since any element of $V^r_2$ can be brought
back to $V^r_1$ by a physical scaling transformation and $C_k[\h]$
itself scales almost homogeneously, the identity~\eqref{eq:germ-jet-id}
must remain valid also for germs of $\h$ at $x$ that project to jets in
the extended domain $V^r_2$. Below, we use similar logic each time the
domain of the function $F_k$ is expanded, eventually to all of $J^rHM$,
though possibly with a larger value of $r$, thus showing that $C_k[\h]$
is actually a differential operator of globally bounded order.

\emph{3. Diffeomorphism covariance and the Thomas replacement theorem.} \quad
Now we move on to the \emph{covariance property} of the $C_k$ under
diffeomorphisms, which will be used in several stages. First, fixing the
previously made choice of $y\in M$, we note that the preceding arguments
using the Peetre--Slov\'ak theorem can be repeated for any pair of $y'\in
M$ and $\g_0' = \chi^*\g_0$, where $\chi\colon M\to M$ is some
diffeomorphism such that $\chi(y') = y$, giving rise to differential
orders $r'$ and domains $V_2^{\prime r'} \sse J^{r'} HM$. The
diffeomorphism covariance of $C_k$ then implies that all these
differential orders are the same, $r' = r$, and that the union
$V_3^{\prime r} \sse J^r HM$ of all the $V_2^{\prime r'}$ domains
defines a neighborhood of the $\Diff(M)$-orbit of $j^r(\g_0,0,0) \in J^r
HM$. In fact, $V_3^{\prime r}$ can itself be chosen to be
$\Diff(M)$-invariant (for instance, by taking the union of all
$\Diff(M)$ images of a non-invariant $V_3^{\prime r}$) and a function
$F_k$ satisfying~\eqref{eq:germ-jet-id} uniquely defined on it. The
diffeomorphism covariance of $C_k$ then implies that $F_k$ is itself
$\Diff(M)$-invariant on $V_3^{\prime r}$, in the sense described in
Sect.~\ref{sec:diff}. The case of $\Diff^+(M)$ covariance is handled
in exactly the same way. 

Since diffeomorphisms act transitively on $M$, a diffeomorphism
invariant $V^{\prime r}_3$ would then project down to all of $M$.
Instead, motivated by the desire to keep working in the coordinates
adapted to the local chart $(x^a)$ on $U\sse M$, we choose $V^r_3$
instead to be the intersection of $V_3^{\prime r}$ and the pre-image of
$U$ under the projection $J^rHM\to M$. Then, we have all the needed
hypothesis to apply Proposition~\ref{prp:thomas} to eliminate the
dependence of $F_k$, as a diffeomorphism invariant function, on some of
the coordinates on $V^r_3$. Actually, part of the almost homogeneous
scaling property implies that the functions $H_j$ from
Eq.~\eqref{eq:H-def} are each separately invariant under
diffeomorphisms, so that we can apply Proposition~\ref{prp:thomas} to
each of them individually. Therefore, we can conclude that
\begin{multline}
	g^{-\frac{k(n-2)}{4n}} H_j(x^a, g^{-\frac{1}{n}} g_{ab},
		g^{\frac{1}{n} + \frac{1}{n}|A|} g^{ab,A},
		g^{\frac{1}{n} + \frac{1}{n}|A|} w^A,
		g^{\frac{1}{n}|A|} z^A) \\
	= g^{-\frac{k(n-2)}{4n}} G_j(g^{-\frac{1}{n}} g_{ab},
		g^{\frac{3}{n} + \frac{1}{n}|A|} \bar{S}^{ab(cd,A)},
		g^{\frac{1}{n} + \frac{1}{n}|A|} \bar{w}^A,
		g^{\frac{1}{n}|A|} \bar{z}^A) ,
\end{multline}
where the notation used for the coordinates is explained in
Sect.~\ref{sec:diff} and each $g^{-\frac{k(n-2)}{4n}} G_j$, for
$j=0,\ldots, l-1$, is invariant under the natural action of either
$GL(n)$ (or $GL^+(n)$, depending on which of the cases (a) or (b) we are
dealing with) on its arguments. Notably, $G_j$ depends neither on the
base $(x^a)$ nor on the Christoffel coordinates $(\Gamma^a_{(bc,A)})$.

The invariance properties of $V_3^r$ now tells us that it has the
structure $V^r_3 \cong U \times L_n \times \R^\gamma \times W^r_3$,
where the coordinates $(x^a)$ effect the projection onto the $U$ factor,
the coordinates $(g_{ab})$ or $(g, g^{-\frac{1}{n}}g_{ab})$ effect the
projection onto the $L_n$ factor (the whole space of non-degenerate
bilinear forms on $\R^n$ with Lorentzian signature), the coordinates
$\Gamma^a_{(bc,A)}$ effect the projection on the $\R^\gamma$ argument
and the remaining coordinates $(g^{\bar{3}{n} + \frac{1}{n}|A|}
\bar{S}^{ab(cd,A)}, g^{\frac{1}{n} + \frac{1}{n}|A|} \bar{w}^A,
g^{\frac{1}{n} + \frac{1}{n}|A|} \bar{z}^A)$ effect the projection on
the $W^r_3$ argument, which contains at least the point $(0,0,0)$ and is
invariant under the corresponding action of $GL(n)$ (resp.~$GL^+(n)$).

\emph{4. Invariance under coordinate scaling.} \quad
Next, recall the action of the subgroup of $GL(n)$  (resp.~$GL^+(n)$)
that we called \emph{coordinate scalings} in Sect.~\ref{sec:diff}.
Notice that all the coordinates that the functions $G_j$ depend on have
positive weight with respect to coordinate scalings, with the exception
of $(g^{-\frac{1}{n}} g_{ab}, z)$. For brevity, let us rewrite our
coordinates as $(g, g^{-\frac{1}{n}}g_{ab}, z, q^i)$, with the weight of
the coordinate $q^i$ under coordinate scalings denoted by $d_i > 0$.
Then the invariance of the functions $F_k$ on $V^r_3$ under
diffeomorphisms, and hence coordinate scalings, implies the identity
\begin{align}
\notag
	\mu^{k\frac{(n-2)}{2}}F_k(g, g^{-\frac{1}{n}} g_{ab}, z, q^i)
		&= \mu^{k\frac{(n-2)}{2}} F_k(\mu^{2n} g, g^{-\frac{1}{n}} g_{ab}, z, \mu^{d_i} q^i)\\
\label{eq:scal2-id1}
		&= g^{-\frac{k(n-2)}{4n}} \sum_{j=0}^{l-1}
				\log^j (\mu^{-1} g^{-\frac{1}{2n}})
				G_j(g^{-\frac{1}{n}} g_{ab}, z, \mu^{d_i} q^i)
\end{align}
for any point of $V^r_3$ on its left hand side and any value of $\mu >
0$. As described above, the limit $(g^{-\frac{1}{n}} g_{ab}, z, 0)$ of
the arguments of the functions $G_j$ as $\mu \to 0$ falls within the
domain of the functions $G_j$. Therefore, while the limit of the
left-hand side of~\eqref{eq:scal2-id1} converges to $0$ as $\mu \to 0$,
the right-hand side diverges unless all $G_j = 0$ for $j>0$, so that
$F_k = g^{-\frac{k(n-2)}{4n}} G_0$. The new identity implied by invariance
under coordinate scalings is then
\begin{equation} \label{eq:scal2-id2}
	g^{-\frac{k(n-2)}{4n}}
			G_0(g^{-\frac{1}{n}} g_{ab}, z, q^i)
		= \mu^{-k\frac{(n-2)}{2}} g^{-\frac{k(n-2)}{4n}}
				G_0(g^{-\frac{1}{n}} g_{ab}, z, \mu^{d_i} q^i) .
\end{equation}
Fix some values for the coordinates $(g, g^{-\frac{1}{n}} g_{ab},
z)$ and recall that the point $(g^{-\frac{1}{n}} g_{ab}, z,
0)$ is part of the domain of definition of $G_0$. Since $G_0$ is smooth,
Taylor's theorem allows us to write it as
\begin{equation}
	G_0(g^{-\frac{1}{n}} g_{ab}, z, q^i)
		= \sum_{|I|<N} A_I(g^{-\frac{1}{n}} g_{ab}, z) q^I
			+ O(q^N) ,
\end{equation}
where $I = i_1\cdots i_m$ is a multi-index with respect to the
coordinates $(q^i)$ and $N>0$ is an integer large enough so that
$\langle d, I \rangle = \sum_{j=1}^m d_{i_j} > k$ for any $m = |I| > N$.
Note that the error term $O(q^N)$, for fixed $(q^i)$ mapped to
$(\mu^{d_i} q^i)$ and $\mu \to 0$, is mapped to $O(\mu^{k+1})$ by our
choice of sufficiently large $N$. Thus, using Taylor's theorem, we can
rewrite~\eqref{eq:scal2-id2} as
\begin{multline}\label{eq:scal2-taylor}
	g^{-\frac{k(n-2)}{4n}}
			G_0(g^{-\frac{1}{n}} g_{ab}, z, q^i)
		= \sum_{|I|<N} g^{-\frac{k(n-2)}{4n}}
				A_I(g^{-\frac{1}{n}} g_{ab}, z) q^I
				\mu^{\langle d, I\rangle - k\frac{(n-2)}{2}} \\
			+ \mu^{-\frac{k(n-2)}{2}} O(\mu^{\frac{k(n-2)}{2}+1}) .
\end{multline}
While the left-hand side of~\eqref{eq:scal2-taylor} is bounded as $\mu
\to 0$, the right-hand side diverges unless all $A_I = 0$ for $I$ such
that $\langle d, I\rangle < k(n-2)/2$. If this vanishing condition is
satisfied, the $\mu\to 0$ limits of both sides of~\eqref{eq:scal2-taylor}
exist and give the identity
\begin{equation}\label{eq:ai-poly}
	F_k =
	g^{-\frac{k(n-2)}{4n}}
			G_0(g^{-\frac{1}{n}} g_{ab}, z, q^i)
		= \sum_{\langle d, I \rangle = \frac{k(n-2)}{2}} g^{-\frac{k(n-2)}{4n}}
				A_I(g^{-\frac{1}{n}} g_{ab}, z) q^I .
\end{equation}
At this point, we can once more enlarge the domain of definition of the
function $F_k$, where  the identity~\eqref{eq:germ-jet-id} holds, from
$V^r_3$ to $V^r_4 \sso J^rHM$. The new domain is isomorphic to $V^r_4
\cong U \times L_n \times \R \times W_4 \times \R^\gamma \times
\R^\delta$, where the coordinates $(x^a)$ effect the projection onto the
$U$ factor, the coordinates $(g_{ab})$ or $(g, g^{-\frac{1}{n}} g_{ab})$
effect the projection onto the $L_n$ factor, the coordinate
$(g^{\frac{1}{n}} w)$ effects the projection onto the $\R$ factor, the
coordinate $(z)$ effects the projection onto the $W_4$ factor (which at
least contains the point $(0)$), the coordinates $(\Gamma^a_{(bc,A)})$
effect the projection onto the $\R^\gamma$ factor, and the remaining
coordinates $(g^{\frac{3}{n} + \frac{1}{n}|A|} \bar{S}^{ab(cd,A)},
g^{\frac{1}{n} + \frac{1}{n}|A|} \bar{w}^A, g^{\frac{1}{n} +
\frac{1}{n}|A|} \bar{z}^A)$ effect the projection onto the $\R^\delta$
factor, where the coordinates involving $\bar{w}^A$ and $\bar{z}^A$ with
$|A|=0$ are obviously excluded. Note that $U\times L_n \times \R \times
W_4 \sse HM$ and that $V^r_4$ is simply its pre-image with respect to
the bundle projection $J^rHM \to HM$. The function $F_k$ extends
uniquely from $V^r_3$ to a function on $V^r_4$ that is invariant under
coordinate scalings. The reason we could extend the domain so much,
essentially the factor $W^r_3$ got enlarged to $\R \times W_4 \times
\R^\delta$, is because almost all coordinates, those we labeled by
$(q^i)$ above, had positive degrees with respect to coordinate scalings.
The range of the $(z)$ coordinate is limited to $W_4$ because it is
invariant under coordinate scalings and even under the larger group
$GL(n)$ (resp.~$GL^+(n)$) that acts on the other bundle coordinates.
Also, note that according to Eq.~\eqref{eq:ai-poly} the dependence
of $F_k$ on the $\R \times \R^\gamma \times \R^\delta$ factor in
$V^r_4$, corresponding to the coordinates we labeled by $(q^i)$ above,
is polynomial.

\emph{5. $GL(n)$-equivariance  and polynomial dependence on the metric.} \quad
From the preceding discussion, the function $F_k$, satisfying the
identity~\eqref{eq:germ-jet-id}, is defined on the domain $V^r_4 = U
\times V_4$ and depends only on the coordinates corresponding to the
factor $V_4 = L_n\times W_4 \times \R^\delta$ (where we have grouped all
the $\R\times \R^\gamma \times \R^\delta$ factors together into
$\R^\delta$, implicitly redefining $\delta$).  Moreover, the dependence
on the coordinates on the $\R^\delta$ factor is polynomial, while the
coefficients $g^{-\frac{k(n-2)}{4n}} A_I (g^{-\frac{1}{n}} g_{ab}, z)$ of
these polynomials depend only on the $L_n \times W_4$ factor. It is also
clear from the preceding discussion that each of the factors in $V_4$
carries a tensor density representation of $GL(n)$ (resp.~$GL^+(n)$)
(cf.~Sect.~\ref{sec:iso-tens}), which happens to be trivial on $W_4$.
The space of functions on $V_4$ then itself carries a representation of
$GL(n)$ (resp.~$GL^+(n)$), induced by the pullback of the action on
$V_4$, and the function $F_k$ is invariant under this action. In the
same way, the space $\P^N_\delta$ of polynomials of degree no greater
than $N$ on $\R^\delta$ carries a representation of $GL(n)$
(resp.~$GL^+(n)$),
\begin{equation}
	(uP)(\rho) = P(u^{-1}\rho),
	\quad
	\text{for any $u\in GL(n)$, $P\in \P^N_\delta$ and $\rho \in \R^\delta$,}
\end{equation}
which by elementary reasoning, within the representation theory of
$GL(n)$~\cite{fulton}, is a direct sum of tensor density
representations. Let us group these subrepresentations by tensor rank
and density weight. Therefore, $\P^N_\delta = \bigoplus_j T_j$, where
each $T_j$ is a tensor density representation.

The form that we have reduced $F_k$ to can be described as follows.
Given a point $(\g,\xi,\rho) \in V_4$, the \emph{$A$-coefficients}
$g^{-\frac{k(n-2)}{4n}} A_I(g^{-\frac{1}{n}} g_{ab}, z)$ evaluated at
$(\g,\xi)\in L_n\times W_4$ give a polynomial in $\P^N_\delta$, which is
then evaluated at $\rho\in \R^\delta$. Thus we can think of the
$A$-coefficients as a collection of functions $A_j \colon L_n\times W_4
\to T_j$, with components given by
\begin{equation}\label{eq:A-def}
	(A_j(g_{ab},z))_I = g^{-\frac{k(n-2)}{4n}} A_I(g^{-\frac{1}{n}} g_{ab},z).
\end{equation}
The only way for $F_k$ constructed in this way to be invariant under the
action of $GL(n)$ is for the maps $A_j$ to be equivariant
(cf.~Sect.~\ref{sec:iso-tens}), so that
\begin{multline}
	F_k(u\g,u\xi,u\rho)
		= \sum_j A_j(u\g,u\xi)(u\rho)
		= \sum_j (uA_j(\g,\xi))(u\rho) \\
		= \sum_j A_j(\g,\xi)(u^{-1}u\rho)
		= \sum_j A_j(\g,\xi)(\rho)
		= F_k(\g,\xi,\rho) ,
\end{multline}
for any $u\in GL(n)$ (resp.~$GL^+(n)$) and $(\g,\xi,\rho)\in V_4$.

We are finally in a position to conclude that, for a fixed $\xi \in
W_4$, the map $A_j(-,\xi) \colon L_n \to T_j$ is an \emph{equivariant
tensor density}, in the sense of Definition~\ref{def:eqtd}, and hence
must be of the form dictated by Lemma~\ref{lem:equiv-tens}, which
characterizes all such maps in a way, in view of
Remark~\ref{rem:eqtd-hom}, compatible with our formula~\eqref{eq:A-def}.
In other words, the coefficients of the polynomials $A_j(\g,\xi)$ depend
themselves polynomially on the components $g_{ab}$ and $\eps_{a_1\cdots
a_n}$ of the covariant metric and Levi-Civita tensors, up to an overall
multiple of $g = \left|\det g_{ab}\right|$. If $F_k$ is invariant under
$GL(n)$, then the dependence on $\eps_{a_1\cdots a_n}$ must be trivial,
while it could in general be non-trivial if $F_k$ is invariant only
under $GL^+(n)$. Expanding all the polynomials in $g_{ab}$,
$\eps_{a_1\cdots a_n}$ and $q^i$, all the factors of powers of $g$ must
collectively cancel to preserve invariance of $F_k$ under $GL(n)$
(resp.~$GL^+(n)$). In other words, we can conclude that
\begin{equation}\label{eq:fk-poly}
	F_k = \sum_j a_j(z)
		P_j(g_{ab}, \eps_{a_1\cdots a_n}, \bar{S}^{ab(cd,A)},
			\bar{w}^{A}, \bar{z}^A) , \quad
	\text{with $|A|\ge 1$ in $\bar{z}^A$,}
\end{equation}
where the sum is over a (necessarily finite) basis of polynomials $P_j$,
which consist of linear combinations of tensor contractions of products
of their arguments, with coefficients arbitrarily depending on the $z$
coordinate. In this form, the function $F_k$ is manifestly invariant
under $GL(n)$ (resp.~$GL^+(n)$) transformations.

\emph{6. Global boundedness of differential order.} \quad
To conclude the proof, it remains only to extend the domain $V^r_4$ once
more, this time to all of $J^rHM$, for an appropriate choice of $r$. It
is well known that for a fixed weight $s$ under physical scaling, there
is only a finite number of linearly independent polynomials $P_j$ of
weight $s$ constructed, as described above, from the metric and the
covariant derivatives of the scalar fields $m^2$, $\xi$ and the Riemann
curvature tensor in the form $\bar{S}^{abcd}$, even if the number of the
derivatives $r$ is allowed to be arbitrary%
	\footnote{To see that, consider a monomial of the schematic form \[
	(g_{ab})^{p_g} (\eps_{a_1\cdots a_n})^{p_\eps} \prod_{|A|}
	(\bar{S}^{ab(cd,A)})^{p_{S,|A|}} (\bar{w}^A)^{p_{w,|A|}}
	(\bar{z}^A)^{p_{z,|A|}} , \] necessarily with $p_{z,0}=0$, and note
	that the $p$-exponents must satisfy the constraint
	$\sum_{|A|} [(2+|A|)p_{S,|A|} + (1+|A|)p_{w,|A|} + |A| p_{z,|A|}] =
	s$, due to $s$-homogeneity with respect to physical scalings and
	invariance with respect to coordinate scalings. Since each
	$p$-exponent is non-negative, this implies a bound on the maximum
	value of $|A|$ with a non-zero exponent.}%
~\cite{fkwc}. Let $r_k$ be the
maximum number of derivatives that appear in a basis for these
polynomials $P_j$ when $s=k(n-2)/2$. Then, no matter the original choice of domain $U \sse
M$, the differential operator $C_k$ restricted to it must be of order
$\le r_k$. Thus, we are justified in setting $r = r_k$ in all of the
preceding discussion. The only obstacle that may have prevented us from
extending the domain $V^r_4 \sse J^{r_k}HM$ of the function $F_k$ to all
of the pre-image of $U$ under the projection $J^{r_k}HM \to M$ is the
possibility that $C_k$ would change order on jets whose projections fall
outside $V^{r_k}_4$. However, with the maximal possible order of $C_k$
bounded by $r_k$, this obstacle is now absent. In other words, we can
safely presume that $V^{r_k}_4$ is equal to the pre-image of $U\sse M$
with respect to the projection $J^{r_k}HM \to M$, with $F_k$ retaining
the form~\eqref{eq:fk-poly} on all of its domain. A slightly more
detailed version of this argument would note that the original choice of
the domain $V_1^r$ to be a neighborhood of a point
$j^r_y(\g_0,m^2=0,\xi=0)$ in $J^rHM$ could have equally been chosen to
be a neighborhood of the point $j^r_y(\g_0,m^2=0,\xi_0)$, without
affecting any subsequent arguments. Piecing together $F_k$ over the
extensions of all such neighborhoods gives us a definition of $F_k$ on
the entire pre-image of $U$ under the projection $J^{r_k}HM \to M$ with
the same global order bound $r_k$. Finally, covariance of $C_k$ with
respect to diffeomorphisms requires that the form~\eqref{eq:fk-poly} is
also independent of the domain $U\sse M$.  Thus, we can conclude that
there exists a globally defined smooth bundle map $F_k\colon J^{r_k}HM
\to \R\times M$ over $M$ of the form~\eqref{eq:fk-poly} such that
$C_k[\h](x) = F_k\circ j^{r_k} \h(x)$ for any $x\in M$ and $\h\in
\Secs(HM)$, which concludes the proof.
\end{proof}

\begin{rem}\label{rem:m2xi-param}
We observe, by looking at the \emph{Locality and the Peetre--Slov\'ak theorem}
step of the above proof and also at the proof of
Lemma~\ref{lem:locality}, that one might wonder at the need to
take $m^2$ and $\xi$ as spacetime-dependent fields rather than
constants, as is usually the case. Our arguments still go through, with
only two changes. First, the microlocal hypothesis mentioned in
\ref{def:wick} must be strengthened to require an empty wavefront set
for $\omega(\varphi^k(x))$ as a distribution on $M\times \R^2$ (with the
$\R^2$ factor standing for the parameter space of $m^2$ and $\xi$)
rather than as a distribution on $M$ for any fixed $m^2$ and $\xi$. Note
that the weaker microlocal requirement does not exclude the infinite
family of counterterms of~\cite{tf} that were discussed in the
Introduction, while the stronger one does. Second, we must make use of
the more general version of the Peetre--Slov\'ak theorem for differential
operators with parameters, as in Proposition~\ref{prp:peetre-param} in
Appendix~\ref{app:peetre-param}. To apply that result, we would need to
let $N=M$ and replace the spacetime manifold $M$ by $P = M\times \R^2$,
adding the $(m^2,\xi)$ parameter space. It would then follow from known
information about $C_k$ that it is local with respect to the natural
projection $P\cong \R^2 \times M \to M$, hence satisfying the more
general Peetre--Slov\'ak theorem.
\end{rem}

We end this section with a couple of straight forward but noteworthy
observations. First, it is a direct result of the proof of
Lemma~\ref{lem:locality} that the set of coefficients $\{C_k[\h]\}$ from
Eq.~\eqref{eq:wick-diff} is determined jointly by the entire
families $\{\varphi^k\}$ and $\{\tilde{\varphi}^k\}$ of Wick powers,
rather than depending on each pair $\varphi^k$ and $\tilde{\varphi}^k$
individually. Second, the converse of Theorem~\ref{thm:main} holds as
well. That is, given a family $\{\varphi^k\}$ of locally covariant Wick
powers and a set $\{C_k[\h]\}$ of satisfying the conclusions of
Theorem~\ref{thm:main}, the formula~\eqref{eq:wick-diff} defines another
family $\{\varphi^k\}$ of locally covariant Wick powers.

\section{Discussion}\label{sec:discuss}
In this work, we have characterized admissible finite renormalizations
of Wick powers of a locally covariant quantum scalar field
$\varphi$ on curved spacetimes, with possibly spacetime-dependent mass
$m^2$ and curvature coupling $\xi$. By local covariance, we mean the
axioms of Brunetti, Fredenhagen and Verch~\cite{bfv}. Our work is a
significant technical improvement on the original work of Hollands and
Wald~\cite{hw-wick} on this subject. The main result
(Theorem~\ref{thm:main}) is a slight generalization of that of Hollands
and Wald, yet our hypotheses are significantly more natural and the
proof is greatly simplified and streamlined.

Under standard hypotheses, on Minkowski space, where the curvature
coupling $\xi$ is absent, it is well known that the finite
renormalizations of the Wick powers $\varphi^k$ are restricted to
linear combinations of Wick powers of lower order, with
dimensionful coefficients that are polynomials in $m^2$, with the total
dimension matching that of $\varphi^k$. This is a strong constraint,
because the resulting space of possibilities is finite-dimensional. On
curved spacetimes, as first proven by Hollands and Wald in~\cite{hw-wick},
adding local covariance and some further more technical hypotheses gives
a result of comparable strength. The only modification is that the
coefficients of lower order Wick powers can also depend
polynomially on curvature scalars and analytically on $\xi$, with the
same restriction on their dimensions. The resulting possibilities no
longer form a finite-dimensional space, but a quasi-finite-dimensional
one, in the sense that it is finitely generated under linear
combinations with coefficients analytic in $\xi$. It is worth noting
that the dependence of finite renormalization terms on the background
metric is entirely contained in the curvature scalars, while
their $\xi$-dependent coefficients must be assigned uniformly across all
spacetimes to preserve local covariance.

The hypotheses of Hollands and Wald, briefly recalled in
Definition~\ref{def:wick}, include the requirements of \emph{locality},
\emph{microlocal regularity} and of \emph{continuous} and
\emph{analytic} dependence on the background spacetime metric and
coupling parameters.  Unfortunately, while playing a crucial role in the
existing proof, the analytic dependence hypothesis has been long
considered somewhat unnatural and technically very cumbersome. We have
found that, by using a standard result of differential geometry (the
Peetre--Slov\'ak theorem, cf.~Proposition~\ref{prp:peetre} and
Appendix~\ref{app:peetre-param}), in the presence of the remaining
assumptions, the role of \emph{both} the continuity and analyticity
hypotheses is completely subsumed by that of locality and a strengthened
version of the microlocal regularity condition. We believe the
strengthened, so-called \emph{microlocal spectral condition} is natural
from both physical and geometrical points of view. Physically it
encapsulates the stability of the microlocal properties of Wick powers
under smooth variations of the background geometry. Geometrically, it
provides just the right hypothesis needed to prove the locality of
finite renormalizations of Wick powers, without leaving the realm of
smooth differential geometry. Thus, by replacing the continuity and
analyticity requirements by a more natural hypothesis, our final result
on the characterization of finite renormalizations of Wick powers, as
stated in Theorem~\ref{thm:main}, is essentially identical to that of
Hollands and Wald. The main difference is that arbitrary smooth
dependence on the coupling $\xi$ is now allowed, instead of just
analytic dependence. Another difference is that we have also explicitly
considered weakening covariance to only under orientation preserving
diffeomorphisms, which increases the renormalization freedom to
curvature scalars constructed also with the Levi-Civita tensor and not
just the metric. Finally, we explicitly treat $m^2$ and $\xi$ as
possibly spacetime-dependent parameters, rather than simple constants.
The original proof of Hollands and Wald also treated them as
spacetime-dependent, while restricting to the case of constants in the
statement of their final result. We noted in Remark~\ref{rem:m2xi-param}
how our arguments could be adapted to treating the parameters as
constants throughout.

As was already mentioned, our characterization of finite
renormalizations extends to theories that need only be covariant with
respect to orientation preserving diffeomorphisms. In particular, in
even dimensions, chiral theories (those not invariant under spatial
parity transformations) could be admissible. While our result does not
contain any surprises, it is important to have a rigorous statement on
the complete range of possibilities. In particular, suppose that a
classical parity invariant theory is perturbatively quantized using a
chiral renormalization scheme. The knowledge of a complete
classification of finite renormalizations is then required to decide
whether there exists a different renormalization scheme that gives a
parity invariant quantization.

Another advantage of our proof is the clear separation between the
applications of the locality, microlocal regularity, covariance and
scaling hypotheses. We make a particular distinction between
\emph{physical scalings} (those resulting from a rescaling of the
metric) and \emph{coordinate scalings} (those resulting from the local
action of some diffeomorphisms). We believe that structuring the proof
in this way makes it significantly easier to generalize the result to
other types of tensor or spinor fields, a task that is yet to be
seriously taken up in the literature on locally covariant quantum field
theory, which is in significant part likely due to the complexity of and
the unnatural hypotheses needed in the original proof of Hollands and
Wald. In particular, it is likely that the crucial step in limiting the
finite renormalization freedom to a quasi-finite-dimensional space is to
carefully balance the covariance and scaling properties, such that there
exists a coordinate system on the jets of background fields, like the
\emph{rescaled curvature coordinates} that we identified in
Sect.~\ref{sec:coords}, where all coordinates corresponding higher
derivatives have positive weight under a combination of the physical and
coordinate scalings.

Another direction in which our main result could be generalized is to
consider Wick powers that included derivatives of fields. Our proof
should extend without problems. The main difference would be that the
finite renormalization coefficients $C_k$ could then be tensor- instead
of scalar-valued, since Wick powers with derivatives could
themselves be tensor fields. This difference would affect the part of
our proof where we make use of $GL(n)$-equivariance to fix the form of
the $A_j$ coefficients, which could be mixed densitized tensors.
Fortunately, the main technical result on the classification of
equivariant tensor densities, as stated in
Lemma~\ref{lem:equiv-tens}, is sufficiently general to apply to
that case as well, since introducing densitization erases the
distinction between covariant and contravariant indices.

Let us also say something about time-ordered products. Hollands and
Wald also gave a sketch of the proof of the characterization of finite
renormalizations of time-ordered products~\cite[Thm.~5.2]{hw-wick}, under
the same hypothesis as their result about Wick powers. As they
point out, the main difference with the case of Wick powers is in the
structure of coefficients that are analogous to the $C_k$, which become
distributions on multiple copies of the spacetime manifold. The
arguments, which we encapsulated in Lemma~\ref{lem:locality}, applying
microlocal arguments to restrict the wavefront set of these
distributions would have to be generalized accordingly. After that
point, the proof of Theorem~\ref{thm:main}, would apply without
essential modifications. Thus, our methods should generalize to
time-ordered products as well.

Finally, we note that, although we believe that our \emph{parametrized
microlocal spectral condition} (Definition~\ref{def:wick}(iv)) does hold
for standard locally covariant Hadamard parametrix prescription for
defining Wick powers, we have not given a proof. In fact a complete
proof of the validity of the continuous and analytic dependence for the
Hadamard parametrix prescription was not given in the original work of
Hollands and Wald either~\cite[Sec.~5.2]{hw-wick} and only appeared in
the later work~\cite{hw2}. Closing this gap with a complete and precise
proof is a worthwhile goal for future work. In fact, to be of greatest
use for the characterization of finite renormalizations of time-ordered
products and of Wick products with derivatives, we would need a proof of
validity of a parametrized microlocal spectral condition for multi-local
fields as well. Such a condition might be reasonably stated as follows,
echoing~\cite[Eq.~(46)]{hw-wick} which considered a similar question for
the analytic wavefront set. Let $\Phi(x_1,\ldots,x_k)$ be a locally
covariant multi-local field that already satisfies the standard
(unparametrized) microlocal spectral condition. Then, let $(M,\h)$ be a
background geometry and let
\begin{equation}
	\Gamma^\Phi(M^k,\h)
		= \overline{\bigcup_{\omega} \mathrm{WF}(\omega(\Phi_{(M,\h)}(-))} 
			\setminus \{0\}
		\sse (T^*M^k)\setminus \{0\},
\end{equation}
where the union is taken over all Hadamard states $\omega$ on
$\W(M,\h)$. Consider also a smooth $m$-parameter compactly supported
variation $\H(s,x)$ of $\h(x)$, together with the accompanying algebra
isomorphisms $\tau^s_\mathrm{ret}\colon \W(M,\H_s) \to \W(M,\h)$. If
$\omega$ is any Hadamard state on $\W(M,\h)$, then we would like to
require that the wavefront set of $E^\Phi_\omega(s,x_1,\ldots,x_k) =
\omega(\tau^s_\mathrm{ret} \circ \Phi_{(M,\H_s)}(x_1,\ldots,x_k))$ as a
distribution on $\R^m \times M^k$ satisfies
\begin{multline}
	\mathrm{WF}(E^\Phi_\omega) \sse
		\{ (s,\sigma, x_1,p_1, \ldots, x_n,p_n) \in T^*(\R^m\times M^k) \\
			\mid (x_1,p_1, \ldots, x_k, p_k) \in \Gamma^\Phi(M,\H_s) \} .
\end{multline}
That is, $E^\varphi_\omega$ or any of its derivatives can be restricted
to the submanifold of $\R^m\times M^k$ given by a fixed value of $s$,
and that restriction has precisely the wavefront set expected of a
locally covariant field on $(M,\H_s)$ satisfying the standard microlocal
spectrum condition.

We now give a sketch of an argument for establishing the validity of our
parametrized microlocal spectrum condition in the simplest case of the
Wick monomial $\phi^2_H(x)$, as constructed using the Hadamard
parametrix regularization method by Hollands and Wald. (The authors are
grateful to an anonymous referee for suggesting it.) In this elementary
situation, the only thing to prove is the joint smoothness of
$\omega_s(\phi_{H_s}^2(x))$ as a function of the position $x$ and the
parameters $s$, with $H_s$ the local Hadamard parametrices of the
compactly supported background geometry variation $\H_s$ and the
Hadamard states $\omega_s = \omega \circ \tau^s_\mathrm{ret}$ are
defined as in the preceding paragraph. As a matter of fact, one could
more generally prove that  $f(s,x,y) := \omega_s (\phi(x)\phi(y)) -
H_s(x,y)$ is a jointly smooth function of $s,x,y$, since
$\omega_s(\phi_{H_s}^2(x))$ is obtained just by taking $x=y$ in the
difference above. The 2-point function $\omega_s(\phi(x)\phi(y))$ is a
global distributional bisolution, while the symmetric distribution
$H_s(x,y)$ is locally defined through a well-known procedure and, as a
parametrix, only satisfies the Klein--Gordon equation (in either $x$ or
$y$) up to a smooth error term, which we would need to show is also
jointly smooth in $s$. So, the function $f(s,x,y)$ will satisfy a pair
of Klein--Gordon equations,
\begin{align}
\label{eq:KG1}
	(\square_{\H_s} - m^2 - \xi R_{\H_s})_x f(s,x,y) &= g(s,x,y) \\
\label{eq:KG2}
	\text{and} \quad
	(\square_{\H_s} - m^2 - \xi R_{\H_s})_y f(s,x,y) &= g(s,y,x) ,
\end{align}
with some jointly smooth $g(s,x,y)$ defined on the same neighborhood of
the diagonal $x=y$ as $H_s(x,y)$. The argument would conclude by
determining a precise form of $f(s,x,y)$ near a Cauchy surface in the
past of the compact region $O \subset M$, where the variation $\H_s$
differs from the reference geometry $\H_0$, and showing the existence of
a unique solution of the above equations, which is moreover jointly
smooth in $s$, $x$ and $y$ and necessarily coincides with $f(s,x,y)$.

We should note that, though the main ideas are clear, the above argument
features some technical difficulties that it would take a separate paper
to fully explore. For instance, it is not immediately clear which result
from PDE theory would assure the existence, uniqueness and smooth
parameter dependence of the solutions of Eqs.~\eqref{eq:KG1}
and~\eqref{eq:KG2}, all rather delicate questions, especially in the
context of partial rather than ordinary differential equations. In fact
even establishing the joint smoothness of $g(s,x,y)$ goes beyond
elementary facts about Hadamard parametrices (though some relevant
arguments were already provided in~\cite[Prop.~4.1]{hw2}
and~\cite[App.~A]{zahn}). Also, even if the preceding issues are
resolved by a clever use of standards results for hyperbolic PDEs, the
domain on which Eqs.~\eqref{eq:KG1} and~\eqref{eq:KG2} are defined may
not be globally hyperbolic in any meaningful way, because $g(s,x,y)$
would be defined only on a neighborhood of the diagonal in $M\times M$.

Thus, we leave the investigation of the above generalized parametrized
microlocal spectral condition, of more general types of fields, of Wick
powers with derivatives and of time-ordered products for future work.

\section*{Acknowledgments}
The authors thank Nicola Pinamonti and Romeo Brunetti for their comments
on an earlier version of the manuscript. They also thank an anonymous
referee for pointing out a serious omission from our statement of the
Peetre--Slov\'ak theorem in a previous version of the manuscript.

\appendix

\section{Peetre--Slov\'ak's theorem with parameters}\label{app:peetre-param}
Below, we first make some remarks about how the \emph{weak regularity}
hypothesis (Definition~\ref{def:reg}) in Proposition~\ref{prp:peetre}
can be justified, despite the stronger \emph{regularity} hypothesis that
is usually required~\cite{slovak,kms,ns}. Then, we state a more general
version of Proposition~\ref{prp:peetre}, in which the notion of locality
is generalized to accommodate parameters. The usually complicated way in
which this more general locality condition is stated is clarified
through examples.

The paper~\cite{ns} gives an excellent, self-contained and
straight-forward proof of the version of the Peetre--Slov\'ak
theorem~\cite[Thm.~3.1]{ns} that we state in
Proposition~\ref{prp:peetre}, with the exception that it requires the
stronger \emph{regularity} instead of the \emph{weak regularity}
hypothesis. There, the regularity hypothesis is used in exactly two
places: (a) It is used once directly in the proof of Theorem~3.1, to
show that the map $D[\phi] = d\circ j^k\phi$ factors through a
\emph{smooth} map $d$ on the space of $k$-jets. In that instance, the
smooth ``universal family of $k$-jets'', which establishes the
smoothness of $d$ all at once, can easily be replaced by a related smooth
compactly supported variation that establishes the smoothness of $d$ in
a compact neighborhood of any point of its domain. However, since that
can be done for any point in the domain of $d$, the desired global
smoothness of $d$ is immediate. (b) Regularity is used once more in the
proof of the intermediate Lemma~2.4, to show that $D[\phi] = d\circ
j^k\phi$ factors through some \emph{finite order} jet space. There,
regularity is called upon when $D$ acts on smooth families of sections
$f$ and $h$ that have been constructed to have controlled behavior on a
compact set $K$ by appealing to the \emph{Whitney extension
theorem}~\cite[Sec.~1.2]{ns}, where $K = \{z_k\}_{k=0}^\oo$ consists of
the points of a convergent sequence. Note that proof of that Lemma~2.4
goes through even with only weak regularity, provided that $f$ and $h$
could be constructed as smooth compactly supported variations, rather
than smooth families. The Whitney extension theorem, as stated
in~\cite[Sec.~1.2]{ns}, which constructs smooth extensions of
consistently specified jet data on an arbitrary compact set, can only
produce families of sections rather than compactly supported variations.
However, it is well known that Whitney's extension theorem can be
strengthened~\cite{bierstone} to construct smooth extensions of
consistently specified jet data on arbitrary closed rather than just
compact sets. Using this strengthened version, it is easily seen that
the above mentioned $f$ and $h$ can be constructed as smooth compactly
supported variations with specified behavior both on the compact set $K$
and on the complement of any open neighborhood of $K$ with compact
closure, thus showing that the proof of~\cite[Lem.~2.4]{ns} can be
completed with only weak regularity.

The Peetre--Slov\'ak theorem stated in Proposition~\ref{prp:peetre} may
be made significantly stronger by generalizing the admissible notion of
locality. Let us now introduce the language needed to state the stronger
version in a precise form. In the following, $\sigma\colon E\to N$ and
$\rho\colon F\to M$, are two smooth bundles, where we have explicitly
written the canonical projections, and we consider a map $D\colon
\Secs(E\to N) \to \Secs(F\to M)$ between smooth sections of these
bundles. We intend here to give a precise mathematical meaning to the
statement that \emph{$D$ is local}. Before defining the most general
version of locality (cf.~\cite[\textsection~18.16]{kms}), we consider
several motivating cases of increasing complexity.

\emph{Case  $N = M$.} We say that $D$ is \emph{local} when the value
$\phi(x)$, for $\phi = D[\psi] \in \Secs(F\to M)$, depends only on the
germ of $\psi\in \Secs(E\to M)$ at $x \in M$. This version of locality
is already sufficient for Propositions~\ref{prp:peetre0}
and~\ref{prp:peetre}. We can loosen this notion of locality in several
ways.

\emph{Case  $N \neq M$.} We may agree that $\phi(x)$, for $\phi =
D[\psi] \in \Secs(F\to M)$ and $x\in M$, may depend only on the germ of
$\psi\in \Secs(E\to N)$ at $y\in N$, with some fixed relationship $y =
\chi(x)$, where $\chi\colon M \to N$ is some diffeomorphism. We then say
that $D$ is \emph{$\chi$-local}.

\emph{Case  $N \neq M$ and $D$ depends on external parameters.} We can
introduce a bundle $\pi\colon P \to M$, where the manifold $P$ is
interpreted as ``$M$ with parameters.'' Then, allowing $D$ to depend on
parameters means that $D$ really maps sections of $E\to N$ to sections
of the pullback bundle $\pi^* F \to P$, interpreted as ``$F$ with
parameters.'' Let us briefly recall that, given a bundle $F\to M$ and a
map $\pi\colon P\to M$, the pullback bundle $\pi^*F \to P$ is uniquely
defined by the existence of a bundle morphism $\tilde{\pi}\colon \pi^*F
\to F$ that is a fiber-wise isomorphism and that makes the following
diagram commute
\begin{equation}\label{eq:pbb-def}
\begin{tikzcd}
	\pi^* F \ar{d} \ar{r}{\tilde{\pi}} & F \ar{d}{\pi} \\
	P \ar{r}{\pi} & M
\end{tikzcd} ~~ .
\end{equation}
Pre-composing a section of $\pi^*F \to P$ with a section of $P\to M$
then yields a section of $F\to M$ given by a particular choice of
parameters. Denoting $\eta = \chi\circ \pi$, we call the map $D\colon
\Secs(E\to N) \to \Secs(\pi^*F \to P)$ \emph{$\eta$-local} when
$\phi(x,p) = D[\psi](x,p)$, with $(x,p)\in P$ and $\pi(x,p) = x \in M$,
depends only on the germ of $\psi$ at $y = \eta(x,p) = \chi(x) \in N$.
Note that the total space of the bundle ``$F$ with parameters'' can be
expressed as the fibered product $\pi^* F \cong F {}_\rho{\times}_\pi P$
over $M$ (where we have explicitly named the $\rho\colon F\to M$ bundle
projection), which completes the pullback diagram
\begin{equation}\label{eq:fib-prod}
\begin{tikzcd}
	F {}_\rho{\times}_\pi P \ar{d} \ar{r} & P \ar{d}{\pi} \\
	F \ar{r}{\rho} & M
\end{tikzcd} ~~ .
\end{equation}
We can illustrate all of the above maps in the diagram
\begin{equation}\label{eq:strict-params}
\begin{tikzcd}
	E \ar{d} &
		F {}_\rho{\times}_\pi P \ar{d} \ar{r} &
		F \ar{d}[swap]{\rho} \\
	N \ar[bend left,dotted]{u}[pos=.45]{\psi} &
		P \ar{r}[swap]{\pi}
			\ar{l}{\eta}
			\ar[bend left,dotted]{u}{\phi=D[\psi]} &
		M \ar[bend left,swap]{ll}{\chi}
			\ar[bend right,dotted,swap]{l}{\tau}
			\ar[bend right,dotted,swap]{u}[pos=.45]{\phi\circ\tau}
\end{tikzcd} ~~ ,
\end{equation}
where all the solid arrows commute, the dotted arrows denote bundle
sections, with $\tau\colon M \to P$ denoting a particular ``choice of
parameters,'' and $\phi\circ \tau$ was silently composed with the
projection $F {}_\rho{\times}_\pi P \to F$.

\emph{General case.}
Finally, it is possible to relax the requirement that the map
$\eta\colon P \to N$ factors as illustrated in
diagram~\eqref{eq:strict-params}. The dimension of $P$ could exceed that
of $N$ and $\eta$ need not be a surjection, not even a submersion.
Omitting the structure of the right square of
diagram~\eqref{eq:strict-params}, we also replace $F {}_\rho{\times}_\pi
P$ by a simple bundle $F\to P$, without requiring it to have the
structure of a fibered product. So, given bundles $E\to N$ and $F\to P$,
together with a smooth map $\eta \colon P\to N$, a map $D\colon
\Secs(E\to N) \to \Secs(F\to P)$ is called \emph{$\eta$-local} if
$\phi(x) = D[\psi](x)$, $x\in P$, depends only on the germ of $\psi$ at
$y = \eta(x) \in N$. We can illustrate this situation by the diagram
\begin{equation}\label{eq:params}
\begin{tikzcd}
	E \ar{d} &
		{~~~ F ~~~} \ar{d} \\
	N \ar[bend left,dotted]{u}[pos=.45]{\psi} &
		P \ar{l}{\eta}
			\ar[bend left,dotted]{u}[pos=.45]{\phi=D[\psi]}
\end{tikzcd} ,
\end{equation}
which should be thought of as exactly analogous to
diagram~\eqref{eq:strict-params}, but with the right square missing.
This the rather weak notion of $\eta$-locality, with a small additional
hypothesis ($\eta$ non-locally constant), together with the condition of
weak regularity (Definition~\ref{def:reg}) is actually sufficient for
the more general version of Peetre--Slov\'ak's theorem.

\begin{prop}[Peetre--Slov\'ak's Theorem~{\cite[\textsection~19.10]{kms}}]\label{prp:peetre-param}
Let $F\to P$, $E\to N$ be smooth bundles and $\eta\colon P\to N$ a
non-locally constant\/%
	\footnote{By \emph{non-locally constant} we mean that for every open
	$U\sse P$ the image $\eta(U)$ contains at least two points.} %
smooth map, with the interpretation as in diagram~\eqref{eq:params},
and $D\colon \Secs(E\to N) \to \Secs(F\to P)$ be an $\eta$-local and
weakly regular map. Then,
for every compact $K\subseteq P$ and $\psi\in \Secs(E\to N)$, there
exists an integer $r$, an open neighborhood $U\sse J^r(E\to N)$ of
$j^r\psi(N)\sso U$, with $U_K\sse U$ the subset projecting onto
$\eta(K)$, and a function $d\colon U_K \to F$ that commutes with all the
projections, as illustrated by the diagram
\begin{equation}
\begin{tikzcd}
	\llap{$J^r(E\to N) \supseteq {}$} U_K \ar{d} \ar{r}{d}
		& F\rlap{$|_K$} \ar{d} \\
	\llap{$N \supseteq {}$} \eta(K)
		& \ar{l}{\eta} K \rlap{${} \sse P$}
\end{tikzcd} \rlap{\quad\qquad ,}
\end{equation}
such that $D[\xi](x) = d\circ j^r\xi(x)$ for any $\xi\in \Secs(E\to N)$
with $j^r\xi(N) \sso U$. In other words, $D$ is a differential operator
of locally finite order, where locality is with respect to compact
subsets of $P$ and compact open neighborhoods in $\Secs(E\to N)$.
\end{prop}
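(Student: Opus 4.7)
The plan is to adapt the proof of the unparametrized Peetre--Slov\'ak theorem, as presented in~\cite{ns}, to the parametrized setting by exploiting the non-local constancy of $\eta$ as a substitute for the identity $\eta = \id_M$ used in the basic case.

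I would first localize the problem: it suffices to show that for every $x_0 \in P$ and every $\psi \in \Secs(E\to N)$, there exist an open neighborhood $U_0 \ni x_0$, an integer $r$, an open neighborhood $V^r \sse J^r(E\to N)$ of $j^r\psi(\eta(U_0))$, and a smooth map $d\colon V^r \to F$ fibered over $\eta|_{U_0}$, such that $D[\xi](x) = d(j^r\xi(\eta(x)))$ for all $x \in U_0$ and all $\xi$ with $j^r\xi(\eta(U_0)) \sse V^r$. The statement for compact $K$ then follows by covering $K$ with finitely many such neighborhoods and taking the maximum of the local orders.

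The crux is the local boundedness of the order, which I would establish by contradiction, mimicking~\cite[Lem.~2.4]{ns}. Suppose no such $r$ exists near $x_0$. Then there exist a sequence $x_k \to x_0$ in $P$ and sections $\xi_k \in \Secs(E\to N)$ such that $j^k\xi_k$ agrees with $j^k\psi$ at $\eta(x_k)$, yet $D[\xi_k](x_k) \ne D[\psi](x_k)$. By continuity, $y_k := \eta(x_k) \to \eta(x_0)$; by non-local constancy of $\eta$, we may arrange that infinitely many $y_k$ differ from $\eta(x_0)$, and after passing to a subsequence the $y_k$ can be taken either pairwise distinct with limit $\eta(x_0)$, or all equal to $\eta(x_0)$. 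In the distinct case, $\{y_k\}_{k=1}^{\infty} \cup \{\eta(x_0)\}$ is a closed subset of $N$, so the strengthened Whitney extension theorem (discussed at the beginning of this appendix) produces a smooth compactly supported variation $\Xi\colon \R \times N \to E$, for some sequence of parameters $s_k \to 0$, with $\Xi(0,\cdot) = \psi$ and $\Xi(s_k,\cdot)$ agreeing with $\xi_k$ to all orders at $y_k$. Weak regularity forces $D[\Xi]$ to be a smooth compactly supported variation on $\R\times P$, hence jointly continuous; however, by construction $D[\Xi](s_k, x_k) \ne D[\Xi](0, x_0)$ for every $k$, a contradiction. The degenerate case in which all $y_k$ equal $\eta(x_0)$ collapses directly to the unparametrized argument of~\cite[Lem.~2.4]{ns} applied to germs at $\eta(x_0)$.

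Once the order $r$ is locally bounded, the factorization $D[\xi](x) = d(j^r\xi(\eta(x)))$ uniquely defines $d$ on the required neighborhood of $j^r\psi(\eta(U_0))$, and its smoothness follows by the prolongation-and-variation argument of~\cite[Thm.~3.1]{ns}, adapted as in the remarks at the beginning of this appendix to require only weak regularity. The main obstacle is clearly the construction of the variation $\Xi$, which hinges essentially on the non-local constancy of $\eta$: without it, $\eta$ could collapse an entire neighborhood of $x_0$ onto the single point $\eta(x_0)$, leaving no way to distinguish the germs of distinct sections of $E\to N$ through the action of $D$ near $x_0$.
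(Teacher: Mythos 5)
The paper's own ``proof'' of this proposition is only a citation-plus-sketch: it defers entirely to~\cite[\textsection~19.10]{kms} (via Theorem~19.7 and Corollary~19.8 there) and adds the remark, justified earlier in the appendix, that weak regularity can replace regularity. Your proposal tries to supply the actual argument by adapting~\cite{ns}, which is the right strategy in outline, but its central step fails. Having $D[\Xi](s_k,x_k) \ne D[\Xi](0,x_k)$ (or $\ne D[\Xi](0,x_0)$) for every $k$ is \emph{not} a contradiction with the joint smoothness, or even the joint continuity, of $D[\Xi]$: both values may simply converge to the common limit $D[\psi](x_0)$. The entire difficulty in Peetre-type theorems is to pass to a sufficiently sparse subsequence and to prescribe the Whitney jet data so that the discrepancy stays bounded away from zero, or so that some derivative of $D[\Xi]$ at $(0,x_0)$ would be forced to diverge; this quantitative step occupies most of the proof of~\cite[Lem.~2.4]{ns} and of~\cite[19.7]{kms} and is entirely absent from your sketch. (Relatedly, the negation of locally finite order produces \emph{pairs} $\xi_k,\xi'_k$ with equal $k$-jets at $\eta(x_k)$ and $D[\xi_k](x_k)\ne D[\xi'_k](x_k)$; neither member of the pair can in general be taken to be the fixed reference section $\psi$, so the reduction to a comparison with $\psi$ is already lossy.)

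The case analysis on $y_k=\eta(x_k)$ is also not sound. Non-local constancy only says that $\eta$ is non-constant on every open set; it gives no control over the particular sequence $x_k$ handed to you by the failure of finite order, so you cannot ``arrange'' that infinitely many $y_k$ differ from $\eta(x_0)$ --- for $\eta(x^1,x^2)=x^1$ on $P=\R^2$, $N=\R$, the whole sequence may lie in a single fibre. And when all $y_k$ coincide with $\eta(x_0)$ the problem does \emph{not} collapse to the unparametrized~\cite[Lem.~2.4]{ns}: that argument relies essentially on the jet points being pairwise distinct, which is what lets Whitney's theorem prescribe different finite jets at different points; with a single jet point you would be prescribing infinitely many mutually inconsistent jets at one point of $N$. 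This degenerate configuration is precisely where the parametrized theorem requires new input and where the non-local constancy hypothesis must do real work (it is what rules out $\eta$-local, weakly regular operators of genuinely infinite order supported on a single fibre of $\eta$); your proposal never uses that hypothesis in a load-bearing way, since its only invocation --- to reshuffle the $y_k$ --- is not a valid use of it. As it stands, the proposal reproduces the easy scaffolding of the argument in~\cite{ns} and~\cite{kms} while eliding exactly the two points where the parametrized statement is harder than the unparametrized one.
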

\begin{proof}[Sketch of proof]
With the definitions as discussed above, the proposition is essentially
a restatement of Theorem~19.10 of~\cite{kms}, which follows directly
from Theorem~19.7 and Corollary~19.8 that precede it. We refer the
reader to the book~\cite{kms} for full details. Let us simply mention
that, in general outline, the proof proceeds by contradiction. If $D$
depended non-trivially on an infinite number of derivatives of its
argument, then it would be possible to engineer a smooth section $\psi$
such that $D[\psi]$ could not itself be smooth. While the proofs
in~\cite{kms} rely on regularity instead of weak regularity, the weaker
hypothesis is actually sufficient, as we have discussed earlier.
\end{proof}

\bibliographystyle{utphys-alpha}
\bibliography{paper-peetre}

\end{document}